\theoremstyle{plain}
\newtheorem{theorem}{Theorem}[section]
\newtheorem{proposition}[theorem]{Proposition}
\newtheorem{lemma}[theorem]{Lemma}
\newtheorem{corollary}[theorem]{Corollary}
\theoremstyle{definition}
\theoremstyle{remark}
\DeclareAcronym{ML}{short=ML,long=machine learning,short-indefinite=an}
\DeclareAcronym{SoC}{short=SoC,long=system-on-chip,long-plural-form=systems-on-chip,short-indefinite=an}
\theoremstyle{plain}
\newtheorem{observation}[theorem]{Observation}
\newcommand{\littlec}{\mbox{\emph{LITTLE}}\xspace}
\newcommand{\bigc}{\emph{big}\xspace}
\newcommand{\hikey}{\emph{HiKey~970}\xspace}
\newcommand{\splash}{\mbox{\emph{SPLASH-3}}\xspace}
\newcommand{\parsec}{\mbox{\emph{PARSEC-3.0}}\xspace}
\newcommand{\polybench}{\emph{Polybench}\xspace}
\DeclarePairedDelimiter\floor{\lfloor}{\rfloor}
\DeclarePairedDelimiter{\abs}{\lvert}{\rvert}
\DeclareMathOperator*{\argmax}{arg\,max}
\DeclareMathOperator*{\argmin}{arg\,min}
\newcommand{\bigO}{\mathcal{O}}
\newcommand{\ind}{\ensuremath{\mathds{1}}}
\newcommand{\alg}{{\normalfont\textsc{Alg}}}
\newcommand{\opt}{{\normalfont\textsc{Opt}}}
\newcommand{\hy}{\bm\hat{y}}
\newcommand{\hr}{\bm\hat{r}}
\newcommand{\hs}{\bm\hat{s}}
\newcommand{\hQ}{\bm\hat{Q}}
\newcommand{\dense}{\delta}
\newcommand{\hdense}{\bm\hat{\delta}}
\newcommand{\dualSa}{\bm\bar{a}}
\newcommand{\dualSb}{\bm\bar{b}}
\newcommand{\dualSc}{\bm\bar{c}}
\newcommand{\dualVa}{a}
\newcommand{\dualVb}{b}
\newcommand{\dualVc}{c}
\title{Speed-Oblivious Online Scheduling: \\ Knowing (Precise) Speeds is not Necessary}
\author{Alexander Lindermayr~\thanks{University of Bremen, Faculty of Mathematics and Computer Science, Germany. \{linderal,nmegow\}@uni-bremen.de}\and Nicole Megow~\footnotemark[1] \and Martin Rapp~\stepcounter{footnote}\stepcounter{footnote}\stepcounter{footnote}\stepcounter{footnote}\stepcounter{footnote}\thanks{Faculty for Informatics, Karlsruhe Institute of Technology, Germany. martin.rapp@kit.edu}}
\begin{document}

\maketitle

\thispagestyle{empty}

\begin{abstract}
We consider online scheduling on unrelated (heterogeneous) machines in a \emph{speed-oblivious} setting, where an algorithm is unaware of the exact job-dependent processing speeds. 
We show strong impossibility results for clairvoyant and non-clairvoyant algorithms and overcome them in models inspired by practical settings: 
(i)~we provide competitive \emph{learning-augmented} algorithms, assuming that (possibly erroneous) predictions on the speeds are given, and 
(ii)~we provide competitive algorithms for the \emph{speed-ordered} model, where a single global order of machines according to their unknown job-dependent speeds is known. 
We prove strong theoretical guarantees and evaluate our findings on a representative heterogeneous multi-core processor. 
These seem to be the first empirical results for scheduling algorithms with predictions that are evaluated 
in a non-synthetic hardware environment.
\end{abstract}

\section{Introduction}
Heterogeneous processors are getting more and more common in various domains. 
For several years now, efficiency and performance gains in smartphone chips have depended crucially on the combination of
high-performance and low-performance (but energy-efficient) cores \cite{arm_big_little_whitepaper}.
Heterogeneity has recently been introduced also to the desktop market with Intel Alder Lake (Q1'2022)~\cite{alderlake} and AMD Zen 5 (announced for 2023).
Further,
jobs differ in their instruction mix and memory access patterns, and hence may not benefit uniformly from the high-performance cores, which typically feature larger caches, out-of-order execution, and a higher CPU frequency.
\Cref{fig:characterization} shows job-dependent speed varieties in common benchmark suites (\parsec, \splash, \polybench) running on \bigc and \littlec cores of a Kirin~970 smartphone \ac{SoC} with Arm big.LITTLE architecture. 

\begin{figure}
    \centering
    \includegraphics{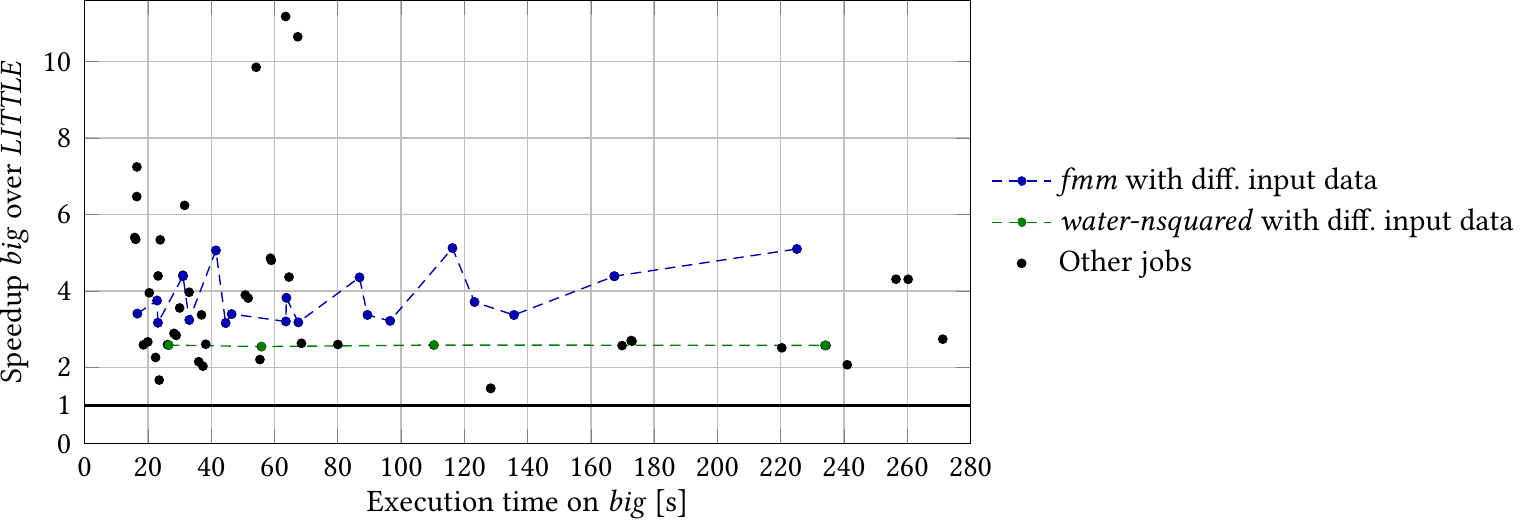}
    \caption{
        The execution time and speedup of the \bigc over \littlec cores on an Arm big.LITTLE heterogeneous processor varies strongly between jobs and different input data.
        Variations of the speedup w.r.t.~input data are large for some jobs (e.g., \emph{water-nsquared}) but small for others (e.g., \emph{fmm}).
        }
    \label{fig:characterization}
\end{figure}

These advances show the demand for schedulers that respect {\em job-dependent heterogeneity}. Formally, the \emph{(processing) speed} $s_{ij}$ of job $j$ on machine~$i$ is the amount of processing that $j$ receives when running on~$i$ for one time unit.     
Despite the relevance of values $s_{ij}$ for high-performance scheduling, there is a big discrepancy between how theory and practice handle them: while scheduling theory most commonly assumes that speeds are known to an algorithm, this is typically not the case in practice. Hence, algorithms that perform well in theory are often not applicable in practice.

In this
work, we propose new models and algorithms to bridge this gap. In particular, we introduce \emph{speed-oblivious} algorithms, which do not rely on knowing (precise) speeds. Thereby we focus on (non-)clairvoyant scheduling subject to minimizing the total (weighted) completion time.

Formally, an instance of this scheduling problem is composed of a set~$J$ of~$n$ jobs, a set~$I$ of ~$m$ machines, and a time-discretization. The characteristics of a job~$j \in J$ are its processing requirement~$p_j$, its weight~$w_j$, and for every machine~$i \in I$ its individual processing speed~$s_{ij} > 0$. A job $j$ arrives online at its release date $r_j$, i.e., an algorithm is unaware of its existence before that time.
A schedule assigns for every unfinished job~$j \in J$ and for every machine~$i \in I$ at any time~$t \geq r_j$ a \emph{(machine) rate}~$y_{ijt} \in [0,1]$, which induces the progress~$q_{jt} = \sum_i s_{ij} y_{ijt}$ of~$j$ at time~$t$. 
The completion time~$C_j$ of a job~$j$ in a fixed schedule is the first point in time~$t$ that satisfies~$\sum_{t'=r_j}^{t} q_{jt'} \geq p_j$.
A schedule is feasible if there exists a progress-preserving actual schedule, where at any infinitesimal time a job is being processed on at most one machine. This applies if the rates satisfy~$\sum_{i \in I} y_{ijt} \leq 1$ for all~$j \in J$ and~$\sum_{j \in J} y_{ijt} \leq 1$ for all~$i \in I$ at any time~$t$~\cite{ImKM18}. 
The task is to compute a feasible schedule that minimizes~$\sum_{j \in J} w_j C_j$.

An algorithm is called {\em non-migratory}, if it assigns for each job~$j$ positive rates only on a single machine~$i_j$, and {\em migratory} otherwise. Further, it is called \emph{non-preemptive} if for all jobs~$j$, machines~$i$, and times~$t$, a rate~$y_{ijt} > 0$ implies~$y_{ijt'} = 1$ for all times~$t'$ with~$t \leq t' \leq C_j$. We say that the machines are \emph{related} if $s_{i} = s_{ij}$ for all jobs $j$ and machines $i$, i.e., speeds are not job-dependent.

\paragraph{Models and state-of-the-art in theory}
Scheduling jobs (offline) on machines with job-dependent heterogeneity (called \emph{unrelated} machine scheduling) to minimize the total weighted completion time is a prominent NP-hard problem; several approximation algorithms are known, e.g.,~\cite{HallSSW97,SchulzS02,Li20,BansalSS21,ImL23}.  Well-studied {\em online} models include \emph{online} job  arrival~\cite{PruhsST04}, i.e., a job is unknown to an algorithm until its  release date~$r_j$, and \emph{non-clairvoyance}~\cite{MotwaniPT94}, i.e., an algorithm has no knowledge about the total processing requirement~$p_j$ of a job (as opposed to \emph{clairvoyant} schedulers).
In particular, online algorithms cannot revert previous decisions.
The performance of an online algorithm is typically evaluated by its \emph{competitive ratio} \cite{Borodin98}, i.e., the worst-case ratio between the algorithm's objective value and the optimal objective value (given full information upfront) for every instance. We say that an algorithm is~$\rho$-competitive if its competitive ratio is at most~$\rho$.
Known online results include~\cite{HallSSW97,ChadhaGKM09,AnandGK12,ImKMP14,ImKM18,GuptaMUX20,DBLP:phd/dnb/Jager21,BienkowskiKL21,LindermayrM22}. 

To the best of our knowledge, unrelated machine scheduling has been studied only in a \emph{speed-aware} setting, where an algorithm knows the speeds~$s_{ij}$ for available jobs. It is not difficult to see that there are prohibitive lower bounds for speed-oblivious scheduling on (un-)related machines: consider an instance with a single unit-sized job~$j$ which makes substantial progress only on one machine. This means that in the worst-case, the first $m-1$ machines tried by the algorithm have speed $\epsilon$ and $j$ makes no substantial progress. Thus, the algorithm spends at least $m$ time units to complete it. Knowing this fast machine upfront allows an optimal solution to complete the job immediately.
This implies a competitive ratio of at least~$\Omega(m)$ for~$m$ machines:

\begin{observation}
Any speed-oblivious algorithm has a competitive ratio of at least $\Omega(m)$ for minimizing the total (weighted) completion time on $m$ related machines, even if the algorithm is clairvoyant.
\end{observation}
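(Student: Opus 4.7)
The plan is to construct a family of $m$ hard instances and show that no speed-oblivious algorithm can do well on all of them simultaneously. Specifically, for $\epsilon = 1/m$, let instance $I_k$ consist of a single unit-size, unit-weight job released at time $0$ and $m$ related machines, where machine $k$ has speed $1$ and every other machine has speed $\epsilon$. On each $I_k$, scheduling the job on machine $k$ gives $\opt(I_k) = 1$.

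The crux is that, before the lone completion event, a speed-oblivious algorithm observes no data that distinguishes the instances $\{I_k\}_{k\in[m]}$: it sees one job on $m$ machines and, by the modeling convention of speed-obliviousness, does not have access to the instantaneous progress $q_{jt}$ (otherwise it would effectively learn the speeds). Hence there is a single rate profile $y_{i,t}$ with $\sum_i y_{i,t} \le 1$ that governs the algorithm on every $I_k$ up to time $C(k)$. Writing $Y_i(t) = \int_0^t y_{i,s}\,ds$, feasibility yields $\sum_i Y_i(t) \le t$, while the completion condition on $I_k$ reads
\[
(1-\epsilon)\,Y_k(C(k)) + \epsilon\sum_i Y_i(C(k)) \;\ge\; 1,
\]
and thus $Y_k(C(k)) \ge (1 - \epsilon\, C(k))/(1-\epsilon)$.

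Next, I would aggregate across instances. Let $T = \max_k C(k)$. Monotonicity of $Y_k$ gives $Y_k(T) \ge (1-\epsilon T)/(1-\epsilon)$ for every $k$, so summing and using $\sum_k Y_k(T) \le T$ yields $T \ge m/(1 + (m-1)\epsilon) = \Omega(m)$ for the choice $\epsilon = 1/m$. The adversary then selects the index $k^{*}$ that realizes this maximum, so the algorithm pays at least $T = \Omega(m)$ on $I_{k^{*}}$ while $\opt(I_{k^{*}}) = 1$, establishing the claimed competitive ratio.

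The main subtlety is justifying that the rate profile $y_{i,t}$ is common to all $m$ instances, which is precisely where the speed-oblivious model does the work. If the algorithm could additionally read $q_{jt}$, a brief probing phase would identify the fast machine and defeat the lower bound; one therefore has to appeal to the convention that in the speed-oblivious setting only release and completion events are exposed to the scheduler. A randomized algorithm is handled by the same construction via Yao's principle, replacing the max-over-$k$ step by an averaging argument that shows $\sum_k C(k) = \Omega(m^2)$, hence $\mathbb{E}_k[C(k)] = \Omega(m)$ against the uniform distribution over $\{I_k\}$.
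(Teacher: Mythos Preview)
Your proof is correct and uses the same core construction as the paper: a single unit job, one fast machine of speed~$1$, and $m-1$ slow machines of speed~$\epsilon$. The difference is in how the adversary is set up. The paper argues with an \emph{adaptive} adversary that watches which machines the algorithm assigns rate to and declares those slow, so that the fast machine is always the last one tried; this yields the $\Omega(m)$ bound in one line but implicitly treats the schedule as trying machines sequentially. You instead fix the $m$ instances $I_1,\ldots,I_m$ up front and exploit that, in the speed-oblivious model, the rate profile $y_{i,t}$ is identical across all of them until the (sole) completion event; then a clean averaging/packing argument on the $Y_k$'s forces $\max_k C(k)=\Omega(m)$.

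Your route is a bit longer but buys two things: it handles fractional schedules explicitly (no implicit assumption that the algorithm ``tries'' machines one at a time), and it extends cleanly to randomized algorithms via Yao's principle, which the paper's adaptive argument does not address. The one point you should be careful to state crisply, and you do, is that speed-obliviousness here means the algorithm cannot read $q_{jt}$; the paper confirms this convention when it writes that an algorithm ``will never learn about a job's true speed(s) before its completion.'' Without that, a probing phase would indeed defeat the bound.
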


\paragraph{Models and state-of-the-art in practice}
Practical scheduling algorithms commonly operate in open systems \cite{open_system}, where jobs arrive online, are non-clairvoyant, and, in contrast to the assumption in theory, their exact processing speeds on every core are \emph{unknown} upfront.
Therefore, state-of-the-practice schedulers usually ignore heterogeneity between jobs, e.g., Linux Energy-Aware Scheduling \cite{linux_eas}.
State-of-the-art schedulers rely on prior knowledge about jobs \cite{khdr2015thermal}, which is not always available, or rely on predictions of job characteristics to leverage this information gap.
Such predictions could be based on prior executions of repeating jobs or on machine-learning-based techniques \cite{gupta2018staff,snuca_tc_prediction}. 
They are often quite precise, but can be highly inaccurate 
due to varying and unpredictable input data as shown in \Cref{fig:characterization}.
To the best of our knowledge, all these approaches are evaluated only empirically. In particular, there are no theoretical guarantees on the performance in worst-case scenarios or with respect to a prediction's quality.

\subsection{Our Results}
We initiate the theoretical study of speed-oblivious algorithms. Since strong lower bounds rule out good worst-case guarantees for speed-oblivious unrelated machine scheduling without further assumptions, we propose two (new) models which are motivated by data-driven machine-learned models and modern heterogeneous hardware architectures: 

\begin{itemize}
  \item \textbf{Speed predictions}
  give algorithms access to values~$\hs_{ij}$ for every machine~$i$ at the release date of every job~$j$.
  We measure the accuracy of such a prediction by the \emph{distortion error}~$\mu$, where~$\mu = \mu_1 \cdot \mu_2$ and
  \[
    \mu_1 = \max_{i \in I,j \in J} \left\{ \frac{\hs_{ij}}{s_{ij}} \right\} \text{ and } \mu_2 = \max_{i \in I,j \in J} \left\{ \frac{s_{ij}}{\hs_{ij}}\right\}.
    \]

  \item \textbf{Speed-ordered machines}
assume an order on~$I$ such that for all~$i,i' \in I$ and jobs~$j \in J$  holds~$s_{ij} \geq s_{i'j}$ if and only if~$i \leq i'$. Algorithms are aware of this order.
\end{itemize}

Finally, we compare algorithms for these models with heuristics in experiments on an actual modern heterogeneous chip.
These are the first empirical results which show the benefit of learning-augmented algorithms and validate theoretical findings on \emph{real} hardware. In particular, we initiate the investigation in practical applicability of theoretical scheduling algorithms for actual realistic hardware environments.

We now give a more detailed overview of our results.

\paragraph{Learning-augmented algorithms for speed predictions}

We provide the first learning-augmented algorithms with job-dependent speed predictions and prove error-dependent performance guarantees w.r.t.\ the distortion error $\mu$. This gives formal evidence on why algorithms perform well in practice, even if the assumed speeds slightly diverge from the true speeds. We further show that a competitive ratio linear in $\mu$ is best possible, even for migratory algorithms and related machines.
We emphasize that the algorithms do not have access to~$\mu$ upfront for the given instance.

\begin{theorem}
  For minimizing the total weighted completion time on unrelated machines, there exist speed-oblivious online algorithms with speed predictions that are
\begin{compactenum}[(i)]
    \item clairvoyant and~$8\mu$-competitive,
    \item clairvoyant, non-preemptive and~$7.216\mu^2$-competitive, 
    \item non-clairvoyant and~$108\mu$-competitive.
  \end{compactenum}
\end{theorem}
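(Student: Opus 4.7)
My plan is a black-box reduction: for each part, take a known online algorithm $\mathcal{A}$ that is $\rho$-competitive in the corresponding \emph{speed-aware} model ($\rho=8$ for clairvoyant preemptive, $\rho\leq 7.216$ for clairvoyant non-preemptive, and $\rho=108$ for non-clairvoyant, each drawn from the existing literature referenced earlier), run $\mathcal{A}$ as if the predictions $\hat s_{ij}$ were the real speeds, and translate the guarantee back to the true instance. Writing $\widehat{\opt}$ for the optimum of the virtual instance and $\widehat{\alg}$ for the virtual cost of $\mathcal{A}$'s output, the $\rho$-competitiveness of $\mathcal{A}$ in the virtual world gives $\widehat{\alg}\leq \rho\cdot\widehat{\opt}$ for free.

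The heart of the argument is a pair of time-stretching lemmas. First, given any real-world schedule and executing the same rates against the predicted speeds, virtual progress accumulates at most a factor $\mu_2$ slower than real progress; a uniform time dilation by $\mu_2$ keeps all rates in $[0,1]$ and preserves both the $\sum_i y_{ijt}\leq 1$ and $\sum_j y_{ijt}\leq 1$ constraints, so applying this to $\opt$ yields $\widehat{\opt}\leq \mu_2\cdot\opt$. Symmetrically, dilating the output of $\mathcal{A}$ by $\mu_1$ produces a feasible real schedule with completion times at most $\mu_1\widehat{C}_j^{\alg}$. Chaining these bounds gives $\alg\leq \mu_1\cdot\rho\cdot\widehat{\opt}\leq \rho\cdot\mu_1\mu_2\cdot\opt = \rho\mu\cdot\opt$, which delivers $(i)$ with $\rho=8$ and $(iii)$ with $\rho=108$. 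A small but important point is that the stretching is purely an analytical device: the online algorithm itself simply feeds $\hat s_{ij}$ into $\mathcal{A}$ and never needs to know $\mu$, so the non-clairvoyant case $(iii)$ works as long as $\mathcal{A}$'s rates depend only on observable state (job weights, release dates, residual virtual processing).

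Part $(ii)$ is the main obstacle: a non-preemptive single-machine schedule cannot be time-dilated without breaking contiguity of processing, so the preemptive argument does not transfer directly. Here I would argue at the level of the permanent assignment $\sigma\colon J\to I$ and per-machine order produced by $\mathcal{A}$ on the virtual instance; executing $\sigma$ in the real world inflates every processing time by at most $\mu_1$, and comparing to the real non-preemptive optimum through $\widehat{\opt}$ loses another $\mu_2$. To reach the claimed $\rho\mu^2$ bound rather than only $\rho\mu$, one must additionally account for the fact that \emph{priority decisions}---e.g.\ WSPT-style orderings based on $p_j/\hat s_{ij}$ rather than $p_j/s_{ij}$---can be wrong by a factor of $\mu$, so pairs of jobs on the same machine may be sequenced in the wrong order. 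The hard part will be bookkeeping these two sources of error---distortion of the schedule's ordering and distortion of its execution---separately through the dual-fitting or LP-rounding analysis of the underlying $7.216$-competitive non-preemptive algorithm, so that each contributes at most one factor of $\mu$ and together they match the claimed $7.216\mu^2$.
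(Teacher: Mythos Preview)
Your black-box reduction conflates two different executions. Execution~(a) is $\mathcal{A}$ run on the purely virtual instance $(\hat s_{ij},p_j,r_j,w_j)$, where job completions are governed by \emph{virtual} progress; the bound $\widehat{\alg}\le\rho\,\widehat{\opt}$ applies only to this. Execution~(b) is the actual speed-oblivious algorithm: at each real time $t$ it applies $\mathcal{A}$'s decision rule with inputs $\hat s_{ij}$ and the \emph{real} alive set $J(t)$, and jobs complete when \emph{real} progress reaches $p_j$. These two trajectories diverge as soon as some job completes earlier or later in reality than virtually, after which the alive sets --- and hence all subsequent decisions of $\mathcal{A}$ --- differ. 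Your time-dilation lemma correctly shows that the $\mu_1$-dilated version of schedule~(a) is feasible in the real world with cost $\le\mu_1\widehat{\alg}$, but the speed-oblivious algorithm cannot produce that dilated schedule without knowing $\mu_1$; the assertion that ``stretching is purely an analytical device'' and that execution~(b) already realises this bound is exactly the step that is missing.

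For non-clairvoyant algorithms like PF whose rate at time $t$ depends only on $(J(t),\{w_j\},\{\hat s_{ij}\})$ --- and \emph{not} on $p_j$ or on accumulated progress --- a corrected black-box does go through: execution~(b) coincides with $\mathcal{A}$ on the a-posteriori instance $\hat I$ with speeds $\hat s_{ij}$ and processing requirements $\hat p_j:=\int_{r_j}^{C_j}\sum_i\hat s_{ij}y_{ij\tau}\,d\tau$, and one checks $\opt(\hat I)\le\mu\cdot\opt$. This salvages~(iii), though it is not the argument you wrote (and your stated condition allowing dependence on ``residual virtual processing'' is too permissive). For~(i), even this fix fails: Maximum Density's matching at time $t$ depends on the densities $w_j\hat s_{ij}/p_j$, which involve the true $p_j$, so execution~(b) does not coincide with $\mathcal{A}$ on any instance whose processing requirements are $\hat p_j\neq p_j$. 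The paper therefore does not argue black-box at all: for each of the three algorithms it redoes the dual-fitting analysis directly on the real instance with the algorithm using predicted speeds, and tracks precisely where factors of $\mu_1$ and $\mu_2$ enter the dual-objective and dual-feasibility lemmas. Incidentally, the speed-aware constants $8$ (Maximum Density) and $108$ (PF) are themselves established in the paper rather than taken from prior work --- the previous PF bound was $128$ --- so there is no off-the-shelf $\rho$ to plug in.
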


For~$(i)$, we design a novel and efficient clairvoyant algorithm, which might be of independent interest. It always schedules the subset of jobs that maximizes the total (predicted) density in a feasible job-to-machine assignment, where the density of a job~$j$ on machine~$i$ is equal to~$\frac{w_j s_{ij}}{p_j}$. We show that it is $8$-competitive in the speed-aware setting. Interestingly, this algorithm reduces to Smith's rule on a single machine~\cite{smith1956various} and preemptive variants \cite{SchulzSk2002,MegowS04}.

On the technical side, we prove upper bounds on the competitive ratios using the \emph{dual-fitting} technique~\cite{JainMMSV03,AnandGK12}. There, we lower bound the optimal solution by the dual of a linear programming (LP) relaxation, and then show that a specific feasible dual assignment has an objective value which is close to the algorithm's objective value. The main difficulty is therefore to come up with good dual assignment. For~$(i)$, we present a new dual setup, which we believe could be helpful for future dual-fitting approaches. The algorithms and proofs for~$(ii)$ and~$(iii)$ are are inspired by previous work (Greedy WSPT~\cite{GuptaMUX20}, Proportional Fairness~\cite{ImKM18}). However, for~$(iii)$ we achieve better constants via optimized duals, even for the speed-aware case.
In all proofs, we use scalable properties of duals to convert bad decisions due to imprecise predictions into scaled bounds on the competitive ratio.

\paragraph{Novel algorithms for speed-ordered machines}
The strong lower bound of $\Omega(m)$ on the competitive ratio for speed-oblivious algorithms for $m$ machines crucially relies on accelerating the machine that an algorithm tries last. This argument becomes infeasible in the speed-ordered setting, because the machines are distinguishable upfront. Designing an algorithm is yet still challenging, as precise factors between speeds remain unknown.
On the negative side, we show that any constant-competitive algorithm must migrate jobs. This is even true for clairvoyant algorithms and related machines. On the positive side, we present two algorithms: 

\begin{theorem}
There is a clairvoyant speed-oblivious online algorithm for minimizing the total weighted completion time on speed-ordered related machines with a competitive ratio of at most $8$.
\end{theorem}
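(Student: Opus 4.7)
The plan is to reduce the speed-oblivious problem to the clairvoyant learning-augmented algorithm from the preceding theorem (part $(i)$). Define the candidate algorithm, call it \textsc{OrderedDensity}, as follows: at every event time $t$, let $U_t$ be the set of released, unfinished jobs; list them in non-increasing order of Smith ratio $w_j/p_j$; let $k=\min(m,|U_t|)$ and assign the $\ell$-th job in this list to the $\ell$-th fastest machine, for $\ell=1,\dots,k$ (breaking ties by a fixed, instance-independent rule). The assignment is recomputed only when a job arrives or completes. Because \textsc{OrderedDensity} consults only the order of the machines and the clairvoyantly known values $w_j,p_j$, it is speed-oblivious.

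To bound its competitive ratio, consider running the $8\mu$-competitive clairvoyant algorithm from part $(i)$, call it \textsc{MaxDensity}, with predictions set equal to the true speeds, so that the distortion satisfies $\mu=1$. The key structural observation is that on \emph{related} machines the density of job $j$ on machine $i$ factors as $\frac{w_j s_{ij}}{p_j}=\frac{w_j}{p_j}\cdot s_i$. Hence a feasible job-to-machine matching that maximizes the total density is obtained by pairing the $\ell$-th largest Smith ratio with the $\ell$-th largest speed; this is exactly what \textsc{OrderedDensity} produces. In particular, the max-density assignment depends only on the relative speed order, not on the magnitudes.

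Because both algorithms choose identical assignments at every time $t$ and process jobs at the same (true) speeds, their schedules and completion times coincide. Applying the preceding theorem with $\mu=1$ gives a competitive ratio of $8$ for \textsc{MaxDensity}, and therefore for \textsc{OrderedDensity}, as claimed. The main conceptual step—and essentially the only nontrivial one—is recognizing that on related machines the output of the max-density rule is determined by the speed order alone; once this is established, the competitive analysis is inherited from part $(i)$ without any new dual-fitting argument. A minor technical point is ensuring a consistent tie-breaking rule so that \textsc{OrderedDensity} and the instantiation of \textsc{MaxDensity} select the same optimal matching whenever ties occur, which can be enforced by fixing a lexicographic order on jobs and machines.
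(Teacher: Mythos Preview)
Your argument is correct. The observation that, on related machines, a maximum-density matching is obtained by pairing the jobs sorted by Smith ratio with the machines sorted by speed is precisely the rearrangement inequality, and it implies that \textsc{OrderedDensity} is a legitimate instantiation of \textsc{MaxDensity} run with exact speeds ($\mu=1$). The $8$-competitive bound then follows immediately from the earlier result. (A small remark: you do not actually need the two algorithms to produce \emph{identical} schedules via matched tie-breaking; it suffices that \textsc{OrderedDensity}'s assignment is always \emph{some} maximum-weight matching, since the analysis of \textsc{MaxDensity} works for any such matching.)

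The paper makes exactly the same motivating observation, but then chooses to \emph{redo} the dual-fitting proof from scratch rather than invoke the speed-prediction theorem: it defines $\beta_{it},\gamma_{jt}$ using the true densities $w_j s_i/p_j$, proves the objective bound analogously, and gives a direct case analysis for dual feasibility that uses the speed order explicitly (e.g., if $(i,j')\in M_{t'}$ and $(i',j)\in M_{t'}$ with $i<i'$, then $\sigma_{t'}(j')<\sigma_{t'}(j)$ forces $w_{j'}/p_{j'}\ge w_j/p_j$). Your route is more economical and avoids repeating the dual machinery; the paper's route is self-contained within the speed-ordered section and makes transparent exactly where the order assumption replaces the matching-optimality exchange arguments. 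Both are valid proofs of the same statement.
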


We show that this algorithm is not competitive on unrelated machines. Somewhat surprisingly, our non-clairvoyant algorithm achieves non-trivial competitive ratios for both related and unrelated machines, as the following theorem states.

\begin{theorem}
There is a non-clairvoyant speed-oblivious online algorithm for minimizing the total completion time 
\begin{compactenum}[(i)]
\item on speed-ordered related machines with a competitive ratio of at most $216$, and
\item on speed-ordered unrelated machines with a competitive ratio of $\Theta(\log(\min\{n,m\}))$.
\end{compactenum}
\end{theorem}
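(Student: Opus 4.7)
The plan is to design migratory non-clairvoyant algorithms in the spirit of Proportional Fairness~\cite{ImKM18} and analyze them via dual fitting against the time-indexed LP relaxation of total completion time, together with a separate lower-bound construction for part~(ii). Throughout, I use dual variables $\alpha_j$ per job and $\beta_{it}$ per machine-time, and aim to set $\alpha_j$ proportional to $C_j^{\alg} - r_j$ and $\beta_{it}$ proportional to the algorithm's instantaneous load on machine~$i$, so that verifying the dual constraint $\alpha_j \, s_{ij}/p_j \le 1 + \beta_{it}$ and invoking weak duality directly gives the competitive ratio.

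For part~(i), the algorithm I propose uses only the speed order. At time $t$ with alive jobs $A(t)$, let $k(t) = \min(|A(t)|, m)$; activate only the fastest $k(t)$ machines, assigning jobs one-to-one when $|A(t)| \le m$ and otherwise equi-sharing the $m$ fastest machines among all alive jobs. In both regimes every alive job receives total rate $\sum_{i=1}^{k(t)} s_i / |A(t)|$. Because OPT cannot asymptotically beat running only the fastest $k(t)$ machines on the alive jobs at any time, and because this comparison requires only the order and not the actual values $s_i$, the dual constraint can be verified using only the speed-order assumption. The main obstacle here is achieving the stated constant $216$: this calls for a careful multi-factor scaling in $\alpha_j$ and $\beta_{it}$, along the lines of the $108\mu$ analysis appearing earlier in the paper for the non-clairvoyant speed-prediction algorithm on unrelated machines.

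For the upper bound in part~(ii), I replace equi-sharing by a harmonic sharing rule: each alive job is allocated rate proportional to $1/i$ on each of the $k(t)$ fastest machines, rescaled to respect feasibility. The critical property is that, no matter which machine $i^{\star}_j$ is truly best for job~$j$, the rate dedicated there is within an $O(H_{k(t)}) = O(\log(\min\{n,m\}))$ factor of what a speed-aware scheme would assign. Re-running the dual-fitting analysis from part~(i) with this extra harmonic factor on $\alpha_j$ and $\beta_{it}$ then yields the $O(\log(\min\{n,m\}))$ upper bound.

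For the matching $\Omega(\log(\min\{n,m\}))$ lower bound, I would take $n=m$ unit-size jobs released at time~$0$ with adversarial speed profiles $s_{ij} = 1$ for $i \le \tau_j$ and $s_{ij} = \varepsilon$ otherwise, where the thresholds $\tau_j$ form a uniformly random permutation (so the speed-ordered property holds trivially). OPT completes each job in constant time on its dedicated machine, while any non-clairvoyant speed-oblivious algorithm, unable to distinguish which machine is relevant for which job, must by an averaging argument over $\tau$ distribute rate geometrically across $\Omega(\log m)$ machines per job and thus incur total completion time $\Omega(m \log m)$. The main obstacle will be making this averaging argument robust against an adaptive scheduler that could learn from earlier completions; I expect this is handled by a multi-phase construction in which each completion reveals no useful information about the remaining thresholds, together with a standard Yao-style reduction to deterministic algorithms.
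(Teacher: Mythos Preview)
Your high-level plan (Round-Robin-style rates on the fastest $\min\{|J(t)|,m\}$ machines, dual fitting against the time-indexed LP) is the right one for part~(i) and matches the paper. However, several concrete aspects of your proposal diverge from what the theorem actually requires and from how the paper proceeds.

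\textbf{One algorithm, not two.} The theorem asserts the existence of a \emph{single} algorithm satisfying both (i) and (ii). The paper uses exactly your part-(i) algorithm---each alive job gets rate $|J(t)|^{-1}$ on each of the fastest $\min\{m,|J(t)|\}$ machines---for \emph{both} parts. The $\log(\min\{n,m\})$ factor in (ii) arises purely in the \emph{analysis}, not from a different algorithm: for unrelated machines the dual machine-variable is set to $\beta_{it}=\frac{1}{i}\,|J(t)|\cdot\ind[i\le |J(t)|]$, and the harmonic sum $\sum_i \beta_{it}\le H_{m_t}\,|J(t)|$ is where the logarithm enters. Your separate ``harmonic sharing'' algorithm is unnecessary (and as written underutilizes all machines $i\ge 2$, since its total rate on machine~$i$ is only $1/i$).

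\textbf{The dual setup is too coarse.} The constraint you wrote, $\alpha_j s_{ij}/p_j \le 1 + \beta_{it}$, does not correspond to the time-indexed LP used here. The paper's LP has both machine \emph{and} job packing constraints, so the dual has three variable families $(\dualVa_j,\dualVb_{it},\dualVc_{jt})$, and the constraint carries a $t$-dependent right-hand side. More importantly, setting $\dualSa_j$ proportional to $C_j-r_j$ is not enough in the non-clairvoyant setting: the paper uses the median-progress trick $\dualSa_j=\sum_{t'\le C_j}\ind[q_{jt'}/p_j\le \zeta_{t'}]$, where $\zeta_{t'}$ is the $\lambda$-quantile of instantaneous fractional progresses. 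This is what makes the geometric-bucketing argument for $\sum_i \dualSb_{it}$ go through and yields the constant $216$ (with $\lambda=\tfrac{2}{3}$, $\kappa=72$). Your sketch, as stated, will not reach that constant.

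\textbf{The lower bound is misread.} The $\Theta$ in part~(ii) refers to the competitive ratio \emph{of the exhibited algorithm}, not a universal lower bound over all non-clairvoyant speed-oblivious algorithms. Hence no Yao-style averaging or adaptive-adversary argument is needed; you just exhibit a single instance on which your own algorithm pays $\Omega(\log m)\cdot\opt$. The paper does this with $n=m$ unit jobs, $s_{ij}=\ind[i\le m-j+1]$, where $\opt\le m$ while Round Robin incurs $\sum_j (1+H_{j-1}-\ldots)=\Omega(m\log m)$. The ``main obstacle'' you flagged simply does not arise.
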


A crucial observation for deriving these algorithms is that in the speed-ordered setting certain speed-aware algorithms use strategies which can be formulated \emph{even without} precise speed values. An additional challenge is the few-job regime, i.e., there are less jobs than machines, where we have to ensure that the algorithms prefer the fast machines.

\subsection{Further Related Work}

Uncertainty about machine speeds or, generally, the machine environment, have hardly been studied in scheduling theory. Some works consider scheduling with unknown non-availability periods, i.e., periods with speed~$0$~\cite{AlbersS01,DiedrichJST09}, permanent break-downs of a subset of machines~\cite{SteinZ20}, or more generally arbitrarily changing machine speed for a single machine~\cite{EpsteinLMMMSS12}, but not on heterogenous machines. In scheduling with testing, unknown processing requirements of a job (and thus its machine-dependent speed) can be explored by making queries, e.g.,~\cite{DuerrEMM20,albersE2020,ArantesBKLLS18}, but also here heterogenous processors are not considered.

Mitigating pessimistic lower bounds of classic worst-case analysis via untrusted predictions~\cite{MitzenmacherV22,alps} has been successfully applied to various scheduling problems~\cite{PurohitSK18,LattanziLMV20,AzarLT21,AzarLT22,Im0QP21,LiX21online,LindermayrM22,AntoniadisGS22,DinitzILMV22portfolios}. While all these results concentrate on the uncertainty of online arrival and non-clairvoyance, Balkanski et al.~\cite{BalkanskiOSW22speedpredictions} consider a robust scheduling problem where machine speeds are only predicted and jobs have to be grouped to be scheduled together before knowing the true machine speeds; such problems without predictions were introduced in~\cite{EberleHMNSS21,SteinZ20}.
In contrast, in our model an algorithm will never learn about a job's true speed(s) before its completion and, further, the speeds might be job-dependent.

\section{Algorithms with Speed Predictions}\label{sec:speed-predictions}

In this section, we investigate the model with speed predictions. We first rule out any sublinear error-dependency.

\begin{theorem}\label{lemma:mu-lb}
  Any speed-oblivious algorithm with speed predictions has a competitive ratio of at least~$\Omega(\min\{\mu,m\})$ for minimizing the total (weighted) completion time, even if the algorithm is clairvoyant and machines are related.
\end{theorem}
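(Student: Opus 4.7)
My plan is to construct a hard single-job instance on $m$ related machines with completely uninformative predictions, and exploit the adversary's freedom to declare the true speeds after seeing the algorithm's plan. I would release a single unit-weight, unit-size job at time $0$, set $\hat{s}_i = 1$ for every machine~$i$, and declare the true speeds $s_{i^\star} = \sqrt{\mu}$ for exactly one fast machine $i^\star$ and $s_i = 1/\sqrt{\mu}$ for every other machine. Then $\mu_1 = \mu_2 = \sqrt{\mu}$, the distortion is exactly $\mu$, and the optimum completes the job on $i^\star$ at time $1/\sqrt{\mu}$.

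The main step is to observe that, because the instance and predictions are perfectly symmetric in the machine index and a deterministic clairvoyant algorithm in the speed-oblivious model receives no speed-distinguishing feedback before the job completes, the algorithm's rate assignment $y_{ijt}$ for $t < C_j$ is a fixed plan independent of $i^\star$. Writing $\tau_i(t) := \int_0^t y_{ijs}\,ds$, the feasibility constraint $\sum_i y_{ijt} \le 1$ gives $\sum_i \tau_i(t) \le t$ for every $t$. I would then fix the threshold $T^\star = m\sqrt{\mu}/(m+\mu)$ and let the adversary commit $i^\star := \arg\min_i \tau_i(T^\star)$; by averaging, $\tau_{i^\star}(T^\star) \le T^\star/m$. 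A direct calculation then shows that the accumulated progress at time $T^\star$ is at most $\sqrt{\mu}\,\tau_{i^\star}(T^\star) + (T^\star - \tau_{i^\star}(T^\star))/\sqrt{\mu} \le T^\star(\sqrt{\mu}/m + 1/\sqrt{\mu}) = 1$, so the algorithm cannot complete the job strictly before $T^\star$. Dividing by $\opt = 1/\sqrt{\mu}$ gives a competitive ratio of at least $m\mu/(m+\mu) = \Omega(\min\{m, \mu\})$, which interpolates smoothly between the new $\Omega(\mu)$ regime for $\mu \le m$ and the $\Omega(m)$ bound of the earlier observation for $\mu > m$.

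The main obstacle I anticipate is formalizing the claim that the algorithm's plan up to $C_j$ is independent of $i^\star$, in particular against a migratory, adaptive algorithm that might hope to glean information from its own progress. The resolution is that \emph{speed-oblivious} should naturally be read as only observing completions, not continuous progress (otherwise even the $\Omega(m)$ bound of the earlier observation would collapse under exploration-style algorithms), so prior to $C_j$ all observable history is symmetric under relabeling of machines. Once this independence is made precise, the adversary may compute the plan, evaluate the $\tau_i(T^\star)$'s, and commit to $i^\star$ upfront, after which the remainder of the argument is a one-line arithmetic verification.
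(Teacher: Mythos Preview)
Your argument is correct and shares the paper's core idea---a single job, identical predictions on all machines, and an adversary who declares the fast machine to be whichever one the algorithm neglects---but your execution differs in useful ways. The paper sets $p_j=2\mu$, requires $m\ge 2\mu$, fixes speeds $1$ and $\mu$, and argues somewhat informally that ``the first $2\mu-1$ machines on which the algorithm processes $j$'' get speed~$1$; this phrasing is really tailored to algorithms that try machines sequentially and leaves the migratory case implicit, and the $m<2\mu$ regime is covered only by falling back on the earlier $\Omega(m)$ observation. Your symmetric parametrization $(\sqrt{\mu},1/\sqrt{\mu})$ together with the integral quantities $\tau_i(t)$ and the averaging step $\min_i \tau_i(T^\star)\le T^\star/m$ handles arbitrary migratory rate profiles cleanly and yields the single smooth bound $m\mu/(m+\mu)$ for all $m$ simultaneously. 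Your explicit discussion of the feedback model (that the algorithm observes only completions, not continuous progress) is also a point the paper leaves tacit; it is the right reading, and making it precise is exactly what justifies treating the schedule up to $C_j$ as a fixed plan the adversary may inspect before committing to $i^\star$.
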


\begin{proof}
Let~$\mu_1,\mu_2 \geq 1$ and~$\mu = \mu_1 \cdot \mu_2$. Consider an instance~$J=\{j\}$ with~$p_j = 2\mu$ and~$m \geq 2\mu$ machines such that~$\hs_{i} = \mu_1$ for all~$1\leq i\leq m$. The algorithm cannot distinguish the machines.
For the first~$2\mu-1$ machines~$i$ on which the algorithm processes~$j$, the adversary fixes~$s_{i}=1$. Thus, at time~$2\mu - 1$, the remaining processing requirement of~$j$ is at least~$2\mu - (2\mu-1) = 1$ and there exists a machine~$i'$ on which~$j$ has not been processed yet. Thus, the adversary can set~$s_{i'} = \mu$ and complete~$j$ on~$i'$ within two time units, implying a competitive ratio of at least~$\Omega(\min\{\mu,m\})$.  
\end{proof}

Observe that this construction already works for two machines when migration is forbidden.

\subsection{A Clairvoyant Algorithm}

We firstly present a novel migratory algorithm for the clairvoyant setting with known processing requirements for both the speed-aware setting as well as speed predictions.
Sequencing jobs by Smith's rule by non-increasing density~$\frac{w_j}{p_j}$ (aka Weighted-Shortest-Processing-Time, WSPT) is optimal on a single machine~\cite{smith1956various}. In the online setting with release dates, this policy is~$2$-competitive when applied preemptively on the available unfinished jobs~\cite{SchulzSk2002}. It can be extended to identical parallel machines~\cite{MegowS04}, by processing at any time the (at most)~$m$ jobs with highest densities. However, this approach is infeasible on unrelated machines, because jobs can have different densities on every machine. 

Inspired by the power of densities, we compute a subset of at most $m$ jobs that instead maximizes the total density, that is, the sum of the densities of the job-to-machine assignment. This can be done efficiently by computing at any time $t$ a matching $M_t$ between alive jobs $j\in J(t) = \{j \in J \mid r_j \leq t \leq C_j \}$ and machines $i\in I$ with edge weights $\hdense_{ij} = \frac{w_j \hs_{ij}}{p_{j}}$ using, e.g., the Hungarian algorithm~\cite{Kuhn55}. In the analysis, we crucially exploit the local optimality of any two matched job-machine pairs via exchange arguments. \begin{algorithm}[tb]
    \caption{Maximum Density}
    \label{alg:max-denses}
    \begin{algorithmic}[1]
      \REQUIRE time~$t$, speed (predictions)~$\{\hs_{ij}\}$ 
        \STATE Construct a complete bipartite graph~$G_t = I \cup J(t)$ where an edge $(i,j) \in I \times J(t)$ has a weight equal to $\hdense_{ij} = \frac{w_j \hs_{ij}}{p_{j}}$.
        \STATE Compute a maximum-weight matching~$M_t$ for~$G_t$.
        \STATE Schedule jobs to machines according to~$M_t$ at time $t$.
    \end{algorithmic}
\end{algorithm}

\begin{theorem}\label{thm:max-dense}
    \Cref{alg:max-denses} has a competitive ratio of at most~$8\mu$ for minimizing the total weighted completion time on unrelated machines with speed predictions.
\end{theorem}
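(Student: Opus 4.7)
The plan is a dual-fitting argument on the standard time-indexed LP relaxation of total weighted completion time on unrelated machines (as in~\cite{AnandGK12}): variables $x_{ijt}\ge 0$ for the fraction of job $j$ processed on machine $i$ at integer time $t\ge r_j$, job-completion constraints $\sum_{i,t} x_{ijt}\,s_{ij}/p_j \ge 1$, machine-capacity constraints $\sum_j x_{ijt}\le 1$, and the ``mean busy time'' objective $\sum_{i,j,t}(w_j x_{ijt} s_{ij}/p_j)(t+p_j/(2s_{ij}))$, which lower-bounds $\tfrac12\opt$. The dual has variables $\alpha_j,\beta_{it}\ge 0$, objective $\sum_j\alpha_j - \sum_{i,t}\beta_{it}$, and per-triple constraints $\alpha_j \le w_j t + (p_j/s_{ij})(w_j/2 + \beta_{it})$ for every $j,i,t\ge r_j$; by weak duality any feasible dual lower-bounds $\tfrac12\opt$.

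I would design the dual variables directly from the schedule produced by \Cref{alg:max-denses}. Letting $j_i(t)$ denote the job matched to machine $i$ at time $t$ by $M_t$ (or $\emptyset$ otherwise, with $\hdense_{i,\emptyset}:=0$), my candidate is $\alpha_j := w_j(C_j^\alg - r_j)/c$ together with a $\beta_{it}$ that charges each machine-slot a constant fraction of the matched predicted density $\hdense_{i,j_i(t)}$, possibly augmented by the aggregate density of still-alive jobs so that the dual constraint holds pointwise rather than merely in aggregate. I expect the constant $c$ to come out of order $4\mu$, which combined with the factor-$2$ LP-gap yields the target $8\mu$ ratio.

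The technical heart is verifying dual feasibility at every triple $(i,j,t)$ with $r_j\le t<C_j^\alg$. Here I would exploit the max-weight property of $M_t$ via swap arguments: if $j\notin M_t$ then $|M_t|=m$ and inserting $j$ in place of $j_i(t)$ forces $\hdense_{i,j_i(t)} \ge \hdense_{ij}$; if $j\in M_t$, denoting by $i_j(t)$ its matched machine, a two-pair swap forces $\hdense_{i,j_i(t)} + \hdense_{i_j(t),j} \ge \hdense_{ij}$. Converting predicted to true densities via $\hs_{ij}\le \mu_1 s_{ij}$ and $s_{ij}\le\mu_2 \hs_{ij}$, and aggregating these slot-wise density bounds over the $C_j^\alg - t$ remaining time until $j$ completes, I would recover the required dual inequality with a factor of order $\mu$.

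Weak duality combined with $\sum_j w_j r_j\le\opt$ then lets $\sum_j\alpha_j$ absorb the algorithm's weighted flow time, while $\sum_{i,t}\beta_{it}$ telescopes per job (each $j$ contributes at most $\mu_1 w_j/c$ across its matched slots because $\sum_t\ind[j\in M_t]\,s_{i_j(t),j}=p_j$) into $O(\mu\cdot\opt)$, delivering $\mathrm{ALG}\le 8\mu\cdot\opt$. The main obstacle I anticipate is precisely engineering $\beta_{it}$ so that the \emph{pointwise} dual constraint holds across every $t\in[r_j,C_j^\alg)$ and not just at $t=r_j$ or at completion: the combinatorial max-weight guarantee of $M_t$ must be lifted to a per-slot algebraic inequality, tracking the two distortion directions $\mu_1,\mu_2$ simultaneously without loosening the constant. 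This is almost certainly where the ``new dual setup'' flagged by the authors lives.
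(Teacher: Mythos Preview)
Your general strategy---dual fitting against a time-indexed LP, with the matching's local optimality driving the feasibility argument via swap inequalities---is exactly the paper's approach. You have also correctly identified the two relevant swap cases and the telescoping of the matched densities into~$\mu_1 w_j$. However, there is a concrete structural gap in your setup that will block the pointwise feasibility argument you yourself flag as the obstacle.

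You work with a two-variable dual $(\alpha_j,\beta_{it})$ coming from the LP that has only machine-capacity constraints. But your own second swap case produces the inequality $\hdense_{i,j_i(t)} + \hdense_{i_j(t),j} \ge \hdense_{ij}$, and the second summand $\hdense_{i_j(t),j}$ is indexed by \emph{job} and time, not by machine and time. It cannot be absorbed into $\beta_{it}$: consider the case where $j$ is matched to some $i'\neq i$ at time $t'$ and machine $i$ is \emph{unmatched} (e.g., few alive jobs). Then $\beta_{it'}=0$, yet $\hdense_{ij}$ may be large; optimality only gives $\hdense_{ij}\le\hdense_{i'j}$, and that bound lives on the $(j,t')$ index. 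Augmenting $\beta_{it}$ by an ``aggregate density of still-alive jobs'' would destroy the telescoping you rely on for $\sum_{i,t}\beta_{it}$.

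The paper's ``new dual setup'' resolves this by starting from an LP that also carries the \emph{job} packing constraints $\sum_i x_{ijt}\le 1$, yielding a \emph{third} dual variable $c_{jt}$. The assignment is then $\bar a_j = w_j C_j$, $\bar b_{it}=\tfrac{1}{\kappa}\sum_{t'\ge t}\hdense_{i,j_i(t')}$, and $\bar c_{jt}=\tfrac{1}{\kappa}\sum_{t'\ge t}\hdense_{i_j(t'),j}$. Feasibility at each $(i,j,t')$ with $t'\le C_j$ reduces exactly to $\hdense_{ij}\le\beta_{it'}+\gamma_{jt'}$, which your swap inequalities (plus two degenerate one-sided cases) deliver cleanly. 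Both $\sum_{i,t}\bar b_{it}$ and $\sum_{j,t}\bar c_{jt}$ telescope to $\tfrac{\mu_1}{\kappa}\alg$ by the argument you sketched. The factor $\mu_2$ enters not through an LP-gap constant but through an $\alpha$-speed-scaled LP ($\opt_\alpha\le\alpha\,\opt$), setting $\alpha=\mu_2\kappa$; choosing $\kappa=4\mu_1$ then gives $8\mu$. So your plan is morally right but needs the extra dual handle $c_{jt}$ to close.
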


This theorem implies immediately the following corollary for the speed-aware setting ($\mu = 1$).
\begin{corollary}
\Cref{alg:max-denses} has a competitive ratio of at most~$8$ for minimizing the total weighted completion time on unrelated machines in the speed-aware setting.
\end{corollary}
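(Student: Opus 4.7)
The plan is to derive this corollary as an immediate specialization of Theorem~\ref{thm:max-dense} to the case where predictions coincide with the true speeds. In the speed-aware setting the algorithm has access to the exact values $s_{ij}$, so one simply invokes \Cref{alg:max-denses} with $\hs_{ij} := s_{ij}$ for every job-machine pair. With this choice, the predicted densities $\hdense_{ij} = \frac{w_j \hs_{ij}}{p_j}$ used by the matching step equal the true densities $\dense_{ij} = \frac{w_j s_{ij}}{p_j}$, so the maximum-weight matching $M_t$ constructed at each time $t$ is a genuine maximum-density assignment of alive jobs to machines.

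Next I would check the distortion error under this instantiation. Since $\hs_{ij}/s_{ij} = 1$ and $s_{ij}/\hs_{ij} = 1$ uniformly over all $i \in I$ and $j \in J$, both factors satisfy
\[
\mu_1 = \max_{i,j}\frac{\hs_{ij}}{s_{ij}} = 1, \qquad \mu_2 = \max_{i,j}\frac{s_{ij}}{\hs_{ij}} = 1,
\]
so the overall distortion is $\mu = \mu_1 \cdot \mu_2 = 1$.

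Finally, I would apply Theorem~\ref{thm:max-dense}, which guarantees a competitive ratio of at most $8\mu$ for \Cref{alg:max-denses} on unrelated machines with speed predictions. Substituting $\mu = 1$ yields the claimed bound of $8$ in the speed-aware case. There is no real obstacle here: all of the work (the dual-fitting setup, the exchange argument on matched pairs, and the conversion of prediction error into a scaling factor) is already carried out in the proof of Theorem~\ref{thm:max-dense}, and the corollary only requires observing that the speed-aware case is the error-free regime of the predictions framework.
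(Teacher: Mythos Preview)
Your proposal is correct and matches the paper's own reasoning exactly: the paper states that the corollary follows immediately from Theorem~\ref{thm:max-dense} by taking the speed-aware setting, i.e., $\mu = 1$.
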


The remaining section is devoted to proof of \Cref{thm:max-dense}, which uses a dual-fitting argumentation.
To this end, we state the standard migratory linear programming relaxation for our objective function~\cite{SchulzS02}. In fact, we state a variant where the machines of an optimal solution run at a lower speed of~$\frac{1}{\alpha}$ for~$\alpha \geq 1$~\cite{ImKM18}.
\begin{alignat}{3}
    \text{min} \quad &\sum_{i \in I} \sum_{j \in J} \sum_{t \geq 0} w_{j} \cdot t \cdot \frac{x_{ijt} s_{ij}}{p_j} &\quad& \tag{$\text{LP}_\alpha$}\label{pf-lp} \\
    \text{s.t.} \quad& \sum_{i \in I} \sum_{t \geq 0} \frac{x_{ijt} s_{ij}}{p_{j}} \geq 1   &&\forall j \in J \notag \\
    & \sum_{j \in J} \alpha \cdot x_{ijt} \leq 1   &&\forall i \in I,t \geq 0 \notag \\
    & \sum_{i \in I} \alpha \cdot x_{ijt} \leq 1   &&\forall j \in J,t \geq r_j \notag \\
    & x_{ijt} \geq 0 &&\forall i \in I, j \in J,t\geq r_j \notag \\
    & x_{ijt} = 0 &&\forall i \in I, j \in J,t < r_j \notag
\end{alignat}

Let~$\opt_\alpha$ denote the optimal objective value in this restricted setting. 
The dual of \eqref{pf-lp} can be written as follows. (From now on we omit obvious set constraints in the notation for an improved readability.)
\begin{alignat}{3}
    \text{max} \quad &\sum_{j} \dualVa_j - \sum_{i,t} \dualVb_{it} - \sum_{j,t \geq r_j} \dualVc_{jt} \tag{$\text{DLP}_\alpha$}\label{pf-dual} \\
    \text{s.t.} \quad&\frac{\dualVa_j s_{ij}}{p_j} - \alpha  \dualVb_{it} - \alpha  \dualVc_{jt} \leq   w_j   \frac{s_{ij}  t}{p_j} \qquad &\forall i,j,t\geq r_j \notag\\
    &\dualVa_j, \dualVb_{it}, \dualVc_{jt'}  \geq 0 \qquad &\forall i,j,t \; \forall t'\geq r_j \notag
\end{alignat}

Fix an instance and the algorithm's schedule. 
Let~$\kappa \geq 1$ be a constant. We define for every machine~$i$ and any time~$t$ 
\[
\beta_{it} = \begin{cases} \hdense_{ij} & \text{ if } i \text{ is matched to } j \in J(t) \text{ in } M_t \\ 0 & \text{ otherwise,} \end{cases}
\]
and for every job~$j$ and any time~$t$
\[
\gamma_{jt} = \begin{cases} \hdense_{ij} & \text{ if } j \text{ is matched to } i \in I \text{ in } M_t \\ 0 & \text{ otherwise.} \end{cases}
\]

Consider the following values:
\begin{itemize}
  \item $\dualSa_j = w_j C_j$ for every job~$j$,
  \item $\dualSb_{it} = \frac{1}{\kappa} \sum_{t' \geq t} \beta_{it'}$ for every machine $i$ and time $t$, and
  \item $\dualSc_{jt} = \frac{1}{\kappa} \sum_{t' \geq t} \gamma_{jt'}$ for every job $j$ and time $t \geq r_j$.
\end{itemize}

We show in \Cref{lemma:maxdense-dual-feasible} that these values define a feasible solution for the dual problem~\eqref{pf-dual}, and that the corresponding dual objective value is at least a certain fraction of the algorithm's solution value (\Cref{lemma:maxdense-dual-objective}). Weak LP duality then implies~\Cref{thm:max-dense}. Let $\alg = \sum_j w_j C_j$.

\begin{lemma}\label{lemma:maxdense-dual-objective}
    $(1 - \frac{2\mu_1}{\kappa})  \alg \leq \sum_{j} \dualSa_j - \sum_{i,t} \dualSb_{it} - \sum_{j,t \geq r_j} \dualSc_{jt}$
\end{lemma}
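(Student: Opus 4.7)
The plan is to first observe that $\sum_j \dualSa_j = \sum_j w_j C_j = \alg$, so the claim reduces to showing
\[
\sum_{i,t} \dualSb_{it} \;+\; \sum_{j,\, t \geq r_j} \dualSc_{jt} \;\leq\; \frac{2\mu_1}{\kappa}\, \alg.
\]
The core idea is to swap the two nested summations in each term: summing $\dualSb_{it} = \tfrac{1}{\kappa}\sum_{t' \geq t}\beta_{it'}$ over $t \geq 0$ weights each $\beta_{it'}$ by (essentially) $t'$, and analogously summing $\dualSc_{jt}$ over $t \geq r_j$ weights each $\gamma_{jt'}$ by $t' - r_j$. By definition, $\beta_{it'}$ and $\gamma_{jt'}$ are nonzero only on matched pairs in $M_{t'}$, in which case they both equal $\hdense_{i,j}$. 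Reindexing by the job side, the left-hand side rewrites as
\[
\frac{1}{\kappa} \sum_{j \in J} \sum_{t' \in T_j} (2t' - r_j)\, \hdense_{i(j,t'),\, j},
\]
where $T_j \subseteq [r_j, C_j]$ is the set of times at which $j$ is matched, and $i(j,t')$ is the machine it is matched to at time $t'$.

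Next I would bound the time factor pointwise by $2t' - r_j \leq 2 C_j$ (using $t' \leq C_j$ on $T_j$ and $r_j \geq 0$), and control the remaining density sum by combining the distortion error with the fact that the algorithm fully processes $j$ whenever it is matched. Concretely,
\[
\sum_{t' \in T_j} \hdense_{i(j,t'),\, j} \;=\; \frac{w_j}{p_j} \sum_{t' \in T_j} \hs_{i(j,t'),\, j} \;\leq\; \mu_1 \frac{w_j}{p_j} \sum_{t' \in T_j} s_{i(j,t'),\, j} \;=\; \mu_1\, w_j,
\]
where the inequality invokes $\hs_{ij} \leq \mu_1 s_{ij}$ (definition of $\mu_1$) and the last equality uses that the total actual progress the algorithm makes on $j$ equals $p_j$ by the definition of $C_j$. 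Combining the two estimates gives a per-job contribution of at most $2\mu_1 w_j C_j / \kappa$; summing over $j$ yields exactly $\tfrac{2\mu_1}{\kappa}\alg$.

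The main obstacle I expect is the careful bookkeeping around the summation swap and the reindexing by jobs. One must verify that unmatched machines and unmatched alive jobs contribute zero to the dual sums, and that the discrete-time off-by-one coming from $|\{t : 0 \leq t \leq t'\}| = t'+1$ is benign: it contributes only a lower-order additive term that is either absorbed in the $2C_j$ estimate or avoided by moving to a continuous-time formulation of \eqref{pf-lp}. Everything else is essentially algebra driven by the two structural facts: matched pairs at time $t'$ share the value $\hdense_{i,j}$, and the actual processing performed on $j$ sums to $p_j$.
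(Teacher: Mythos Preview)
Your proposal is correct and uses essentially the same argument as the paper: both reindex the $\beta$- and $\gamma$-sums over matched job--machine pairs and invoke the key bound $\sum_{t'} \hdense_{i(j,t'),j} \leq \mu_1 w_j$, which follows from $\hs_{ij} \leq \mu_1 s_{ij}$ together with the fact that the algorithm's total actual progress on $j$ is (at most) $p_j$. The only cosmetic difference is the order of summation---the paper fixes a time $t$ and shows $\sum_i \dualSb_{it} \leq \tfrac{\mu_1}{\kappa} W_t$ and $\sum_j \dualSc_{jt} \leq \tfrac{\mu_1}{\kappa} W_t$ before summing over $t$, whereas you integrate out $t$ first to obtain the factor $2t' - r_j \leq 2C_j$.
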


In the following, let~$U_t$ be the set of unfinished jobs at time~$t$, i.e., all jobs~$j$ with~$t \leq C_j$, and let~$W_t = \sum_{j \in U_t} w_j$. 

\begin{proof}
Fix a time~$t$ and a job~$j$. If~$j \in U_t$, let~$i^j_1,\ldots,i^j_{z(j)}$ be the sequence of individual machine assignments of~$j$ between time~$t$ and~$C_j$. Let $\hdense(i,j) := \hdense_{ij}$. Note that 
    \[
        \sum_{\ell=1}^{z(j)} \hdense(i^j_\ell, j) = \sum_{\ell=1}^{z(j)} \hs_{i^j_\ell,j} \frac{w_j}{p_j} \leq \mu_1 \sum_{\ell=1}^{z(j)}  s_{i^j_\ell,j} \frac{w_j}{p_j} \leq \mu_1 w_j.
    \]
    Therefore,~$\sum_{i} \dualSb_{it} = \frac{1}{\kappa} \sum_{j \in U_t} \sum_{\ell=1}^{z(j)} \hdense(i^j_\ell, j) \leq \frac{\mu_1}{\kappa} W_t$.
    Similarly,~$\dualSc_{jt} = \frac{1}{\kappa} \sum_{\ell=1}^{z(j)} \hdense(i^j_\ell, j) \leq  \frac{\mu_1}{\kappa} w_j$. If~$j \in J \setminus U_t$, then,~$\dualSc_{jt} = 0$. Hence,~$\sum_{j \in J} \dualSc_{jt} \leq \frac{\mu_1}{\kappa} W_t$.
    Finally, we conclude~$\sum_{i,t} \dualSb_{it} \leq \frac{\mu_1}{\kappa} \alg$ and~$\sum_{j,t \geq r_j} \dualSc_{jt} \leq \frac{\mu_1}{\kappa} \alg$.  
\end{proof}

  \begin{lemma}\label{lemma:maxdense-dual-feasible}
    Assigning $\dualVa_j = \dualSa_j$, $\dualVb_{it} = \dualSb_{it}$ and $\dualVc_{jt} = \dualSc_{jt}$ is feasible for~\eqref{pf-dual} if $\alpha = \mu_2 \kappa$.
  \end{lemma}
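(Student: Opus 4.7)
The plan is to verify the dual constraint for every triple $(i,j,t)$ with $t \geq r_j$. Rearranging and substituting $\dualSa_j = w_j C_j$, the constraint becomes
\[
\frac{w_j s_{ij}(C_j - t)}{p_j} \leq \alpha\bigl(\dualSb_{it} + \dualSc_{jt}\bigr).
\]
If $t \geq C_j$, the left-hand side is non-positive while the right-hand side is non-negative, so the inequality holds trivially. Hence the interesting case is $r_j \leq t < C_j$, in which $j$ is alive at every time $t' \in [t, C_j)$, i.e., $j \in J(t')$.

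The core step is the following \emph{matching inequality}: for every such $t'$ and every machine $i$,
\[
\beta_{i t'} + \gamma_{j t'} \geq \hdense_{ij}.
\]
I would prove it by an exchange argument exploiting that $M_{t'}$ is a maximum-weight matching on $G_{t'}$, splitting into four cases according to whether $i$ and $j$ are matched in $M_{t'}$. If $i$ is matched to some $j''$ and $j$ to some $i''$, then replacing the edges $(i,j''),(i'',j)$ by $(i,j),(i'',j'')$ yields another feasible matching, so optimality gives $\hdense_{i''j} + \hdense_{ij''} \geq \hdense_{ij} + \hdense_{i''j''} \geq \hdense_{ij}$, i.e., $\gamma_{jt'} + \beta_{it'} \geq \hdense_{ij}$. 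If exactly one of $i,j$ is unmatched, a single edge swap gives the bound (the unmatched side contributes $0$, and the matched side dominates $\hdense_{ij}$). If both are unmatched, then the edge $(i,j)$ could be added, so optimality forces $\hdense_{ij} = 0$ and the inequality is again trivial.

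Given the matching inequality, summing over $t' \in [t, C_j)$ yields
\[
\kappa \bigl(\dualSb_{it} + \dualSc_{jt}\bigr) \;\geq\; \sum_{t'=t}^{C_j - 1} \hdense_{ij} \;=\; (C_j - t)\,\frac{w_j \hs_{ij}}{p_j} \;\geq\; \frac{(C_j - t)\, w_j s_{ij}}{\mu_2 p_j},
\]
where the last step uses the definition of $\mu_2$, which guarantees $\hs_{ij} \geq s_{ij}/\mu_2$. Multiplying by $\alpha/\kappa = \mu_2$ gives precisely the required bound, establishing dual feasibility.

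I expect the main obstacle to be the case analysis in the matching inequality, in particular handling unmatched jobs or machines cleanly (which can occur when $|J(t')| \neq m$), and making sure the optimality exchange uses only edges whose weights are defined in the relevant case. Once that lemma is in place, the remaining steps (summation, loss factor of $\mu_2$ from the prediction error, choice of $\alpha$) are routine.
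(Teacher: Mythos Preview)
Your proposal is correct and follows essentially the same approach as the paper: both reduce the dual constraint to the pointwise matching inequality $\beta_{it'}+\gamma_{jt'}\ge\hdense_{ij}$, prove it by an exchange argument on the maximum-weight matching $M_{t'}$ with the same case split (both matched, one side unmatched, both unmatched), and then sum over $t'$ and absorb a factor $\mu_2$ when passing from $\hs_{ij}$ to $s_{ij}$. The only cosmetic difference is that the paper singles out the case $(i,j)\in M_{t'}$ separately, whereas you subsume it in the ``both matched'' case.
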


  \begin{proof}
   First note that the dual assignment is non-negative. Let~$i \in I, j \in J$ and~$t \geq r_j$. The definition of~$\dualSa_j$ yields
    \(
      \dualSa_j \frac{s_{ij}}{p_j} - w_j t \frac{s_{ij}}{p_j} \leq \sum_{t'=t}^{C_j} \frac{w_j s_{ij}}{p_j}.
    \)
    By using the fact that~$\frac{w_j s_{ij}}{p_j} \leq \mu_2 \frac{w_j \hs_{ij}}{p_j}$, the definitions of~$\dualSb_{it}$ and~$\dualSc_{jt}$, and the value of~$\alpha$,
    it remains to validate for every~$t \leq t' \leq C_j$ that~$\frac{w_j \hs_{ij}}{p_j} = \hdense_{ij} \leq \beta_{it'} + \gamma_{jt'}$. We distinguish five cases:
    \begin{enumerate}[(i)]
      \item If~$(i,j) \in M_{t'}$, then~$\hdense_{ij} = \beta_{it'} = \gamma_{jt'}$.
      \item If~$(i, j') \in M_{t'}$ and~$(i',j) \in M_{t'}$ s.t.~$i' \neq i$ (and thus~$j' \neq j$), we know by the optimality of~$M_{t'}$ that
      \begin{align*}
        \hdense_{ij} \leq \hdense_{ij} + \hdense_{i'j'} \leq \hdense_{i'j} + \hdense_{ij'} = \gamma_{jt'} + \beta_{it'}.
        \end{align*}
      \item If~$(i',j) \in M_{t'}$ and~$i$ is not matched in~$M_{t'}$, we conclude~\(
         \hdense_{ij} \leq \hdense_{i'j} = \gamma_{jt'}.
      \)

      \item If~$(i,j') \in M_{t'}$ and~$j$ is not matched in~$M_{t'}$, we conclude~\(
         \hdense_{ij} \leq \hdense_{ij'} = \beta_{it'}.
      \)
      \item The case where~$\hs_{ij} > 0, w_j > 0$, but both~$i$ and~$j$ are unmatched in~$M_{t'}$ contradicts the optimality of~$M_{t'}$, as~$t' \leq C_j$. Else holds $\hdense_{ij} = 0$, and we conclude since the right side of the inequality is non-negative. \endproof
    \end{enumerate}
  \end{proof}

  \begin{proof}[Proof of~\Cref{thm:max-dense}] 
  Weak LP duality implies that the optimal objective value of~\eqref{pf-dual} is greater or equal to the optimal objective value of~\eqref{pf-lp}. Being the objective value of a relaxation, the latter is a lower bound on $\opt_\alpha$, which in turn is at most $\alpha \opt$ by scaling completion times, where~$\opt$ denotes the optimal objective value of the original problem.
This implies via
    \Cref{lemma:maxdense-dual-objective} and \Cref{lemma:maxdense-dual-feasible} \begin{align*}
        \mu_2 \kappa \cdot \opt \geq \opt_{\mu_2 \kappa} \geq \sum_{j} \dualSa_j - \sum_{i,t} \dualSb_{it} - \sum_{j,t \geq r_j} \dualSc_{jt} \geq \left( 1 - \frac{2 \mu_1}{\kappa} \right) \cdot \alg.
     \end{align*}
Choosing $\kappa = 4\mu_1$, we conclude $\alg \leq 8 \mu \cdot \opt$.
\end{proof}

\subsection{A Clairvoyant Non-Preemptive Algorithm}

\begin{algorithm}[tb]
  \caption{Greedy WSPT}\label{alg:min-increase}
  \begin{algorithmic}
    \REQUIRE speed predictions $\{\hs_{ij}\}$
    \FUNCTION{UponJobArrival(job $j$)}{
      \STATE Assign job $j$ to machine $g(j) = \argmin_{i \in I} \hQ_{ij}$.
    }
    \ENDFUNCTION

    \FUNCTION{UponMachineIdle(machine $i$, time $t$)}{
          \STATE Start processing the job $j$ with largest $\hdense_{ij}$ among all alive jobs assigned to $i$ which satisfy $\hr_{ij} \leq t$.
        }
    \ENDFUNCTION

\end{algorithmic}
\end{algorithm}

In many applications, job migration or preemption are not possible.
In this section, we show that the non-preemptive Greedy~WSPT algorithm by~\cite{GuptaMUX20} achieves an error-dependent competitive ratio when using predicted speeds to make decisions (\Cref{alg:min-increase}).
The intuition of this algorithm is to greedily assign arriving jobs to machines, where they are then scheduled in WSPT order, i.e., on machine $i$ by non-decreasing $\frac{w_j s_{ij}}{p_j}$. The greedy job-to-machine assignment intuitively minimizes the increase of the objective value that scheduling the job on a machine incurs in the current state. Additionally, the execution of job $j$ is delayed depending on its processing time $\frac{p_j}{s_{ij}}$ on the assigned machine $i$. This is necessary due to simple lower bounds in the non-preemptive setting~\cite{LuSS03}.

To make this precise, for every~$j \in J$, let~$M_i(j)$ be the set of jobs, excluding job~$j$, which are
assigned to machine~$i$ at time~$r_j$, but have not been started yet. As this definition is ambiguous if there are two jobs~$j$ and~$j'$ with~$r_{j} = r_{j'}$ being assigned to~$i$, we assume that we assign them in the order of
their index.
For all machines~$i$, jobs~$j$ and a constant~$\theta > 0$, which we will set $\theta = \frac{2}{3}$, we define~$\hr_{ij} = \max\{r_j, \theta \frac{p_j}{\hs_{ij}}\}$ and~$\hQ_{ij}$ as
\[
	w_{j} \Bigg( \hr_{ij} + \frac{\hr_{ij}}{\theta} + \frac{p_j}{\hs_{ij}} + \sum_{\substack{j' \in M_i(j) \\ \hdense_{ij'} \geq \hdense_{ij}}} \frac{p_{j'}}{\hs_{ij'}} \Bigg) + \frac{p_j}{\hs_{ij}} \sum_{\substack{j' \in M_i(j) \\ \hdense_{ij'} < \hdense_{ij}}} w_{j'}.
\]
We prove in \Cref{app:min-increase} the following theorem.

\begin{restatable}{theorem}{thmGreedyWSPT}\label{theorem:minincrease}
  \Cref{alg:min-increase} has a competitive ratio of at most $\frac{368}{51}\mu^2 < 7.216 \mu^2$ for minimizing the total weighted completion time on unrelated machines with speed predictions.
\end{restatable}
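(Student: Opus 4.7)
The proof follows the dual-fitting framework already used for \Cref{thm:max-dense}, but adapted to the non-preemptive setting of \cite{GuptaMUX20}. The plan is to lower bound $\opt$ by the optimal value of the dual~\eqref{pf-dual} (up to a factor $\alpha$), and then construct a feasible dual assignment whose objective value is at least a constant fraction of $\alg$. The constants $\theta = 2/3$ and $\alpha$ will be chosen at the end to minimize the final ratio $\alpha/(\text{constant fraction})$, yielding $\frac{368}{51}\mu^2$.

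\textbf{Bounding the algorithm's completion times.} First, I would show that the completion time of a job $j$ assigned by the algorithm to machine $g(j)$ satisfies a bound in terms of the predicted quantity $\hQ_{g(j),j}$, paying a factor of $\mu_2$: because $j$ waits for all previously assigned denser jobs $j'$ on $g(j)$, whose \emph{true} processing times $p_{j'}/s_{g(j),j'}$ can exceed the predicted $p_{j'}/\hs_{g(j),j'}$ by a factor $\mu_2$, and similarly its own processing time on $g(j)$ blows up by $\mu_2$. Summing over all jobs then yields $\alg \le \mu_2 \sum_j \hQ_{g(j),j}$ (with some constant dependence on $\theta$).

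\textbf{Dual assignment and feasibility.} Following the template in \cite{GuptaMUX20}, I would set $\dualSa_j$ proportional to $w_j(\hr_{g(j),j} + p_j/\hs_{g(j),j})$ plus the contribution from jobs later assigned to the same machine with smaller predicted density, $\dualSb_{it}$ based on the total predicted density of jobs queued on~$i$ at time~$t$, and $\dualSc_{jt}$ to handle the release-date delays. The key structural inequality is that, by the greedy rule, $\hQ_{g(j),j} \le \hQ_{i,j}$ for every machine~$i$, so that any alternative assignment of~$j$ to~$i$ at any time~$t \ge r_j$ in the dual constraint can be ``paid for'' by the terms already appearing in~$\hQ_{i,j}$. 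Converting this predicted inequality into a bound on the actual dual constraint (which involves the true $s_{ij}$) costs a multiplicative $\mu_1$, since $\hdense_{ij} \ge \dense_{ij}/\mu_1$ (and analogously for the processing-time terms). This $\mu_1$ is absorbed by setting $\alpha = \Theta(\mu_1)$ in the scaled LP, following the same scaling trick used in the proof of \Cref{thm:max-dense}.

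\textbf{Combining and optimizing.} Weak duality then gives $\alpha \cdot \opt \ge \sum_j \dualSa_j - \sum_{i,t}\dualSb_{it} - \sum_{j,t}\dualSc_{jt}$, and by construction the right-hand side is at least a $\theta$-dependent fraction of $\sum_j \hQ_{g(j),j} \ge \alg/\mu_2$ (from the first step). Balancing the loss from the non-preemption parameter $\theta$ against the dual-feasibility scaling $\alpha$ and choosing $\theta = 2/3$ reproduces the analysis of~\cite{GuptaMUX20} and yields the claimed $\frac{368}{51}\mu^2 < 7.216\mu^2$. The main obstacle I anticipate is carrying through the dual-feasibility verification cleanly in the presence of two separate prediction errors: one must be careful that $\mu_1$ appears only in the constraint scaling (via $\alpha$) and $\mu_2$ only in the cost comparison, so that the two factors multiply instead of accumulating in a worse combination; this is where the split definition $\mu = \mu_1\mu_2$ really pays off and is the crux of the argument.
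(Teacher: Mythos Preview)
Your high-level plan (dual fitting in the style of \cite{GuptaMUX20}, exploiting $\hQ_{g(j),j}\le \hQ_{i,j}$) is the right one, but several concrete choices diverge from the paper and at least one of them would block you from reaching the stated constant.

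\textbf{Wrong relaxation.} You propose to work with \eqref{pf-dual}, the dual of the migratory LP. The paper instead uses the \emph{non-preemptive} relaxation of \cite{SchulzS02}, whose primal objective contains the extra additive term $\tfrac{1}{2}w_j x_{ijt}$ and whose dual therefore has only two variable families $(a_j,b_{it})$ with constraint
\[
\frac{a_j s_{ij}}{p_j}-b_{it}\le w_j\Bigl(\frac{s_{ij}(t+1/2)}{p_j}+\frac12\Bigr).
\]
There is no $c_{jt}$ variable at all. The extra $w_j/2$ on the right-hand side is what absorbs the $p_j/\hs_{ij}$ (and the $(1+\theta)$ part of $\hr_{ij}$) in the feasibility check; without it you cannot get the constant $368/51$. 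Moreover, the paper does not use $\alpha$-speed scaling here: instead it time-dilates the $b$-variables by evaluating the remaining weight at time $\kappa t$, and then scales both $\bar a,\bar b$ by a single factor~$\lambda$.

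\textbf{Dual values.} The paper takes $\bar a_j=Q_{g(j),j}$, where $Q_{ij}$ is defined like $\hQ_{ij}$ but with the \emph{true} processing times $p_{j'}/s_{ij'}$ (and a $\mu_1$ in front of the $\hr_{ij}/\theta$ term). This choice gives $\sum_j \bar a_j\ge \alg$ directly, with no $\mu_2$ loss in that step. Your plan to set $a_j$ from predicted quantities and then pay $\mu_2$ in a separate ``cost comparison'' is not how the argument runs. The $b$-variables are $\bar b_{it}=\mu\sum_{j\in U_i(\kappa t)}w_j$, i.e.\ total remaining \emph{weight}, not density.

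\textbf{Where $\mu^2$ really comes from.} Your closing intuition (``$\mu_1$ only in $\alpha$, $\mu_2$ only in the cost comparison'') is off. In the paper, one factor of $\mu$ sits inside $\bar b_{it}$ (so that $\sum_{i,t}\bar b_{it}=\tfrac{\mu}{\kappa}\alg$, forcing $\kappa=\Theta(\mu)$), and the feasibility scaling then needs $\lambda\ge \mu_1(\tfrac{1}{\theta}+\mu_2\kappa)=\Theta(\mu^2)$ as well as $\lambda\ge 2\mu(2+\theta)$. Solving with $\theta=\tfrac23$, $\kappa=\tfrac{23}{6}\mu$, $\lambda=\tfrac{16}{3}\mu^2$ gives $\alg\le \tfrac{\lambda}{1-\mu/\kappa}\,\opt=\tfrac{368}{51}\mu^2\,\opt$. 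So the two error factors do \emph{not} land in disjoint places; both enter the feasibility bound, and that is precisely why the result is $\Theta(\mu^2)$ rather than $\Theta(\mu)$.
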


\subsection{A Non-Clairvoyant Algorithm}

In the non-clairvoyant setting, 
any constant-competitive algorithm for minimizing the total completion time on unrelated machines has to migrate and preempt jobs~\cite{MotwaniPT94,GuptaIKMP12}.
Since such algorithms cannot compute densities, a common strategy is to run all jobs simultaneously at a rate proportional to
their weight~\cite{MotwaniPT94,KimC03a}. 
On unrelated machines with job-dependent speeds, the Proportional Fairness Algorithm (PF) develops this idea further by respecting job-dependent speeds~\cite{ImKM18}.
It is known that PF has a competitive ratio of at most~$128$ for minimizing the total weighted completion time~\cite{ImKM18}.
In the following, we show that PF has a linear error-dependency in $\mu$ when computing rates via predicted speeds. 
As a byproduct, we slightly improve the upper bound on the speed-aware competitive ratio of PF via optimized duals to~$108$.

\begin{algorithm}[tb]
  \caption{Proportional Fairness}
  \label{alg:pf}
  \begin{algorithmic}
    \REQUIRE time $t$, speed predictions $\{\hs_{ij}\}$
      \STATE Use solution $\{y_{ijt}\}_{i,j}$ of \eqref{pf-convex} as rates at time $t$. \end{algorithmic}
\end{algorithm}

\begin{restatable}{theorem}{thmPF}\label{thm:pf}
  \Cref{alg:pf} has a competitive ratio of at most $108\mu$ for minimizing the total weighted completion time on unrelated machines with predicted speeds.
\end{restatable}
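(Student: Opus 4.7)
I would follow the dual-fitting blueprint used for \Cref{thm:max-dense}, instantiated against the LP relaxation \eqref{pf-lp} and its dual \eqref{pf-dual}. The structural input comes from the convex program \eqref{pf-convex} solved by PF at every time $t$, which maximizes a proportional-fairness objective in the predicted speeds $\hs_{ij}$ subject to the usual machine and job feasibility constraints. Its KKT conditions produce Lagrange multipliers $\lambda_{it}$ for the machine-capacity constraints and $\nu_{jt}$ for the job-single-machine constraints, and I would use them as the building blocks of the dual. Concretely, for a parameter $\kappa > 0$ to be tuned, I would set
\[
\dualSa_j = \frac{w_j C_j}{\kappa}, \qquad \dualSb_{it} = \frac{1}{\kappa} \sum_{t' \ge t} \lambda_{it'}, \qquad \dualSc_{jt} = \frac{1}{\kappa} \sum_{t' \ge t} \nu_{jt'},
\]
mirroring the telescoping structure used in the proof of \Cref{thm:max-dense} and in the speed-aware PF analysis of~\cite{ImKM18}.

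\textbf{Two lemmas.} As for \Cref{thm:max-dense}, I would split the analysis into an objective bound and a dual feasibility check. Writing $U_t$ for the set of alive jobs at time $t$ and $W_t = \sum_{j \in U_t} w_j$, the objective bound
\[
\sum_j \dualSa_j - \sum_{i,t} \dualSb_{it} - \sum_{j, t \ge r_j} \dualSc_{jt} \ \ge\ \Big(1 - \tfrac{c_1 \mu_1}{\kappa}\Big) \alg
\]
for some absolute constant $c_1$ follows from summing the KKT identity against alive-job weights, which produces $\sum_i \dualSb_{it} + \sum_j \dualSc_{jt} \le \tfrac{c_1 \mu_1}{\kappa} W_t$, and then invoking $\alg = \sum_t W_t$. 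The factor $\mu_1$ enters here because the multipliers come from a program that measures progress in $\hs_{ij}$, which can overstate actual progress by at most $\mu_1$. For dual feasibility I would show that the assignment satisfies \eqref{pf-dual} with $\alpha = c_2 \mu_2 \kappa$ for an absolute constant $c_2$: for a fixed triple $(i,j,t)$ with $t \ge r_j$, the dual constraint reduces (via the telescoping definitions of $\dualSb, \dualSc$ and the choice $\dualSa_j = w_j C_j / \kappa$) to checking, for every $t \le t' \le C_j$, an inequality of the form $\tfrac{w_j s_{ij}}{p_j} \le \alpha (\lambda_{it'} + \nu_{jt'})$. The KKT stationarity of \eqref{pf-convex} at time $t'$ delivers this with $\hs_{ij}$ in place of $s_{ij}$, and the bound $s_{ij} \le \mu_2 \hs_{ij}$ is what injects $\mu_2$ and fixes $c_2$.

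\textbf{Combining and main obstacle.} Weak LP duality together with $\opt_\alpha \le \alpha \cdot \opt$ then yields $c_2 \mu_2 \kappa \cdot \opt \ge (1 - c_1 \mu_1 / \kappa)\, \alg$, and choosing $\kappa$ proportional to $\mu_1$ gives $\alg \le C \mu \cdot \opt$; the constant $108$ is obtained by jointly optimizing $\kappa$, $\alpha$, and the proportionality factor in the objective bound, which also sharpens the constant $128$ of the speed-aware analysis in~\cite{ImKM18}. The hardest part will be quarantining the two error factors into separate lemmas: $\mu_1$ must live only in the objective bound (where predicted progress overstates true progress) and $\mu_2$ only in the dual-feasibility step (where KKT speaks in $\hs_{ij}$ but the dual constraint is phrased in $s_{ij}$), so that their product---rather than $\mu^2$ or some larger expression---controls the final ratio. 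Any crosstalk, e.g.\ invoking $s_{ij} \le \mu_2 \hs_{ij}$ inside the objective lemma, would inflate the bound to $\Theta(\mu^2)$, so this clean separation is the technical heart of the argument.
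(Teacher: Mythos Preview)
Your proposal transplants the Maximum Density duals to PF, but this fails on both sides. For feasibility, KKT stationarity of \eqref{pf-convex} gives
\[
\eta_{it'} + \theta_{jt'} \ \ge\ \frac{w_j\,\hs_{ij}}{\sum_{i'} \hs_{i'j}\, y_{i'jt'}},
\]
with the \emph{predicted progress rate} $\hat q_{jt'} = \sum_{i'} \hs_{i'j} y_{i'jt'}$ in the denominator, not $p_j$. There is no uniform bound of $\hat q_{jt'}$ by a constant times $p_j$, so your pointwise inequality $\tfrac{w_j s_{ij}}{p_j}\le\alpha(\eta_{it'}+\theta_{jt'})$ simply does not follow. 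For the objective bound, the KKT identity only gives $\sum_i \eta_{it'}+\sum_j \theta_{jt'}\le W_{t'}$ (\Cref{lemma:kkt-multiplier-bound}); hence with your $\dualSb_{it}=\tfrac1\kappa\sum_{t'\ge t}\eta_{it'}$ one gets $\sum_{i,t}\dualSb_{it}\le\tfrac1\kappa\sum_{t'}(t'+1)W_{t'}=\Theta\big(\tfrac1\kappa\sum_j w_j C_j^2\big)$, which is not $O(\alg)$. The densities in \Cref{alg:max-denses} telescope over a job's life to at most $\mu_1 w_j$; raw KKT multipliers do not.

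The missing ingredient is the fractional-median weighting from~\cite{ImKM18}: define $\zeta_t$ as the $\lambda$-quantile of the values $q_{jt}/p_j$ over alive jobs (weighted by $w_j$), set $\dualSa_j=\sum_{t'\le C_j} w_j\,\ind[\,q_{jt'}/p_j\le\zeta_{t'}\,]$, and weight $\dualSb,\dualSc$ by $\zeta_{t'}$. Feasibility then works because after multiplying and dividing by $\hat q_{jt'}$ one is left with the factor $\tfrac{q_{jt'}}{p_j}\ind[\cdot]\le\zeta_{t'}$, which is exactly what $\dualSb,\dualSc$ carry; and the objective bound works via a geometric partition of $[t,\infty)$ by halving $W_{t'}$. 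Also, your claimed separation of $\mu_1$ and $\mu_2$ across the two lemmas is not what happens: the objective lemma (\Cref{lemma:pf-dual-objective}) has \emph{no} error dependence (the bound is $(\lambda-\tfrac{4}{(1-\lambda)\kappa})\alg$ with $\kappa$ an absolute constant), and both $\mu_1$ and $\mu_2$ enter the feasibility lemma (\Cref{lemma:pf-dual-feasibility}) through $\alpha=\kappa\mu$, because one needs $s_{ij}\le\mu_2\hs_{ij}$ and $\hat q_{jt'}\le\mu_1 q_{jt'}$ simultaneously there.
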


At every time $t$, \Cref{alg:pf} schedules jobs $J(t)$ with rates computed via the following convex program \eqref{pf-convex} with variables $\hy_{ijt}$ for every machine $i$ and job $j \in J(t)$.

\begin{alignat}{3}
  \text{max} \quad &\sum_{j \in J(t)} w_j \log\left(\sum_{i \in I} \hs_{ij} \hy_{ijt} \right) \tag{$\text{CP}_t$} \label{pf-convex}\\
  \text{s.t.} \quad& \sum_{j \in J(t)} \hy_{ijt} \leq 1   &&\forall i \in I \notag \\
  & \sum_{i \in I} \hy_{ijt} \leq 1   &&\forall j \in J(t) \notag \\
  & \hy_{ijt} \geq 0 &&\forall i \in I, j \in J(t) \notag 
\end{alignat}

We now give an overview over the proof of \Cref{thm:pf} and defer further details to~\Cref{app:pf}.

Fix an instance and PF's schedule.
Let~$\kappa \geq 1$ and~$0 < \lambda < 1$ be constants which we fix later.
In the following, we assume by scaling that all weights are integers.
For every time~$t$, let~$Z^t$ be the sorted (ascending) list of length~$W_t$ composed of $w_j$ copies of~$\frac{q_{jt}}{p_j}$ for every~$j \in U_t$. We define~$\zeta_t$ as the value at the index~$\floor{\lambda W_t}$ in~$Z^t$.
Let~$\{\eta_{it}\}_{i,t}$ and~$\{\theta_{jt}\}_{j \in J(t),t}$ be the KKT multipliers of the first two constraint sets of the optimal solution $\{y_{ijt}\}_{i,j}$. Let~$\ind[\varphi]$ be the indicator variable of the formula~$\varphi$, and consider the following duals:
\begin{itemize}
    \item $\dualSa_j = \sum_{t' = 0}^{C_j} w_j \cdot \ind \left[ \frac{q_{jt'}}{p_j} \leq \zeta_{t'} \right]$ for every job $j$,
    \item $\dualSb_{it} = \frac{1}{\kappa} \sum_{t' \geq t}   \zeta_{t'}  \eta_{it'}$ for every machine $i$ and time $t$, and
    \item $\dualSc_{jt} = \frac{1}{\kappa} \sum_{t' = t}^{C_j}   \zeta_{t'}  \theta_{jt'}$ for every job $j$ and time $t \geq r_j$.
\end{itemize}

We show that this assignment has an objective value which lower bounds a fraction of PF's objective value, and that it is feasible for \eqref{pf-dual} for some values of $\alpha$.

\begin{restatable}{lem}{pfDualObj}\label{lemma:pf-dual-objective}
    $(\lambda - \frac{4}{(1-\lambda)\kappa}) \alg \leq \sum_{j} \dualSa_j - \sum_{i,t} \dualSb_{it} - \sum_{j,t \geq r_j} \dualSc_{jt}$
\end{restatable}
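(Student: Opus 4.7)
The plan is to establish two separate estimates: the lower bound $\sum_j \dualSa_j \geq \lambda\, \alg$, and the upper bound $\sum_{i,t}\dualSb_{it}+\sum_{j,t\geq r_j}\dualSc_{jt}\leq \tfrac{4}{(1-\lambda)\kappa}\,\alg$. Together these yield the claimed inequality.

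\emph{Lower bound on $\sum_j \dualSa_j$.} Swap the order of summation to obtain
\[
\sum_j \dualSa_j \;=\; \sum_t \sum_{j\in U_t} w_j \cdot \ind\!\left[\tfrac{q_{jt}}{p_j} \leq \zeta_t\right].
\]
By the definition of $\zeta_t$ as the value at index $\floor{\lambda W_t}$ in the sorted list $Z^t$ of length $W_t$, the total weight of jobs $j\in U_t$ whose ratio $q_{jt}/p_j$ lies at or below $\zeta_t$ is at least $\lambda W_t$ (using the scaling assumption that weights are integers to eliminate floor losses). Summing and invoking $\sum_t W_t = \alg$ yields $\sum_j \dualSa_j \geq \lambda\,\alg$.

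\emph{Upper bound on the capacity duals.} After swapping the inner sums I obtain
\[
\sum_{i,t} \dualSb_{it} = \tfrac{1}{\kappa}\sum_{t'}(t'+1)\zeta_{t'}\sum_i \eta_{it'},\qquad \sum_{j,t\geq r_j}\dualSc_{jt} = \tfrac{1}{\kappa}\sum_{t'}\zeta_{t'}\sum_{\substack{j\in J\\ r_j\leq t'\leq C_j}}(t'-r_j+1)\,\theta_{jt'}.
\]
The core step is to derive the identity $W_t=\sum_i\eta_{it}+\sum_j\theta_{jt}$ from the KKT conditions of~\eqref{pf-convex}. Stationarity yields $w_j\hs_{ij}/\hat q_{jt}=\eta_{it}+\theta_{jt}$ whenever $y_{ijt}>0$, where $\hat q_{jt}=\sum_i \hs_{ij}y_{ijt}$. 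Multiplying by $y_{ijt}$ and summing over $i$ gives the pointwise relation $w_j=\sum_i y_{ijt}\eta_{it}+\theta_{jt}\sum_i y_{ijt}$ for every $j \in J(t)$; summing over $j$ and applying complementary slackness on both constraint families (so that $\eta_{it}>0 \Rightarrow \sum_j y_{ijt}=1$ and $\theta_{jt}>0 \Rightarrow \sum_i y_{ijt}=1$) collapses the double sums to $\sum_i \eta_{it}$ and $\sum_j \theta_{jt}$ respectively. Bounding $(t'-r_j+1)\leq(t'+1)$ and combining gives
\[
\sum_{i,t}\dualSb_{it}+\sum_{j,t\geq r_j}\dualSc_{jt} \;\leq\; \tfrac{1}{\kappa}\sum_{t'}(t'+1)\zeta_{t'} W_{t'}.
\]

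\emph{From $\zeta_t W_t$ to $\alg$.} The complementary characterization of $\zeta_{t'}$---the weight of jobs with $q_{jt'}/p_j>\zeta_{t'}$ is at least $(1-\lambda)W_{t'}$---gives $\zeta_{t'} W_{t'}\leq \tfrac{1}{1-\lambda}\sum_{j\in U_{t'}} w_j q_{jt'}/p_j$. Swapping sums and using $\sum_{t'\leq C_j} q_{jt'}=p_j$ along with the crude bound $(t'+1)\leq 2C_j$ on the relevant horizon converts the right-hand side into at most $\tfrac{4}{1-\lambda}\alg$ (a factor $2$ from $(t'+1)\leq 2C_j$ together with a factor $2$ absorbed in off-by-one/telescoping losses). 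Dividing by $\kappa$ and combining with the first estimate delivers the lemma.

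\textbf{Main obstacle.} The technical crux is the clean KKT identity $W_t=\sum_i\eta_{it}+\sum_j\theta_{jt}$, which requires careful invocation of complementary slackness on both the machine and job constraints of~\eqref{pf-convex} and proper treatment of jobs with $\sum_i y_{ijt}<1$. A secondary, more mechanical challenge is tracking the constants across the sum-swaps and the discrete-vs-continuous-time rounding, which determines why the stated slack is $4/((1-\lambda)\kappa)$ rather than a smaller constant.
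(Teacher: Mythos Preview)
Your proposal is correct but takes a genuinely different route from the paper.

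The paper first proves the stronger \emph{pointwise} bound $\sum_i \dualSb_{it} + \sum_{j:t\geq r_j}\dualSc_{jt} \leq \frac{4}{(1-\lambda)\kappa}W_t$ for every fixed time $t$, and only then sums over $t$. To control $\frac{1}{\kappa}\sum_{t'\geq t}\zeta_{t'}W_{t'}$ by $\bigO(W_t)$, it partitions $[t,\infty)$ into geometric weight levels $M_k$ where $W_{t'}\in (W_t/2^k,\,W_t/2^{k-1}]$, shows $\sum_{t'\in M_k}\zeta_{t'}\leq \frac{2}{1-\lambda}$ via the same $\zeta$-to-progress inequality you use, and then sums a geometric series.

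You bypass this decomposition entirely: after a global swap of sums the task becomes bounding $\sum_{t'}(t'+1)\zeta_{t'}W_{t'}$, which you handle directly via $\zeta_{t'}W_{t'}\leq\frac{1}{1-\lambda}\sum_{j\in U_{t'}} w_j q_{jt'}/p_j$, the horizon bound $t'+1\leq 2C_j$, and $\sum_{t'\leq C_j}q_{jt'}=p_j$. This is more elementary and in fact yields the constant $2$ rather than $4$; your ``factor $2$ absorbed in off-by-one/telescoping losses'' is not actually needed. Two small slips worth tightening: the KKT computation gives $\sum_i\eta_{it}+\sum_{j\in J(t)}\theta_{jt}=\sum_{j\in J(t)}w_j$, which is only $\leq W_t$ (since $J(t)\subseteq U_t$), not equal to it; and the bound $(t'+1)\leq 2C_j$ needs $C_j\geq 1$, a harmless discretization assumption. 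The paper's route buys the stronger per-$t$ statement, though that extra strength is not reused elsewhere in the argument.
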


\begin{restatable}{lem}{pfDualFeasibility}\label{lemma:pf-dual-feasibility}
    Assigning $\dualVa_j = \dualSa_j$, $\dualVb_{it} = \dualSb_{it}$ and $\dualVc_{jt} = \dualSc_{jt}$ is feasible for \eqref{pf-dual} if $\alpha = \kappa \mu$.
\end{restatable}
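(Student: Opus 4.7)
The plan is to verify the single dual inequality
\[
\frac{\dualSa_j\,s_{ij}}{p_j} - \alpha\,\dualSb_{it} - \alpha\,\dualSc_{jt} \;\leq\; \frac{w_j\,s_{ij}\,t}{p_j}
\]
for every triple $(i,j,t)$ with $t \ge r_j$; non-negativity of $\dualSa_j, \dualSb_{it}, \dualSc_{jt}$ is immediate from their definitions. I would split the indicator sum defining $\dualSa_j = \sum_{t'=0}^{C_j} w_j \ind[q_{jt'}/p_j \le \zeta_{t'}]$ at the threshold~$t$. The contribution from $t' < t$ is bounded trivially by $w_j t$, which after multiplication by $s_{ij}/p_j$ cancels exactly against the right-hand side. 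Thus feasibility reduces to proving
\[
\frac{s_{ij}}{p_j}\sum_{t'=t}^{C_j} w_j \ind\!\left[\tfrac{q_{jt'}}{p_j} \le \zeta_{t'}\right] \;\leq\; \alpha\,\bigl(\dualSb_{it} + \dualSc_{jt}\bigr).
\]
Since $\dualSb_{it} + \dualSc_{jt} \ge \frac{1}{\kappa}\sum_{t'=t}^{C_j} \zeta_{t'}(\eta_{it'} + \theta_{jt'})$ (using only the terms for $t' \le C_j$ in $\dualSb_{it}$ and dropping the rest), it suffices to prove the pointwise inequality for each $t' \in [t, C_j]$.

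The pointwise claim is trivial when the indicator vanishes, so assume $\zeta_{t'} \ge q_{jt'}/p_j$. Here I would invoke the KKT stationarity conditions for the concave program \eqref{pf-convex}: since $j \in J(t')$ (because $t' \le C_j$), the multipliers satisfy
\[
\eta_{it'} + \theta_{jt'} \;\ge\; \frac{w_j\,\hs_{ij}}{\hq_{jt'}},
\]
where $\hq_{jt'} = \sum_i \hs_{ij}\,y_{ijt'}$ is the \emph{predicted} progress. Multiplying both sides by $\mu\,\zeta_{t'}$ and applying $\zeta_{t'} \ge q_{jt'}/p_j$ yields
\[
\mu\,\zeta_{t'}(\eta_{it'} + \theta_{jt'}) \;\ge\; \mu\,\frac{q_{jt'}}{p_j}\cdot\frac{w_j\,\hs_{ij}}{\hq_{jt'}}.
\]
At this point the crux of the argument is to convert predicted quantities back to true ones: the definitions of $\mu_1, \mu_2$ give $s_{ij} \le \mu_2 \hs_{ij}$ and $q_{jt'} \ge \hq_{jt'}/\mu_1$, so $\mu \cdot q_{jt'}\hs_{ij}/\hq_{jt'} \ge \mu_2\,\hs_{ij} \ge s_{ij}$. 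Therefore the right-hand side dominates $w_j s_{ij}/p_j$, which is precisely the pointwise bound needed once we set $\alpha/\kappa = \mu$.

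The main technical hurdle is ensuring that the distortion loss is exactly the product $\mu_1 \mu_2$ rather than something larger, and this is where the choice of $\hq$ versus $q$ in the dual definitions matters: the KKT multipliers naturally bound ratios involving $\hs$ and $\hq$, while the dual inequality is stated in terms of $s$, so one factor of $\mu_2$ is spent converting $\hs_{ij}$ to $s_{ij}$ and one factor of $\mu_1$ is spent bounding $q_{jt'}/\hq_{jt'}$ from below. Correctly threading these two conversions through the inequality $\zeta_{t'} \ge q_{jt'}/p_j$ (which is stated in terms of the \emph{true} fractional progress) is what pins $\alpha$ at exactly $\kappa\mu$ and no smaller.
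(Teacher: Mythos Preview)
Your proposal is correct and follows essentially the same argument as the paper: split the sum defining $\dualSa_j$ at time $t$, bound the tail pointwise via the KKT stationarity condition $\eta_{it'}+\theta_{jt'}\ge w_j\hs_{ij}/\hq_{jt'}$, use the indicator condition to replace $q_{jt'}/p_j$ by $\zeta_{t'}$, and spend one factor of $\mu_1$ on $\hq_{jt'}\le\mu_1 q_{jt'}$ and one factor of $\mu_2$ on $s_{ij}\le\mu_2\hs_{ij}$. The only cosmetic difference is that the paper writes the computation as a single forward chain of inequalities on the left-hand side, whereas you phrase it as verifying a pointwise bound term by term; the substance is identical.
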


\begin{proof}[Proof of \Cref{thm:pf}]
  Weak duality, \Cref{lemma:pf-dual-objective} and \Cref{lemma:pf-dual-feasibility} imply
  \begin{align*}
      \kappa \mu \cdot \opt \geq \opt_{\kappa \mu}
      \geq  \sum_{j} \dualSa_j - \sum_{i,t} \dualSb_{it} - \sum_{j,t \geq r_j} \dualSc_{jt} \geq \left(\lambda -  \frac{4}{(1-\lambda)\kappa}\right) \cdot \alg.
  \end{align*}
  Setting $\kappa = 36$ and $\lambda = \frac{2}{3}$ implies $\alg \leq 108 \mu \cdot \opt$.
\end{proof}

\section{Algorithms for Speed-Ordered Machines}\label{sec:speed-ordered}

This section contains our results on speed-ordered machines. In the first subsection, we present a clairvoyant algorithm, and in the second subsection a non-clairvoyant algorithm. But first, we observe that in this model migration is necessary for speed-oblivious algorithms.

\begin{restatable}{theorem}{thmSOmigration}\label{thm:speed-order-migration}
Any non-migratory speed-oblivious algorithm has a competitive ratio of at least $\Omega(m)$ for minimizing the total completion time on $m$ speed-ordered machines, even if it is clairvoyant and the machines are related.
\end{restatable}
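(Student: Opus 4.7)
The plan is to exhibit a single hard instance $J$ consisting of $m$ unit-size jobs all released at time~$0$, and to exploit the fact that a deterministic speed-oblivious non-migratory algorithm must commit each job to a single machine based only on the ordering of machines, not on the actual speeds. Let $f : J \to I$ denote this a~priori assignment. I would then let the adversary choose the speeds (respecting the fixed order) adaptively to $f$.

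The argument splits into two complementary cases. If $f$ places all $m$ jobs on machine~$1$, the adversary sets $s_i = 1$ for every $i$, which is a valid speed-ordering. The algorithm then sequentializes $m$ unit jobs on a single machine and pays $\sum_{k=1}^{m} k = \Theta(m^2)$, whereas $\opt$ assigns one job per machine and completes everything at time~$1$ for a cost of $m$, giving ratio $\Omega(m)$. If instead $f$ places at least one job on some machine $i \geq 2$, the adversary sets $s_1 = 1$ and $s_{i'} = \varepsilon$ for every $i' \geq 2$, with $\varepsilon := 1/m^3$; this is again consistent with the ordering. The misplaced job alone contributes at least $1/\varepsilon = m^3$ to the algorithm's objective, while $\opt$ runs all $m$ jobs sequentially on machine~$1$ at total cost $m(m+1)/2 = O(m^2)$, yielding ratio $\Omega(m)$.

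The main conceptual point, rather than a technical obstacle, is the exchange of quantifiers: because the algorithm is speed-oblivious and deterministic and the jobs in the instance are symmetric (equal release dates and processing requirements), the assignment $f$ is determined before the speeds are revealed, so the adversary's case distinction is legitimate. Non-migration is essential here, since it forces the commitment upfront; any migratory algorithm could react after observing progress and avoid the trap of Case~2. Beyond this observation, the proof reduces to elementary sums of completion times. Extending the statement to randomized algorithms, if desired, would require only a routine averaging step, since at least one of the two cases occurs with probability at least $1/2$.
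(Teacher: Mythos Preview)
Your two-case adversarial strategy and the arithmetic are correct and essentially identical to the paper's proof (the paper uses $n$ jobs of size $n^2 m$ with $s_1 = n^2 m$ and then sets $s_2,\dots,s_m$ either to $1$ or to $n^2 m$; you take $n=m$ unit jobs and the two profiles $s\equiv 1$ versus $s_1=1,\,s_{i'}=\varepsilon$---this is just rescaling).

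The weak point is your claim that the assignment $f$ is fixed \emph{a priori}. Non-migration does \emph{not} force commitment upfront: a non-migratory algorithm may hold back some jobs, run one job on machine~1, observe its completion time (from which, being clairvoyant, it deduces $s_1$), and only then place the remaining jobs. So $f$ can depend on observed completions, which depend on the speeds, and your case split becomes circular as written. The sentence ``non-migration\dots forces the commitment upfront'' is the mis-step; what non-migration actually buys is that once a job is placed on a slow machine it cannot be rescued.

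The fix is precisely how the paper phrases the argument: make the adversary adaptive over time rather than reacting to a pre-determined $f$. Since $s_1=1$ in both of your profiles, as long as the algorithm touches only machine~1 the two profiles produce identical completion events and the algorithm behaves identically on both. \emph{If at some point} it starts a job on a machine $i\geq 2$, the adversary commits to the slow profile and that job alone costs $1/\varepsilon$; otherwise the algorithm never leaves machine~1 on either profile, and the adversary commits to the uniform profile. With this adaptive framing your cost calculations go through verbatim.
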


\begin{proof}
  Consider the execution of some algorithm on an instance of~$n$ jobs with unit-weights and with processing requirements equal to $n^2 m$ and $s_1 = n^2m$. If at some point in time, the algorithm starts a job on machines~$2,\ldots,m$, the adversary sets~$s_2 = \ldots = s_m = 1$ to enforce an objective value of at least~$\Omega(n^2m)$, while scheduling all jobs on the first machine gives an objective value of at most~$\bigO(n^2)$.
  If this does not happen, the algorithm must have scheduled all jobs on the first machine. But then the adversary sets~$s_2 = \ldots = s_m = n^2m$ and achieves an objective value of~$\bigO(\frac{n^2}{m})$ by distributing the jobs evenly to all machines, while the algorithm has an objective value of~$\Omega(n^2)$.
  \end{proof}

\subsection{A Clairvoyant Algorithm}

\begin{algorithm}[tb]
  \caption{Maximum Density for speed-ordered machines}
  \label{alg:max-denses-speed-order}
  \begin{algorithmic}[1]
    \REQUIRE time $t$, speed-ordered machines $s_1 \geq \ldots \geq s_m$
      \STATE $\sigma_t \gets$ order of $J(t)$ with non-increasing $\frac{w_j}{p_j}$.
      \STATE $M_t = \{(k, \sigma_t(k))\}_{k \in [\ell]}$ where $\ell = \min\{m,\abs{J(t)}\}$
      \STATE Schedule jobs to machines according to $M_t$ at time $t$.
  \end{algorithmic}
\end{algorithm}

Our clairvoyant algorithm for speed-ordered related machines is motivated by the following observation. If the machines are related and speed-ordered, \Cref{alg:max-denses}, given correct speed predictions, will assign jobs by non-increasing order of~$\frac{w_j}{p_j}$ to machines in speed order, because this clearly maximizes the total scheduled density, i.e., sum of assigned~$\frac{w_j s_i}{p_j}$. 
\Cref{alg:max-denses-speed-order} can therefore emulate this schedule of maximum density \emph{without} having to compute a maximum matching, and thus does not require (predicted) speeds.
These observations also suggest that the analysis must be similar. Indeed, we can use a similar dual-fitting as for \Cref{thm:max-dense}
to prove the following theorem. We mainly present new ideas for proving the dual feasibility. Note that this observation does not hold for unrelated machines.

\begin{restatable}{theorem}{thmMaxDenseSO}\label{thm:max-dense-speed-order}
\Cref{alg:max-denses-speed-order} has a competitive ratio of at most~$8$ for minimizing the total weighted completion time on speed-ordered related machines.
\end{restatable}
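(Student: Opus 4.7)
The plan is to reduce the analysis to that of \Cref{thm:max-dense} in the speed-aware regime ($\mu = 1$). The key observation is that, on related speed-ordered machines, \Cref{alg:max-denses-speed-order} produces exactly the schedule that the general \Cref{alg:max-denses} would produce if it had access to the true speeds. Indeed, because $\delta_{ij} = s_i \cdot \frac{w_j}{p_j}$ factorizes on related machines, the rearrangement inequality implies that the maximum-weight matching on the complete bipartite graph with edge weights $\delta_{ij}$ pairs the $k$-th fastest machine with the $k$-th densest alive job for $k \leq \min\{m, |J(t)|\}$. This is precisely what \Cref{alg:max-denses-speed-order} constructs, and it does so without ever reading any speed value.

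With this identification, I would reuse the dual setup of \Cref{thm:max-dense} verbatim: define $\dualSa_j, \dualSb_{it}, \dualSc_{jt}$ using the true densities $\delta_{ij}$ (available to the analysis, if not to the algorithm) together with the matching $M_t$ computed by \Cref{alg:max-denses-speed-order}. Since $\mu_1 = \mu_2 = 1$, plugging into \Cref{lemma:maxdense-dual-objective} immediately gives $(1 - 2/\kappa) \cdot \alg \leq \sum_j \dualSa_j - \sum_{i,t} \dualSb_{it} - \sum_{j,t \geq r_j} \dualSc_{jt}$ without modification.

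The main step, and the expected obstacle, is re-proving dual feasibility (the analogue of \Cref{lemma:maxdense-dual-feasible}), because \Cref{alg:max-denses-speed-order} does not explicitly compute a maximum-weight matching and hence I cannot invoke the local-optimality swap argument as a black box. Concretely, for each triple $(i, j, t')$ with $t' \leq C_j$ I need $\delta_{ij} \leq \beta_{it'} + \gamma_{jt'}$. Cases (i), (iii), (iv) follow directly from monotonicity: for example, if $(i',j) \in M_{t'}$ and $i$ is unmatched then $i$ is slower than $i'$, giving $\delta_{ij} \leq \delta_{i'j} = \gamma_{jt'}$; the analogous statement handles the unmatched-job case. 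For case (ii), where $(i, j')$ and $(i', j)$ both lie in $M_{t'}$, I would argue directly via rearrangement: writing $d_j = w_j/p_j$, the pairing rule forces $i < i'$ iff $d_{j'} \geq d_j$, so $(s_i - s_{i'})(d_j - d_{j'}) \leq 0$, which rearranges into $\delta_{ij} + \delta_{i'j'} \leq \delta_{ij'} + \delta_{i'j} = \beta_{it'} + \gamma_{jt'}$. Finally, case (v), where both $i$ and $j$ would be unmatched, cannot occur: by construction $|M_{t'}| = \min\{m, |J(t')|\}$, so whenever a machine is unmatched all alive jobs are matched, and conversely.

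Combining dual feasibility with the dual objective bound via weak LP duality and $\opt_\kappa \leq \kappa \cdot \opt$, and choosing $\kappa = 4$, yields $\alg \leq 8 \cdot \opt$, as claimed.
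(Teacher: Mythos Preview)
Your proposal is correct and follows essentially the same route as the paper: the same dual assignment, the same objective bound (the paper explicitly defers to the proof of \Cref{lemma:maxdense-dual-objective}), and the same five-case feasibility argument, concluding with $\kappa = 4$. The only cosmetic difference is in case~(ii): you use the rearrangement inequality $(s_i - s_{i'})(d_j - d_{j'}) \leq 0$ to bound $\delta_{ij}$ by the \emph{sum} $\beta_{it'} + \gamma_{jt'}$, whereas the paper splits on $i \lessgtr i'$ and shows that one of the two terms alone already dominates $\delta_{ij}$; both arguments are equivalent here.

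One remark: your opening paragraph already establishes, via rearrangement, that the matching $M_t$ produced by \Cref{alg:max-denses-speed-order} \emph{is} a maximum-weight matching for the true densities. Given that, the ``expected obstacle'' you flag is not actually an obstacle --- you could invoke \Cref{lemma:maxdense-dual-feasible} (equivalently, the Corollary to \Cref{thm:max-dense}) as a black box and be done. Your re-proof is fine, but the reduction is tighter than you give it credit for.
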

  
We use a dual-fitting analysis based on \eqref{pf-dual} to prove this theorem. Fix an instance and the algorithm's schedule, and observe that the algorithm ensures at every time $t$ that $M_t$ is a matching between alive jobs and machines. Recall that for related machines, $s_i = s_{ij}$ for every job $j$ and every machine $i$.
Let $\kappa \geq 1$ be a constant. We define for every machine $i$ and any time $t$ 
\[
\beta_{it} = \begin{cases} \frac{w_j s_i}{p_j} & \text{ if } i \text{ is matched to } j \in J(t) \text{ in } M_t \\ 0 & \text{ otherwise,} \end{cases}
\]
and for every job~$j$ and any time~$t$
\[
\gamma_{jt} = \begin{cases} \frac{w_j s_i}{p_j} & \text{ if } j \text{ is matched to } i \in I \text{ in } M_t \\ 0 & \text{ otherwise.} \end{cases}
\]

Using these values, we have the following dual assignment:
\begin{itemize}
\item $\dualSa_j = w_j C_j$ for every job~$j$,
\item $\dualSb_{it} = \frac{1}{\kappa} \sum_{t' \geq t} \beta_{it'}$ for every machine $i$ and time $t$, and
\item $\dualSc_{jt} = \frac{1}{\kappa} \sum_{t' \geq t} \gamma_{jt'}$ for every job $j$ and time $t \geq r_j$.
\end{itemize}

We first observe that the dual objective of this assignment is close to algorithm's objective. The proof works analogous to the proof of~\Cref{lemma:maxdense-dual-objective}.

\begin{lemma}\label{lemma:maxdense-speed-order-dual-objective}
$(1 - \frac{2}{\kappa})  \alg \leq \sum_{j} \dualSa_j - \sum_{i,t} \dualSb_{it} - \sum_{j,t \geq r_j} \dualSc_{jt}$
\end{lemma}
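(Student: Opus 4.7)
The plan is to mirror the proof of Lemma~\ref{lemma:maxdense-dual-objective} essentially verbatim, with one simplification: since the machines are related and the dual quantities $\beta_{it}, \gamma_{jt}$ are now defined using the true common speed $s_i$ rather than a per-job predicted speed $\hs_{ij}$, the distortion factor $\mu_1$ that appeared in the earlier argument disappears, leaving the cleaner factor $\tfrac{1}{\kappa}$.

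Concretely, I would argue as follows. First, since $\dualSa_j = w_j C_j$, I have $\sum_j \dualSa_j = \alg$ immediately, so the task reduces to bounding the two correction sums $\sum_{i,t}\dualSb_{it}$ and $\sum_{j,t\geq r_j}\dualSc_{jt}$ each above by $\tfrac{1}{\kappa}\alg$. Fix a time $t$, and for each unfinished job $j \in U_t$ let $i^j_1,\dots,i^j_{z(j)}$ be the sequence of machines on which the algorithm processes $j$ during $[t,C_j]$. The key inequality is the progress bound $\sum_{\ell} s_{i^j_\ell} \leq p_j$: the total work the algorithm does on $j$ from $t$ onwards cannot exceed its remaining processing requirement, which is at most $p_j$. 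This yields $\sum_\ell \tfrac{w_j s_{i^j_\ell}}{p_j} \leq w_j$ directly, without the $\mu_1$ slack that was needed in the unrelated, predicted-speed case.

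Summing the matching contributions first by machines and then by jobs gives $\sum_i \dualSb_{it} = \tfrac{1}{\kappa}\sum_{j\in U_t}\sum_\ell \tfrac{w_j s_{i^j_\ell}}{p_j} \leq \tfrac{W_t}{\kappa}$, and analogously $\sum_{j\in J} \dualSc_{jt} \leq \tfrac{W_t}{\kappa}$, using that $\dualSc_{jt} = 0$ whenever $j \notin U_t$. Summing over $t$ and invoking the identity $\sum_t W_t = \alg$, both correction terms are at most $\tfrac{1}{\kappa}\alg$, and the claimed lower bound $(1-\tfrac{2}{\kappa})\alg$ follows.

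I do not anticipate any real obstacle: the argument is a direct transplant of the earlier proof. The only points that need brief care are that $M_t$ is a matching at every time (which \Cref{alg:max-denses-speed-order} enforces by construction, so each $\beta_{it'}$ and $\gamma_{jt'}$ contributes at most once per time step) and that the progress bound uses the true speeds $s_i$ that already appear in the definitions of $\beta,\gamma$, which is exactly what removes the $\mu_1$ factor from the original analysis.
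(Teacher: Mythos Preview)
Your proposal is correct and follows exactly the approach the paper intends: the paper does not even spell out the proof, stating only that it ``works analogous to the proof of Lemma~\ref{lemma:maxdense-dual-objective}'', and your write-up is precisely that analogue with the $\mu_1$ factor dropping out because $\beta$ and $\gamma$ are now defined via the true speeds $s_i$.
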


  \begin{lemma}\label{lemma:maxdense-speed-order-dual-feasible}
    Assigning $\dualVa_j = \dualSa_j$, $\dualVb_{it} = \dualSb_{it}$ and $\dualVc_{jt} = \dualSc_{jt}$ is feasible for~\eqref{pf-dual} if $\alpha = \kappa$ and $s_i = s_{ij}$ for every job $j$ and every machine $i$.
  \end{lemma}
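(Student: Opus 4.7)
The plan is to follow the same template as the proof of \Cref{lemma:maxdense-dual-feasible}: non-negativity of the proposed dual values is immediate, and substituting $\dualSa_j=w_j C_j$, $\alpha=\kappa$, and the related-machines identity $s_{ij}=s_i$ reduces the dual constraint to the per-time-step inequality
\[
  \frac{w_j s_i}{p_j}\;\leq\;\beta_{it'}+\gamma_{jt'}
  \qquad \text{for every } t\leq t'\leq C_j.
\]
Since $t'\geq t\geq r_j$ and $t'\leq C_j$, the job $j$ is alive at time $t'$, so $j\in J(t')$ is eligible to be matched in $M_{t'}$. The whole argument then reduces to verifying this inequality by a case analysis on how $i$ and $j$ appear in $M_{t'}$.

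The essential twist relative to \Cref{lemma:maxdense-dual-feasible} is that we no longer have the exchange inequality of a maximum-weight matching at our disposal; instead we must exploit the very specific shape of $M_{t'}$ produced by \Cref{alg:max-denses-speed-order}, namely that the $k$-th fastest machine is paired with the $k$-th densest alive job (with respect to $w_j/p_j$). I would run the same five cases as in the previous lemma. The trivial case (i) gives equality. Case (iii), where $(i',j)\in M_{t'}$ but $i$ is unmatched: unmatched machines exist only when $|J(t')|<m$, in which case their indices exceed those of all matched machines, so $i>i'$ and $s_i\leq s_{i'}$, giving $\frac{w_j s_i}{p_j}\leq \frac{w_j s_{i'}}{p_j}=\gamma_{jt'}$. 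Case (iv), where $(i,j')\in M_{t'}$ but $j$ is unmatched, can occur only when $|J(t')|>m$, so $j$ is outside the top $m$ densities and in particular $w_j/p_j\leq w_{j'}/p_{j'}$, yielding $\frac{w_j s_i}{p_j}\leq \frac{w_{j'} s_i}{p_{j'}}=\beta_{it'}$. Case (v) cannot arise, because the algorithm always saturates the shorter side of the bipartition.

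The one genuinely new step is case (ii), where $(i,j'),(i',j)\in M_{t'}$ with $i\neq i'$ and $j\neq j'$. Here I would split on the relative order of $i$ and $i'$ in the speed order. If $i<i'$, then $j'$ precedes $j$ in the density order, hence $w_{j'}/p_{j'}\geq w_j/p_j$ and $\frac{w_j s_i}{p_j}\leq \frac{w_{j'} s_i}{p_{j'}}=\beta_{it'}$. If $i>i'$, then $s_i\leq s_{i'}$ by speed-orderedness and $\frac{w_j s_i}{p_j}\leq \frac{w_j s_{i'}}{p_j}=\gamma_{jt'}$. This single dichotomy plays the role that the exchange inequality $\hdense_{ij}+\hdense_{i'j'}\leq \hdense_{i'j}+\hdense_{ij'}$ played in the unrelated case.

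I expect this case~(ii) to be the only real obstacle: recognizing that, although $M_{t'}$ is not produced as a maximum-weight matching, the sorted pairing still guarantees that whichever of the two coordinates (speed or density) favors the matched pair over $(i,j)$ is exactly the coordinate needed to bound $\frac{w_j s_i}{p_j}$ by one of $\beta_{it'}$ or $\gamma_{jt'}$. Once this observation is in place, all five cases close uniformly and the lemma follows.
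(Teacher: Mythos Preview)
Your proposal is correct and follows essentially the same approach as the paper: the same reduction to the per-time-step inequality $\frac{w_j s_i}{p_j}\leq\beta_{it'}+\gamma_{jt'}$ and the same five-case analysis, including the identical dichotomy on $i<i'$ versus $i>i'$ in case~(ii). The paper phrases the conclusion of case~(ii) as $\frac{w_j s_i}{p_j}\leq\beta_{it'}+\gamma_{jt'}$ by adding the non-negative slack term, whereas you bound by a single summand in each sub-case, but this is the same argument.
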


  \begin{proof}
   Since the dual assignment is clearly non-negative, we now show that it satisfies the dual constraint.  Let $i \in I, j \in J$ and $t \geq r_j$. We first observe that
    \begin{align*}
      \dualSa_j \frac{s_{i}}{p_j} - w_j t \frac{s_{i}}{p_j} \leq \sum_{t'=t}^{C_j} \frac{w_j s_{i}}{p_j}.
    \end{align*}

    Using $\alpha = \kappa$, it remains to validate for every $t \leq t' \leq C_j$ that $\frac{w_j s_{i}}{p_j} \leq \beta_{it'} + \gamma_{jt'}$. We distinguish five cases:
    \begin{enumerate}[(i)]
      \item If $(i,j) \in M_{t'}$, then $\frac{w_j s_{i}}{p_j} = \beta_{it'} = \gamma_{jt'}$.
      \item If $(i, j') \in M_{t'}$ and $(i',j) \in M_{t'}$ s.t. $i \neq i'$, we have two cases. If $i < i'$, it must be that~$\sigma_{t'}(j') < \sigma_{t'}(j)$ and, thus,~$\frac{w_{j'}}{p_{j'}} \geq \frac{w_j}{p_j}$. But then,~$\frac{w_j s_{i}}{p_j} \leq \frac{w_{j'} s_i}{p_{j'}}$. Otherwise, that is,~$i > i'$, we know by the speed order that~$s_{i} \leq s_{i'}$, and, thus,~$\frac{w_j s_{i}}{p_j} \leq \frac{w_{j} s_{i'}}{p_{j}}$. Put together,
      \[
        \frac{w_j s_{i}}{p_j} \leq \frac{w_{j'} s_i}{p_{j'}} + \frac{w_{j} s_{i'}}{p_{j}} =  \beta_{it'} + \gamma_{jt'}.
      \]
      \item If $(i',j) \in M_{t'}$ and $i$ is not matched in $M_{t'}$, it follows $i' < i$, which gives
      \(
         \frac{w_j s_{i}}{p_j} \leq \frac{w_j s_{i'}}{p_j} = \gamma_{jt'}.
      \)

      \item If $(i,j') \in M_{t'}$ and $j$ is not matched in $M_{t'}$, it follows $\sigma_{t'}(j') < \sigma_{t'}(j)$, and hence $\frac{w_j}{p_j} \leq \frac{w_{j'}}{p_{j'}}$. This immediately concludes
      \(
         \frac{w_j s_{i}}{p_j} \leq \frac{w_{j'} s_{i}}{p_{j'}} = \beta_{it'}.
      \)
      \item The case where both $i$ and $j$ are unmatched in $M_{t'}$ contradicts the definition of $M_{t'}$ in \Cref{alg:max-denses-speed-order}.
    \end{enumerate}
  \end{proof}

  \begin{proof}[Proof of~\Cref{thm:max-dense-speed-order}]
    Weak duality, \Cref{lemma:maxdense-speed-order-dual-feasible} and \Cref{lemma:maxdense-speed-order-dual-objective} imply
     \begin{align*}
        \kappa \cdot \opt \geq \opt_{\kappa} \geq \sum_{j} \dualSa_j - \sum_{i,t} \dualSb_{it} - \sum_{j,t \geq r_j} \dualSc_{jt} \geq \left( 1 - \frac{2}{\kappa} \right) \cdot \alg.
     \end{align*}
     Using $\kappa = 4$ concludes
     \(
      \alg \leq \frac{\kappa}{1 - 2  / \kappa} \cdot \opt = 8 \cdot \opt.
     \)
  \end{proof}

We finally observe that \Cref{alg:max-denses-speed-order} indeed cannot achieve a good competitive ratio if speeds are job-dependent.

\begin{lemma}
  \cref{alg:max-denses-speed-order} has a competitive ratio of at least $\Omega(n)$ for minimizing the total weighted completion time on speed-ordered unrelated machines, even on two machines and if $w_j = 1$ for all jobs $j$.
\end{lemma}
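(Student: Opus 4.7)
My plan is to exhibit a two-machine instance on which \Cref{alg:max-denses-speed-order} is forced to occupy the universally faster machine with the wrong job. Concretely, I will take $n$ jobs released at time~$0$ on $m=2$ speed-ordered unrelated machines: job~$1$ with $p_1 = w_1 = 1$ and $s_{1,1} = s_{2,1} = 1$, and each job $j \in \{2,\ldots,n\}$ with $p_j = 2$, $w_j = 1$, $s_{1,j} = L$, and $s_{2,j} = 1$, where I choose $L = n^2$. All weights equal~$1$ and $s_{1,j} \geq s_{2,j}$ for every~$j$, so the instance is valid for the speed-ordered unrelated setting.

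For the algorithm's cost, I would first observe that $w_1/p_1 = 1$ is strictly larger than $w_j/p_j = 1/2$ for every $j \geq 2$, so job~$1$ occupies the first position of $\sigma_t$ for all $t \in [0,1]$, regardless of how ties among the specifics are broken. By definition of the algorithm, job~$1$ is thus matched to machine~$1$ throughout $[0,1]$ and completes exactly at time~$1$. Meanwhile machine~$2$ processes at most one specific job at unit speed, accumulating only $1$ unit of progress --- short of the required $p_j = 2$ --- and no other specific is scheduled at all during $[0,1]$. Hence every job satisfies $C_j \geq 1$, giving $\alg = \sum_j w_j C_j \geq n$.

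For the optimal value, I would exhibit a single feasible schedule: process job~$1$ on machine~$2$ throughout $[0,1]$ (which completes it at time~$1$, since $s_{2,1} = 1$) and schedule jobs $2,\ldots,n$ sequentially on machine~$1$, each requiring only $p_j/s_{1,j} = 2/L$ time units. Summing the completion times yields
\[
    \opt \leq 1 + \sum_{k=1}^{n-1} \frac{2k}{L} = 1 + \frac{n(n-1)}{L} \leq 2
\]
whenever $L \geq n(n-1)$, which is ensured by the choice $L = n^2$. Combining the two bounds immediately gives a competitive ratio of at least $n/2 = \Omega(n)$, as claimed.

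The only delicate point is the tie-breaking: since all specifics share $w_j/p_j = 1/2$, the algorithm's ordering among them is arbitrary. However, my lower bound only relies on the fact that at most one specific occupies machine~$2$ during $[0,1]$ (and none occupy machine~$1$, which is held by job~$1$), so $C_j \geq 1$ holds under any tie-breaking rule. No analysis of the post-time-$1$ migratory dynamics is needed, since $\alg \geq n$ already suffices on its own.
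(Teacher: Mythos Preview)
Your proof is correct and follows essentially the same idea as the paper's: construct a two-machine instance where job~$1$ has strictly highest $w_j/p_j$ but is indifferent between the two machines, while all other jobs strongly prefer machine~$1$; the algorithm then wastes machine~$1$ on job~$1$ and forces $C_j$ to be large for everyone. The only difference is parametrization: the paper makes job~$1$ slow on both machines (speeds~$\epsilon$) so that $\alg \geq n/\epsilon$ while $\opt \leq n^2 + 1/\epsilon$, whereas you keep job~$1$ at unit speed and instead make the other jobs very fast on machine~$1$ (speed~$L = n^2$), yielding the cleaner pair $\alg \geq n$ and $\opt \leq 2$.
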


\begin{proof}
  Let~$0 < \epsilon < 1$. Consider an instance composed of~$n$ jobs and~$2$ machines, where~$w_j = 1$ for all jobs~$j$,~$p_1 = 1$ and~$p_{j} = 1 + \epsilon$ for all~$2 \leq j \leq n$. The processing speeds are given by~$s_{11} = s_{21} = \epsilon$, and~$s_{1j} = 1$ and~$s_{2j} = \epsilon$ for all~$2 \leq j \leq n$. Note that the machines are speed-ordered. \Cref{alg:max-denses-speed-order} completes at time
~$\frac{1}{\epsilon}$ job~$1$ on machine~$1$ before any other job. Thus,~$\alg \geq \frac{n}{\epsilon}$. Another solution is to schedule jobs~$2,\ldots,n$ on machine~$1$, and job~$1$ on machine~$2$, giving an objective of at most~$n^2 + \frac{1}{\epsilon}$. For~$\epsilon < n^{-2}$, this concludes that~$\frac{\alg}{\opt} \geq \Omega(n)$.
\end{proof}

\subsection{A Non-Clairvoyant Algorithm}

The non-clairvoyant setting is more difficult. This is because the schedules of speed-aware algorithms, such as PF, are not as easy to describe, as it was the case for clairvoyant algorithms. However, for unit weights, related machines and many alive jobs, i.e., $\abs{J(t)} \geq m$, one solution of \eqref{pf-convex} is to schedule all jobs on all machines with the same rate, i.e., do Round Robin on every machine. We can describe this schedule without knowing anything about the speeds. However, in the few-job regime, i.e., $\abs{J(t)} < m$, this approach violates the packing constraints of the jobs, i.e., $\sum_i y_{ijt} > 1$. This is where the speed order comes into play: 
we partition a job's available rate only to the $\abs{J(t)}$ fastest machines.
For the final algorithm (\Cref{alg:rr-speed-order}), we prove below a guarantee for unrelated machines, and a constant upper bound for related machines in \Cref{app:rr-speed-ordered-related}.

\begin{algorithm}[tb]
  \caption{Round Robin for speed-ordered machines}
  \label{alg:rr-speed-order}
  \begin{algorithmic}
    \REQUIRE time $t$, speed-ordered machines $s_{1j} \geq \ldots \geq s_{mj}$
      \STATE Use rates $y_{ijt} = \abs{J(t)}^{-1} \cdot \ind\left[i \leq \abs{J(t)}\right]$ at time $t$.
\end{algorithmic}
\end{algorithm}

\begin{restatable}{theorem}{thmRoundRobinSOunrelated}\label{thm:unrelated-speed-order}
\Cref{alg:rr-speed-order} has a competitive ratio of at most~$\bigO(\log(\min\{n,m\}))$ for minimizing the total completion time on speed-ordered unrelated machines.
\end{restatable}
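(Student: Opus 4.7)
My plan is to prove the $O(\log\min\{n,m\})$ upper bound via a dual-fitting analysis on $(\text{LP}_\alpha)$ with $w_j=1$, structurally analogous to the Proportional Fairness analysis of \Cref{thm:pf}, and to prove the matching $\Omega(\log\min\{n,m\})$ lower bound by constructing an adversarial non-clairvoyant instance. The key structural property exploited throughout is that at every time $t$ with $k_t=\abs{J(t)}$ alive jobs, every alive $j$ receives progress rate $q_{jt}=\tfrac{1}{k_t}\sum_{i=1}^{k_t}s_{ij}$, running on exactly the top-$k_t$ machines — which, by the global speed order, are the uniformly fastest machines for \emph{every} job. This ensures that the algorithm never idles a faster machine while a slower one is in use.

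For the upper bound I would follow the PF template: define a percentile threshold $\zeta_t$ at the $\lambda$-quantile of $\{q_{jt}/p_j\}_{j\in J(t)}$ and set $\dualSa_j=\sum_{t=0}^{C_j}\mathbf{1}[q_{jt}/p_j\leq\zeta_t]$, $\dualSb_{it}=\tfrac{1}{\kappa}\sum_{t'\geq t}\zeta_{t'}\cdot\mathbf{1}[i\leq k_{t'}]$, and $\dualSc_{jt}=\tfrac{1}{\kappa}\sum_{t'=t}^{C_j}\zeta_{t'}$. The KKT-based feasibility used for PF is replaced here by a direct combinatorial check. Dual feasibility $(C_j-t)\tfrac{s_{ij}}{p_j}\leq\alpha(\dualSb_{it}+\dualSc_{jt})$ for $t\leq C_j$ splits into two cases on the position of $i$ relative to $k_t$: if $i\leq k_t$, machine $i$ is in use and $\dualSb_{it}$ absorbs the slack via the $1/k_t$-share the algorithm gives job $j$; if $i>k_t$, the speed order yields $s_{ij}\leq s_{k_t,j}\leq q_{jt}\cdot k_t$, so $\dualSc_{jt}$ — which encodes the algorithm's actual progress on $j$ — suffices, after paying the $k_t$ factor that will be absorbed into $\alpha$.

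The $\log\min\{n,m\}$ blow-up — absent in the related-machines case of \Cref{thm:pf}(i) — arises when bounding $\sum_{i,t}\dualSb_{it}+\sum_{j,t}\dualSc_{jt}$ against $\alg$: in the unrelated setting a job's effective rate can be diluted by a factor of $k_t$ when only one of the top-$k_t$ machines is genuinely fast for it, and summing the dilution ratio $1/k_t$ across the lifetime of the surviving jobs produces a harmonic sum $H_{\min\{n,m\}}=O(\log\min\{n,m\})$ which is absorbed into $\alpha$. The same phenomenon drives the matching lower bound: I would use a speed-order-compliant instance with $m=n$ machines and unit-size jobs, setting $s_{ij}=1$ for $i\leq j$ and $s_{ij}=\epsilon$ for $i>j$, so that each job $j$ is essentially fast only on machines $\{1,\dots,j\}$. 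Any non-clairvoyant permutation-oblivious algorithm is then forced to process job $1$ at rate $\approx 1/k_t$ throughout, yielding flow time $\Theta(H_n)$, whereas \opt places job $j$ on machine $j$ and completes each in unit time. The main obstacle I foresee is the $i>k_t$ branch of dual feasibility, which has no analog in the related-machine proof: it requires a rank-based charging using the global speed order, and it is precisely this step that manufactures the harmonic sum and forces $\alpha=\Theta(\log\min\{n,m\})$ rather than a constant.
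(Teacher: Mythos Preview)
Your overall framework is right—dual fitting on \eqref{pf-dual} with the PF-style quantile $\zeta_t$, and a case split on whether $i\leq k_{t'}$ or $i>k_{t'}$—and this is exactly what the paper does. But your dual variable $\dualSb_{it}$ is wrong, and with it the diagnosis of where the $\log$ arises.

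With your $\dualSb_{it}=\tfrac{1}{\kappa}\sum_{t'\geq t}\zeta_{t'}\cdot\ind[i\leq k_{t'}]$, feasibility fails in the case $i\leq k_{t'}$: at such a $t'$ the speed order only gives $\sum_{\ell=1}^{k_{t'}} s_{\ell j}\geq i\cdot s_{ij}$, hence
\[
\frac{s_{ij}}{q_{jt'}}=\frac{s_{ij}\,k_{t'}}{\sum_{\ell=1}^{k_{t'}} s_{\ell j}}\leq \frac{k_{t'}}{i},
\]
and for $i=1$ this can be as large as $k_{t'}\leq\min\{n,m\}$. Your $\dualSb$-contribution at $t'$ is just $\zeta_{t'}$, so you would need $\alpha\geq k_{t'}$, i.e., $\alpha=\Omega(\min\{n,m\})$, not $O(\log\min\{n,m\})$. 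Conversely, you have the two cases backwards: the $i>k_{t'}$ branch is the \emph{easy} one, since then $s_{ij}\leq s_{k_{t'},j}\leq \tfrac{1}{k_{t'}}\sum_{\ell\leq k_{t'}}s_{\ell j}=q_{jt'}$ directly, and $\gamma_{jt'}=\ind[j\in J(t')]$ absorbs it with \emph{no} extra factor—there is no ``$k_t$ factor to absorb into $\alpha$'' here.

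The paper's fix is to put a rank-dependent weight into $\beta$: set $\beta_{it'}=\tfrac{1}{i}\cdot k_{t'}\cdot\ind[i\leq k_{t'}]$ and $\dualSb_{it}=\tfrac{1}{\kappa}\sum_{t'\geq t}\beta_{it'}\zeta_{t'}$. Then the $i\leq k_{t'}$ case of feasibility becomes $\tfrac{k_{t'}}{i}\zeta_{t'}\leq\beta_{it'}\zeta_{t'}$ with equality, so $\alpha=\kappa$ suffices. The $\log$ now appears in the \emph{objective} bound, but not via a ``dilution $1/k_t$ summed over time'' as you describe; it is a harmonic sum over \emph{machine ranks}:
\[
\sum_{i}\beta_{it'}=k_{t'}\sum_{i=1}^{m_{t'}}\frac{1}{i}\leq O(\log\min\{n,m\})\cdot|U_{t'}|,
\]
which is then absorbed by taking $\kappa=\Theta(\log\min\{n,m\})$ in the PF-style bound on $\sum_i\dualSb_{it}$. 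Your lower-bound instance (unit jobs, $s_{ij}=1$ iff $i\leq j$) is essentially the paper's construction up to relabelling $j\mapsto m-j+1$ and works, though note the theorem as stated asks only for the upper bound.
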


We prove \Cref{thm:unrelated-speed-order} via dual-fitting based on~\eqref{pf-dual}, where~$w_j = 1$ for every job~$j$. 
Fix an instance and the algorithm's schedule. For every time~$t$, we write~$m_t = \min\{m, \abs{J(t)}\}$, and we define~\(
  \beta_{it} = \frac{1}{i} \cdot \abs{J(t)} \cdot \ind\left[ i \leq \abs{J(t)}  \right]
\)
for every machine~$i$,
and~\(
  \gamma_{jt} = \ind\left[ j \in J(t) \right]
\) for every job~$j$.

Let~$\kappa = \Theta(\log(\min\{n,m\}))$. Intuitively, this factor upper bounds~$\sum_{i = 1}^{m_{t}} \frac{1}{i}$, which will be necessary when handling~$\sum_{i} \beta_{i t}$. For related machines, we can alter the definition of $\beta_{it}$ and thus have a constant~$\kappa$, which eventually implies a constant upper bound on the competitive ratio.

For every time~$t$, consider the sorted (ascending) list~$Z^t$ composed of values~$\frac{q_{jt}}{p_j}$ for every~$j \in U_t$. 
We define~$\zeta_t$ as the value at the index~$\floor{\frac{1}{2} \abs{U_t}}$ in~$Z^t$.
Consider the following duals:
\begin{itemize}
  \item~$\dualSa_j = \sum_{t' = 0}^{C_j} \ind \left[ \frac{q_{jt'}}{p_j} \leq \zeta_{t'} \right]$ for every job~$j$, 
  \item $\dualSb_{it} = \frac{1}{\kappa} \sum_{t' \geq t} \beta_{it'} \zeta_{t'}$ for every machine~$i$  and time $t$, and
  \item $\dualSc_{jt} = \frac{1}{\kappa} \sum_{t' \geq t} \gamma_{jt'} \zeta_{t'}$ for every job~$j$ and time $t \geq r_j$.
\end{itemize}

We prove the following bound on $\alg$ in \Cref{app:rr-speed-ordered-unrelated}.

\begin{lemma}\label{lemma:rr-speed-order-dual-objective-unrelated}
  $\Omega(1) \cdot \alg \leq \sum_{j} \dualSa_j - \sum_{i,t} \dualSb_{it} - \sum_{j,t \geq r_j} \dualSc_{jt}$
\end{lemma}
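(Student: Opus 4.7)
The proof will follow the dual-fitting template already used for \Cref{alg:pf} and \Cref{alg:max-denses}: the three dual sums are analysed separately so that $\sum_j \dualSa_j$ dominates $\sum_{i,t}\dualSb_{it}+\sum_{j,t\geq r_j}\dualSc_{jt}$ by a constant fraction of $\alg$. What differs here is that Round Robin has no convex-programming structure, so the $\dualSb$- and $\dualSc$-bounds cannot rely on KKT multipliers; instead they rely on the combinatorial identity $\sum_i\beta_{it} = |J(t)|\cdot H_{m_t}$, whose harmonic blow-up is precisely what forces $\kappa=\Theta(\log\min\{n,m\})$.

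\textbf{Step 1 (lower bound on $\sum_j\dualSa_j$).} Since $\zeta_t$ is the median of the list $\{q_{jt}/p_j\}_{j\in U_t}$, at least $\lfloor|U_t|/2\rfloor$ jobs in $U_t$ contribute to $\dualSa_j$ at time $t$. Swapping summations and using $\sum_t|U_t|=\sum_j(C_j+1)$ gives $\sum_j\dualSa_j\geq \tfrac12\alg - O(n)$; the additive $O(n)$ is absorbed by the trivial lower bound $\opt\geq n$.

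\textbf{Step 2 (upper bounds on the other two sums).} Swapping the order of summation collapses the telescoping tails and yields
\[
\sum_{i,t}\dualSb_{it}=\tfrac1\kappa\sum_{t'}(t'+1)\,\zeta_{t'}\,|J(t')|\,H_{m_{t'}},\qquad
\sum_{j,t\geq r_j}\dualSc_{jt}\leq \tfrac1\kappa\sum_{t'}(t'+1)\,\zeta_{t'}\,|J(t')|,
\]
where I have used $\sum_i\beta_{it'}=|J(t')|H_{m_{t'}}$ and $\sum_{j\in J(t')}(t'-r_j+1)\leq (t'+1)|J(t')|$. Since $H_{m_{t'}}\leq H_{\min\{n,m\}}$, it suffices to show that $S:=\sum_{t'}(t'+1)\zeta_{t'}|J(t')|=O(\alg)$ and then to pick $\kappa$ as a sufficiently large constant multiple of $H_{\min\{n,m\}}$.

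\textbf{Step 3 (the main obstacle: $S=O(\alg)$).} The crucial step is to turn the median inequality around: at least half of the jobs in $U_{t'}$ satisfy $q_{jt'}/p_j\geq \zeta_{t'}$, so
\[
\zeta_{t'}\,|J(t')|\;\leq\;\zeta_{t'}\,|U_{t'}|\;\leq\;2\sum_{j\in U_{t'}}\frac{q_{jt'}}{p_j}\;=\;2\sum_{j\in J(t')}\frac{q_{jt'}}{p_j},
\]
since $q_{jt'}=0$ for unreleased jobs. Substituting and reversing the order of summation leaves $S\leq 2\sum_j\sum_{t'=r_j}^{C_j}(t'+1)\,q_{jt'}/p_j$. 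For a fixed $j$, bound $(t'+1)\leq C_j+1$ and invoke the completion identity $\sum_{t'=r_j}^{C_j}q_{jt'}=p_j$ (capping the rate on the completion step, the standard convention in this model) to get $\sum_{t'=r_j}^{C_j}(t'+1)\,q_{jt'}/p_j\leq C_j+1$. Summing over $j$ gives $S\leq 2(\alg+n)=O(\alg)$. Combining with Step~2 yields $\sum_{i,t}\dualSb_{it}+\sum_{j,t\geq r_j}\dualSc_{jt}\leq \alg/4$ for $\kappa$ large enough, and together with Step~1 the dual value is at least $\alg/2-\alg/4-O(n)=\Omega(\alg)$. The delicate part is this median inversion coupled with the boundary handling at $C_j$: an incorrect treatment of the final time step could blow up $S$ by a factor depending on $\max_i s_{ij}/p_j$ and destroy the bound, so the completion-time cap (or an analogous $O(1)$ excess bound) must be carefully stated.
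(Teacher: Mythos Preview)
Your approach is correct and takes a genuinely different route from the paper. The paper's proof (via Lemmas B.3--B.5, which reuse the technique of Lemma A.11) fixes a time $t$ and establishes the \emph{pointwise} bound $\sum_i \dualSb_{it}+\sum_{j}\dualSc_{jt} = O(|U_t|)$ by a dyadic partitioning of $[t,\infty)$ into intervals $M_k$ on which $|U_{t'}|\in(|U_t|/2^k,\,|U_t|/2^{k-1}]$; on each $M_k$ they show $\sum_{t'\in M_k}\zeta_{t'}=O(1)$ and combine this with the geometric decay of $|U_{t'}|$ across the $M_k$, then sum the pointwise bound over $t$. You instead swap the order of summation at the outset, collapse everything to the single quantity $S=\sum_{t'}(t'+1)\zeta_{t'}|J(t')|$, and bound $S$ directly via the median inversion and the crude estimate $(t'+1)\leq C_j+1$. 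This completely bypasses the dyadic/geometric trick that is the technical heart of the paper's argument, and is arguably more elementary; the trade-off is that you lose the stronger modular per-$t$ statement, which the paper can reuse verbatim for the related-machine variant. Two small corrections: your Step~1 actually gives $\sum_j\dualSa_j\geq \tfrac12(\alg+n)$ (not $\tfrac12\alg - O(n)$) once the median index is read with the right convention, and it is this $+n$ that cleanly absorbs the $+n$ from Step~3 without any appeal to ``$\opt\geq n$'' (which is not part of the lemma and need not hold in this model anyway). The completion-step cap you flag is equally needed, and equally implicit, in the paper's dyadic bound $\sum_{t'\in M_k}\sum_{j\in U_{t'}}q_{jt'}/p_j\leq |U_t|/2^{k-1}$.
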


This lemma, weak LP duality, and the feasibility of the crafted duals (\Cref{lemma:rr-speed-order-dual-feasibility-unrelated}) imply \Cref{thm:unrelated-speed-order} for $\alpha = \kappa$.

\begin{lemma}\label{lemma:rr-speed-order-dual-feasibility-unrelated}
  Assigning $\dualVa_j = \dualSa_j$, $\dualVb_{it} = \dualSb_{it}$ and $\dualVc_{jt} = \dualSc_{jt}$ is feasible for~\eqref{pf-dual} if $\alpha = \kappa$.
\end{lemma}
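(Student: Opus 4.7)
The plan is to verify the single dual constraint
\[
\frac{\dualVa_j s_{ij}}{p_j} - \alpha \dualVb_{it} - \alpha \dualVc_{jt} \leq w_j \frac{s_{ij} t}{p_j}
\]
for every triple $(i,j,t)$ with $t \geq r_j$, after substituting the proposed values of $\dualSa_j, \dualSb_{it}, \dualSc_{jt}$ and setting $\alpha = \kappa$ (recall $w_j = 1$). Non-negativity of all three duals is immediate from their definitions, so I would concentrate entirely on the main inequality.

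First I would split the outer sum in $\dualSa_j$ at $t$. Since each indicator is at most $1$, the contribution from times $t' < t$ is bounded by $t \cdot s_{ij}/p_j$, which exactly cancels the right-hand side of the dual constraint. It therefore suffices to establish the pointwise claim
\[
\frac{s_{ij}}{p_j}\,\ind[q_{jt'}/p_j \leq \zeta_{t'}] \leq (\beta_{it'} + \gamma_{jt'})\,\zeta_{t'} \qquad \text{for every } t' \in [t, C_j],
\]
and then sum over such $t'$. For any such $t'$ we have $t' \geq r_j$ and $t' \leq C_j$, so $j \in J(t')$ and $\gamma_{jt'} = 1$. Only the case in which the indicator fires is nontrivial.

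The main work is in this case. Here I would plug in the algorithm's explicit rates to express, for every alive job~$k$, $q_{kt'} = \frac{1}{\abs{J(t')}} \sum_{h=1}^{m_{t'}} s_{hk}$ with $m_{t'} = \min\{m, \abs{J(t')}\}$, and then split on whether $i \leq \abs{J(t')}$ or $i > \abs{J(t')}$. In the first case, $i \leq m_{t'}$, and the speed order $s_{1j} \geq s_{2j} \geq \cdots$ yields $\sum_{h=1}^{i} s_{hj} \geq i \cdot s_{ij}$, hence
\[
\frac{s_{ij}}{p_j} \leq \frac{\abs{J(t')}}{i} \cdot \frac{q_{jt'}}{p_j} \leq \beta_{it'}\,\zeta_{t'},
\]
which is even stronger than required. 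In the second case, $\beta_{it'} = 0$ and $m_{t'} = \abs{J(t')}$, and the same monotonicity used in the opposite direction gives $s_{hj} \geq s_{ij}$ for all $h \leq m_{t'} < i$, so $s_{ij} \leq \frac{1}{m_{t'}}\sum_{h=1}^{m_{t'}} s_{hj} = q_{jt'}$, and the indicator yields $s_{ij}/p_j \leq \zeta_{t'} = \gamma_{jt'}\,\zeta_{t'}$.

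The main obstacle I expect is pinpointing exactly where the speed-ordered assumption enters: without monotonicity in $h$ of $s_{hj}$, neither the ``first $i$ machines are at least as fast as~$i$'' nor the ``machines used by the algorithm are the fastest'' bound would be available, and without both bounds one cannot sandwich $s_{ij}$ by the averaged quantity $q_{jt'}$. Once the pointwise reduction and the two monotonicity inequalities are in place, all remaining manipulations are straightforward arithmetic on the explicit rates.
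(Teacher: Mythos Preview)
Your proposal is correct and follows essentially the same argument as the paper: both reduce to the pointwise bound on $[t,C_j]$ by canceling the first $t$ indicator terms against the right-hand side, insert the algorithm's explicit rate formula $q_{jt'}=\frac{1}{|J(t')|}\sum_{h=1}^{m_{t'}} s_{hj}$, and then split into the two cases $i\le |J(t')|$ and $i>|J(t')|$, using the speed order to obtain $\sum_{h\le i} s_{hj}\ge i\,s_{ij}$ in the first case and $\sum_{h\le |J(t')|} s_{hj}\ge |J(t')|\,s_{ij}$ in the second. The only cosmetic difference is that the paper factors the indicator term as $\frac{s_{ij}}{q_{jt'}}\cdot\frac{q_{jt'}}{p_j}$ before applying $\frac{q_{jt'}}{p_j}\le\zeta_{t'}$, whereas you phrase the same step directly; the content is identical.
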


\begin{proof}
  First observe that the dual assignment is non-negative. Let~$i \in I, j \in J$ and~$t \geq r_j$. Since the rates of \Cref{alg:rr-speed-order} imply \( q_{jt} = \sum_{\ell = 1}^{m_t} \frac{s_{\ell j}}{\abs{J(t)}} \), we have
  \begin{align}
    \frac{\dualSa_j s_{ij}}{p_j} - \frac{ s_{ij} \cdot t}{p_j} &\leq \sum_{t' = t}^{C_j} \frac{ s_{ij}}{p_j} \cdot \ind\left[ \frac{q_{jt'}}{p_j} \leq \zeta_{t'} \right] = \sum_{t' = t}^{C_j} \frac{ s_{ij}}{q_{jt'}} \cdot \frac{q_{jt'}}{p_j} \cdot \ind\left[ \frac{q_{jt'}}{p_j} \leq \zeta_{t'} \right] \leq \sum_{t' = t}^{C_j} \frac{ s_{ij}}{\sum_{\ell = 1}^{m_{t'}} \frac{s_{\ell j} }{\abs{J(t')}}} \cdot \zeta_{t'}. \label{eq:speed-ordered-rr-unrelated}
  \end{align}

  Consider any time~$t'$ with~$t \leq t' \leq C_j$. If~$i \leq \abs{J(t')}$, by the speed order, $\sum_{\ell = 1}^{m_{t'}} s_{\ell j} \geq \sum_{\ell = 1}^{i} s_{\ell j} \geq i \cdot s_{ij}$, and thus
  \[
    \frac{s_{ij}}{\sum_{\ell = 1}^{m_{t'}} s_{\ell j}} \cdot \abs{J(t')} \cdot \zeta_{t'} \leq \frac{1}{i} \cdot \abs{J(t')} \cdot \zeta_{t'} = \beta_{it'} \cdot \zeta_{t'}.
  \]

  Otherwise, that is,~$i > \abs{J(t')}$, we conclude by the speed order, $\sum_{\ell = 1}^{m_{t'}} s_{\ell j} \geq \sum_{\ell = 1}^{\abs{J(t')}} s_{\ell j} \geq \abs{J(t')} \cdot s_{ij}$. Therefore,
  \[
    \frac{s_{ij}}{\sum_{\ell = 1}^{m_{t'}} s_{\ell j}} \cdot \abs{J(t')} \cdot \zeta_{t'} \leq \frac{\abs{J(t')}}{\abs{J(t')}} \cdot \zeta_{t'} =  \gamma_{jt'} \cdot \zeta_{t'},
  \]
  because~$t' \leq C_j$. 
  Put together, \eqref{eq:speed-ordered-rr-unrelated} is at most
  \begin{align*}
\sum_{t' = t}^{C_j} \beta_{it'}\zeta_{t'} + \sum_{t' = t}^{C_j} \gamma_{jt'}\zeta_{t'} \leq \kappa (\dualSb_{it} + \dualSc_{jt}),
\end{align*}
which verifies the dual constraint.
\end{proof}

\begin{lemma}
  \cref{alg:max-denses-speed-order} has a competitive ratio of at least $\Omega(\log(\min\{n,m\}))$ for minimizing the total completion time on speed-ordered unrelated machines, even if processing speeds are exclusively from $\{0,1\}$.
\end{lemma}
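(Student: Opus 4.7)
The plan is to exhibit an adversarial instance on which the algorithm's total completion time exceeds the optimum by a factor of $\Omega(\log\min\{n,m\})$. I would take $n=m$ a power of two (other parameter regimes follow by scaling), release all jobs at time zero with unit processing requirement $p_j=1$, and assign each job $j$ a prefix niche $N_j=\{1,\ldots,k_j\}$ of machines on which it has speed $1$ (and $0$ elsewhere). By construction all speeds lie in $\{0,1\}$ and are compatible with the global speed order, fulfilling the hypothesis of the lemma.

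The structure I have in mind is layered: partition the jobs into $\lfloor \log_2 n\rfloor$ layers, with layer $\ell$ consisting of $\Theta(n/2^\ell)$ jobs each having niche size $2^\ell$. A first step is to upper-bound $\opt$. A Hall-type counting on the niche-size distribution shows that this instance admits an offline schedule placing every job on a distinct compatible machine, so each job completes at time $1$ and $\opt=O(n)$. A second step is to trace the algorithm's schedule phase by phase. In each phase every alive job $j$ receives progress rate $\min(k_j,|J(t)|)/|J(t)|$, since the algorithm spreads its effort uniformly across the first $|J(t)|$ machines regardless of the individual niche structure. Because $|J(t)|$ decreases by a roughly constant factor per phase, the schedule passes through $\Theta(\log n)$ phases, and during each phase the surviving low-niche jobs accumulate an $\Omega(1)$-increment to their completion time. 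Summing over layers yields $\alg=\Omega(n\log n)$ and hence the stated competitive ratio.

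The chief obstacle is calibrating the layer populations so that each phase contributes a genuine constant to the final ratio rather than a vanishing fraction. Straightforward layered constructions --- for example, unit-size jobs with prefix niches of lengths equal to the index --- only attain a competitive ratio approaching $2$, because $|J(t)|$ shrinks too quickly for the lower layers to retain a large progress deficit for long enough. The delicate step is to choose the layer sizes, and if necessary introduce a small number of auxiliary ``filler'' jobs to keep $|J(t)|$ large during the tail of the execution, together with an inductive accounting of the per-job progress $q_{jt}$ across phases. Once this calibration is in place, $\alg/\opt = \Omega(\log n)=\Omega(\log\min\{n,m\})$ follows by summation. The construction should also be flexible enough that, when $n\ne m$, restricting to $\min\{n,m\}$ machines (and a matching number of layers) preserves both the $O(n)$ bound on $\opt$ and the $\Omega(n\log\min\{n,m\})$ lower bound on $\alg$.
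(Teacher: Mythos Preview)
Your observation that the ``straightforward'' construction with unit-size jobs and prefix niches of lengths $1,2,\dots,m$ only yields competitive ratio approaching $2$ is correct --- and in fact this is precisely the construction the paper uses. The paper takes $n=m$, gives job $j$ the niche $\{1,\dots,m-j+1\}$, and correctly derives $C_j=1+\sum_{i=1}^{j-1}\frac{1}{m-i+1}$ for the Round-Robin schedule. It then asserts $\sum_j C_j=\Omega(m\log m)$, which is false: swapping the order of summation gives
\[
\sum_{j=1}^{m} C_j \;=\; m+\sum_{i=1}^{m-1}\frac{m-i}{m-i+1}\;=\;2m-H_m,
\]
so $\alg/\opt<2$ on this instance. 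You have therefore identified a genuine error in the paper's argument rather than merely taken a different route.

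That said, your own proposal does not close the gap. You correctly diagnose the obstruction --- $|J(t)|$ shrinks too quickly for the small-niche jobs to accumulate a logarithmic delay --- but the layered construction you sketch, with $\Theta(n/2^\ell)$ jobs of niche size $2^\ell$, is (after dyadic bucketing) the same instance you just ruled out. The remedies you float, namely ``calibrating layer sizes'' and adding ``filler'' jobs, are left entirely unspecified, and it is far from clear they can be made to work: filler jobs with full niche progress at rate $1$ in both $\alg$ and $\opt$, so their contribution to the objective swamps any gain on the small-niche jobs, while streams of short fillers released over time suffer the same issue. To turn this into a proof you must exhibit a concrete instance and verify the $\Omega(\log m)$ ratio by direct calculation; the plan as written does not do this, and the paper's own proof does not supply a valid construction either.
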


\begin{proof}
Consider an instance of~$m$ unit-sized jobs~$[m]$ and~$m$ machines~$[m]$. Every job~$j \in [m]$ has on machine~$i \in [m]$ a processing speed equal to~$s_{ij} = \ind \left[ i \leq m - j + 1 \right]$.
First observe that~$\opt \leq m$, because we can process and complete every job~$j \in [m]$ exclusively on machine~$m - j + 1$ at time~$1$.
We now calculate the algorithm's objective value. To this end, we argue that in the algorithm's schedule holds~$C_j = 1 + \sum_{i=1}^{j-1} \frac{1}{m-i+1}$ for every job~$j$. Then,~$\alg = \sum_{j=1}^m C_j = \Omega(m \log m)$ concludes the statement.

We first observe that~$C_1 = 1$, because job~$1$ receives in interval~$I_1 = [0, C_1)$ on every machine a rate equal to~$\frac{1}{m}$.
We now argue iteratively for~$j = 2, \ldots, m$ that~$C_j = 1 + \sum_{i=1}^{j-1} \frac{1}{m-i+1}$. Consequently, in interval~$I_j = [C_{j-1}, C_j)$ must be exactly jobs~$j,\ldots,m$ alive. Fix a job~$j$ with~$2 \leq j \leq m$ and let~$2 \leq i \leq j$. Since~$j$ receives progress on exactly~$m - j+ 1$ machines, there are~$m - i + 1$ alive jobs in~$I_i$, and~$I_i$ has length~$\frac{1}{m-i+2}$, its total progress in~$I_i$ is equal to~$\frac{m - j + 1}{(m-i+1)(m-i+2)}$. Further,~$j$'s progress is equal to~$\frac{m-j+1}{m}$ in~$I_1$. Summing over all intervals~$I_i$ with~$1 \leq i \leq j$ concludes that~$j$'s progress until the end of~$I_j$ is equal to 
\[
\frac{m-j+1}{m} + \sum_{i = 2}^j \frac{m - j + 1}{(m-i+1)(m-i+2)} = 1,
\]
asserting that~$1 + \sum_{i=1}^{j-1} \frac{1}{m-i+1}$ is indeed~$j$'s completion time in the algorithm's schedule.
\end{proof}

\section{Experimental Evaluation}\label{sec:experiments}

\begin{figure*}
	\centering
    \includegraphics{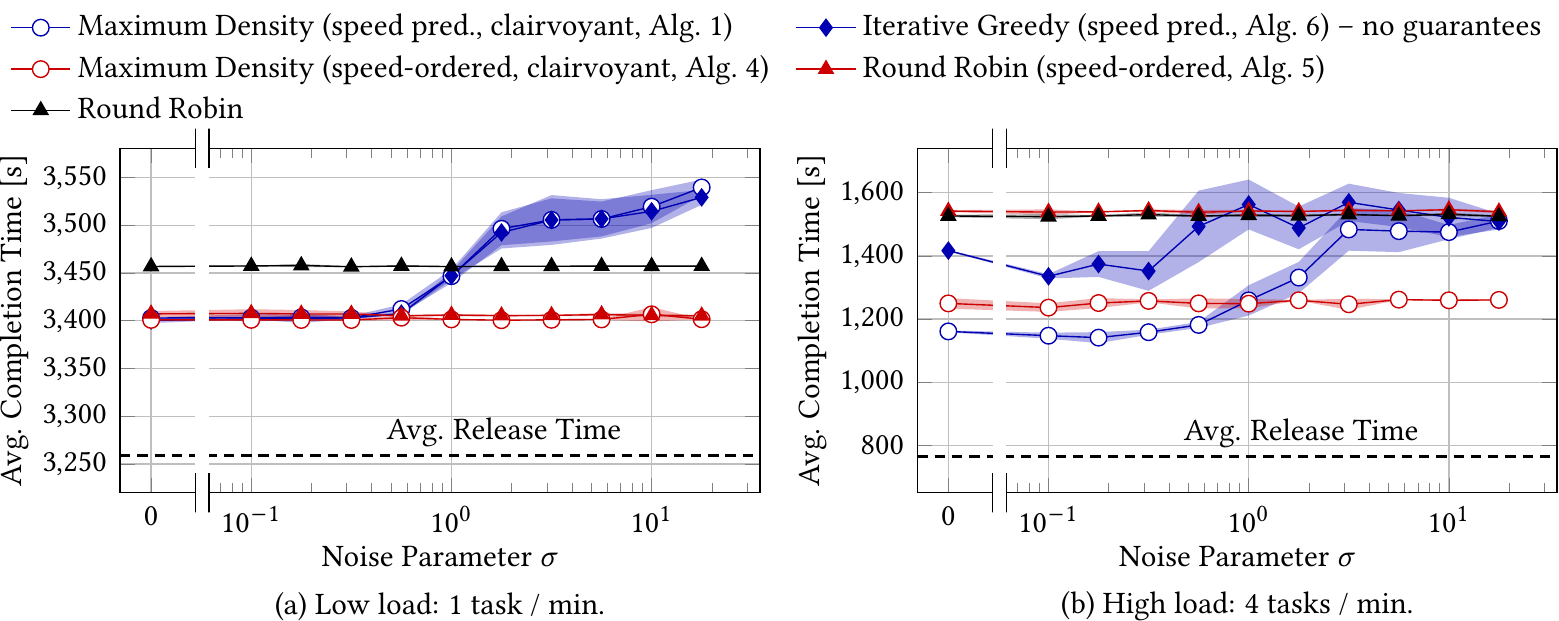}
	\caption{Real experiments on a \hikey board. The experiments are each repeated 3 times with the same workload but different random noise for speed predictions. Shaded areas show the standard deviation.}
	\label{fig:results_hikey}
\end{figure*}

\paragraph{Setup}

We perform experiments on real hardware running representative jobs, which enables us to perform a realistic evaluation.
The setup uses a \hikey board~\cite{hikey970} with a \emph{Kirin~970} Arm big.LITTLE \ac{SoC} featuring 4~\bigc cores and 4~\littlec cores, running Android~8.0.
This is a representative smartphone platform.
The \bigc cores always offer a higher performance than the \littlec cores (speed-ordered) because they support out-of-order execution at higher frequency and larger caches (see also \Cref{fig:characterization}, all speedups are $>1$).
Our workload comprises 100 randomly selected single-threaded jobs from the well-established \parsec~\cite{parsec3}, \splash~\cite{splash3}, and Polybench~\cite{polybench} benchmark suites.
These benchmarks represent various use cases from video transcoding, rendering, compression,
etc.
The arrival times are drawn from a Poisson distribution with varying rate parameter to study different system loads.
We characterized all jobs offline to get accurate speed~$s_{ij}$ and job volume~$p_j$ values.
Speed predictions are created with controllable error by~$\hs_{ij}=s_{ij} \cdot y_{ij}$, where~$y_{ij}$ follows a log-normal distribution~$ln(y_{ij})\sim \mathcal{N}(0,\sigma^2)$.
Note that the predictions do not consider slowdown effects on real hardware, e.g., due to
shared resource contention,
adding additional inaccuracy.

Additionally, we perform synthetic experiments (\Cref{sec:experiment_appendix}), which use similar workload and core configurations, but are only simulated. An advantage is that rates must not be transformed to actual schedules. The results are in line with the results of our hardware experiments.

\paragraph{Algorithms}
We consider all algorithms presented in previous sections. Additionally, we consider Round Robin (RR), which distributes a job evenly over all machines, and Iterative Greedy (\Cref{alg:greedy-abs}), which at any time iteratively schedules the job~$j$ on machine~$i$ which has the maximum~$\hs_{ij}$ among all unassigned alive jobs and free machines. We show that Iterative Greedy is not competitive (lower bound of~$\Omega(n)$).

\begin{algorithm}[tb]
  \caption{Iterative Greedy}
  \label{alg:greedy-abs}
  \begin{algorithmic}[1]
      \REQUIRE time $t$, speed predictions $\{\hs_{ij}\}$
      \STATE $I' \leftarrow I, J' \leftarrow J(t)$
      \WHILE{$I' \neq \emptyset \land J' \neq \emptyset$}
          \STATE $(i,j) = \argmax_{i \in I', j \in J'} w_j \hs_{ij}$
          \STATE $I' \leftarrow I' \setminus \{i\}, J' \leftarrow J' \setminus \{j\}$
          \STATE Schedule job $j$ on machine $i$ with rate $y_{ijt} = 1$ at time $t$.
      \ENDWHILE
  \end{algorithmic}
\end{algorithm}

\begin{lemma}\label{obs:greedy-abs-bad}
  \Cref{alg:greedy-abs} has a competitive ratio of at least $\Omega(n)$ for minimizing the total completion time on unrelated machines, even if $s_{ij} = \hs_{ij}$ for all jobs $j$ and machines $i$.
\end{lemma}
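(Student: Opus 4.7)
The plan is to exhibit an instance on two machines where Iterative Greedy commits one job to an essentially useless machine and never reconsiders it, whereas an optimal schedule ignores that machine entirely.

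First, I would take $n$ unit-size, unit-weight jobs all released at time $0$ with $\hs_{ij}=s_{ij}$, and set $s_{1,1}=1$, $s_{1,j}=2$ for $j\geq 2$, and $s_{2,j}=\varepsilon$ for all $j$, with $\varepsilon \leq 1/n^3$. The crucial feature is $s_{1,1}<s_{1,j}$ for all $j\geq 2$, which makes job~$1$ strictly less attractive on machine~$1$ than every other job; a tie-breaking rule (e.g., smallest job index) or an infinitesimal perturbation resolves the remaining ties among the pairs involving machine~$2$.

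Next, I would trace the algorithm. At time $0$ the first argmax is some $(1,j)$ with $j\geq 2$ of value $2$, which commits that job to machine~$1$; only $\varepsilon$-valued pairs remain, so the second pick is $(2,1)$, committing job~$1$ to machine~$2$. Jobs $3,\dots,n$ remain unassigned. Whenever the job running on machine~$1$ completes, the algorithm is re-invoked: machine~$1$ is free, unassigned alive jobs are still present, and another value-$2$ pair is chosen to commit the next unassigned job to machine~$1$. The key point is that Iterative Greedy only ever considers \emph{unassigned} jobs and \emph{free} machines, so job~$1$ is never reconsidered, and it remains on the slow machine throughout. Consequently, jobs $2,3,\dots,n$ complete on machine~$1$ at times $1/2,1,\dots,(n-1)/2$, while $C_1 = 1/\varepsilon = n^3$ on machine~$2$.

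Summing gives $\alg = (n-1)n/4 + 1/\varepsilon = \Theta(n^3)$. I would then upper bound $\opt$ by the cost of the schedule that runs all jobs on machine~$1$ alone in shortest-first order (jobs $2,\dots,n$ first, then job~$1$), which yields $\opt \leq (n-1)n/4 + (n+1)/2 = \Theta(n^2)$. Dividing, $\alg/\opt = \Omega(n)$. The main (mild) obstacle is the argmax tie-breaking just described: with the rule above, or a tiny perturbation of the speeds, Iterative Greedy is deterministically driven into the bad execution, and the rest of the argument is a routine calculation.
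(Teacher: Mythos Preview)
Your proposal rests on a misreading of \Cref{alg:greedy-abs}. The algorithm is invoked at every time~$t$ and initializes $J' \leftarrow J(t)$, the set of \emph{all} alive jobs; it is a preemptive, migratory rule that recomputes the entire assignment from scratch. There is no notion of ``unassigned'' jobs persisting across time steps. In your instance, once jobs $2,\dots,n$ have completed on machine~$1$ at time $(n-1)/2$, job~$1$ is the only alive job, and the argmax at that moment is $(1,1)$ with value~$1 > \varepsilon$. Hence job~$1$ migrates to machine~$1$ and finishes roughly one unit later, so $C_1 \approx (n+1)/2$, not $1/\varepsilon$. Your $\alg$ then coincides with your $\opt$ up to lower-order terms, and the ratio is $\Theta(1)$.

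The paper's construction exploits the opposite failure mode: it makes one \emph{long} job the strictly most attractive pair on the single fast machine (speed $1+\epsilon$ there), so Iterative Greedy keeps placing it on machine~$1$ at every time step, while the many tiny jobs---which are fast on machine~$1$ but extremely slow elsewhere---are pushed to the slow machines for the entire duration. The lower bound comes from a job that the algorithm \emph{wants} on the good machine but shouldn't be there, not from a job that gets stranded and forgotten.
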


\begin{proof}
Let~$\epsilon > 0~$ and~$n > m \geq 2$ such that~$\frac{n-1}{m-1}$ is an integer. Consider a unit-weight instance of one job with~$p_1 = \frac{n-1}{m-1}$,~$s_{11} = 1 + \epsilon$ and~$s_{i1} = 1$ for~$2 \leq i \leq m$, and~$n-1$ jobs with~$p_j = \epsilon$ and~$s_{1j} = 1$ and~$s_{ij} = \epsilon$ for~$2 \leq j \leq n, 2 \leq i \leq m$. \Cref{alg:greedy-abs} first schedules job 1 on machine 1, and the~$n-1$ others on the remaining~$m-1$ machines. Since the completion time of job~$1$ is equal to~$\frac{n-1}{(1+\epsilon)(m-1)}$, jobs~$2,\ldots,n$ will complete at time at least~$\frac{n-1}{m-1}$ only on machines~$2,\ldots,m$ if~$\epsilon < \frac{m}{n-m-1}$, hence this allocation will remain until the end of the instance. This implies a total completion time of~$\Omega(\frac{n^2}{m})$ for jobs~$2,\ldots,n$. Another solution is to schedule all jobs~$2,\ldots,n$ on machine 1 with a total completion time of at most~$\bigO(\epsilon n^2)$, and job~$1$ latest at time~$\bigO(\frac{n}{m})$ on any other machine. This implies that \Cref{alg:greedy-abs} has a competitive ratio of at least~$\Omega(n)$.
\end{proof}

\paragraph{Results}

\cref{fig:results_hikey} presents the results of the hardware experiments.
We exclude PF because it produces fractional schedules which are often difficult to convert into real schedules~\cite{ImKM18}, and Greedy WSPT, because, given incorrect predictions, it significantly underperforms in synthetic experiments. We repeat each experiment 3 times with the same workload (jobs and arrival times) but different random noisy speed predictions and plot the average and standard deviation of the average completion times.

Under low system load (\cref{fig:results_hikey}a), the number of active jobs is mostly $\le$~4, i.e., it is mostly feasible to only use the \bigc cores. Consequently, the algorithms that exploit the speed-ordered property (red) consistently perform best.
Algorithms with speed predictions (blue) perform equally well for accurate predictions but their performance deteriorates for very noisy predictions. RR always uses all cores and thus shows a low performance.

Under high system load (\cref{fig:results_hikey}b), the number of active jobs is mostly $>$~4, thus, \littlec cores have to be used. RR and speed-ordered RR perform similarly, as both mostly use the same cores.
For low prediction noise ($\sigma<1$), Maximum Density performs best, but also requires most information (speed predictions and clairvoyant).
For higher prediction noise, speed-ordered Maximum Density is better 
because too noisy speed predictions result in bad schedules.
Iterative Greedy performs best among the non-clairvoyant algorithms, but does not offer any theoretical guarantees.

\begin{figure}
	\centering
    \includegraphics{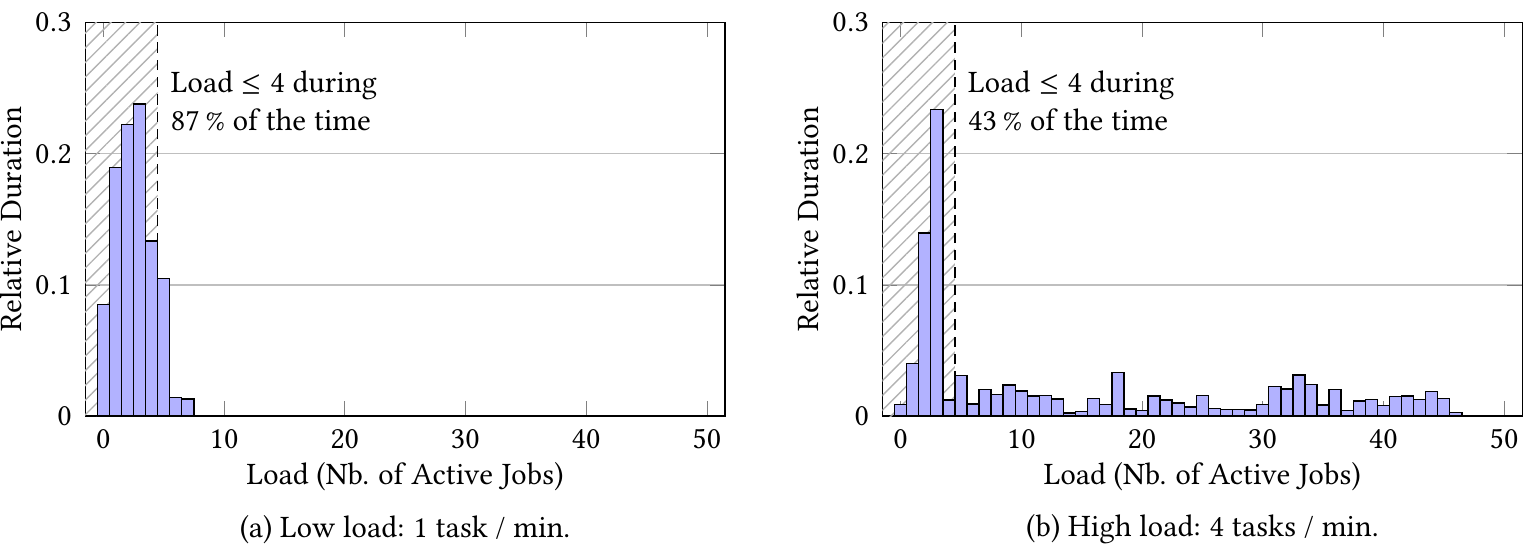}
	\caption{Distribution of the system load with speed-ordered Round Robin (\Cref{alg:rr-speed-order}).}
	\label{fig:load_durations}
\end{figure}

\paragraph{Load analysis}
\cref{fig:load_durations} shows the distribution of system load during the experiments with speed-ordered Round Robin (\Cref{alg:rr-speed-order}).
At low job arrival rate (1 task/min), the system load is $\le$~4 during 87\,\% of the time.
This means that during the majority of the time, it is possible to only use the \bigc cores, explaining why speed predictions or clairvoyance bring little benefit over the speed-ordered setting as in \cref{fig:results_hikey}a.
In contrast, the system load is $\le$~4 during 43\,\% of the time at a high job arrival rate (4 tasks/min), reaching up to 46.
Accurate speed and job volume predictions are much more beneficial in this case, explaining the larger differences between algorithms in \cref{fig:results_hikey}b.

\paragraph{Summary}
Speed predictions are beneficial in the average case if they are relatively accurate.
With inaccurate predictions, relying on the speed-ordering instead is beneficial.
\emph{In summary, our experiments show the power of speed predictions and speed-ordering for online scheduling in real-world settings.} 

\section{Conclusion and Future Directions}

We initiated research on speed-oblivious algorithms with two models motivated by real-world observations.
Future directions include settling the asymptotic competitive ratio for (non-)clairvoyant speed-oblivious algorithms on speed-ordered unrelated machines, shrinking the upper bound of PF to a small constant, and investigating speed-oblivious algorithms for other objective functions such as the total flow time, potentially also in the speed-scaling model.

\bibliographystyle{alphaurl}
\bibliography{../literature}

\newpage

\appendix

\section{Details on Algorithms with Speed Predictions}

\subsection{Full Analysis of Greedy WSPT with Speed Predictions} \label{app:min-increase}

In this section, we present an error-dependent competitive ratio for Greedy~WSPT with speed predictions and eventually prove~\Cref{theorem:minincrease}. The analysis is inspired by~\cite{GuptaMUX20}, but uses a different approach for proving the feasibility of the crafted duals. In particular, we need less scaling parameters than Gupta et al. 
\thmGreedyWSPT*

Fix an instance and the algorithm's schedule. Let $\kappa \geq 1$ and $0 < \theta < 1$ be constants.
We assume w.l.o.g.\ by scaling the instance that all processing requirements and release dates are integer multiples of $\kappa$.
Recall that $\hdense_{ij} = \frac{w_j \hs_{ij}}{p_{j}}$ and~$\hr_{ij} = \max\{r_j, \theta \frac{p_j}{\hs_{ij}}\}$.
We write for every job $j$ and machine~$i$
\begin{equation*}
	Q_{ij} = w_{j} \Bigg( \hr_{ij} + \mu_1 \frac{\hr_{ij}}{\theta} + \frac{p_j}{s_{ij}} + \sum_{\substack{j' \in M_i(j) \\ \hdense_{ij'} \geq \hdense_{ij}}} \frac{p_{j'}}{s_{ij'}} \Bigg) + \frac{p_j}{s_{ij}} \sum_{\substack{j' \in M_i(j) \\ \hdense_{ij'} < \hdense_{ij}}} w_{j'}.
\end{equation*}
Also, recall that the algorithm uses the values $\hQ_{ij}$ to assign a job $j$ at time $r_j$ to machine $g(j) = \argmin_i  \hQ_{ij} $:
\[
	\hQ_{ij} = w_{j} \Bigg( \hr_{ij} + \frac{\hr_{ij}}{\theta} + \frac{p_j}{\hs_{ij}} + \sum_{\substack{j' \in M_i(j) \\ \hdense_{ij'} \geq \hdense_{ij}}} \frac{p_{j'}}{\hs_{ij'}} \Bigg) + \frac{p_j}{\hs_{ij}} \sum_{\substack{j' \in M_i(j) \\ \hdense_{ij'} < \hdense_{ij}}} w_{j'}.
\]

We now introduce a linear programming relaxation of our problem.
As we consider a non-preemptive scheduling problem here, we can define a stronger linear program relaxation than~\eqref{pf-lp}~\cite{SchulzS02}:
  \begin{alignat}{3}
    \text{min} \quad &\sum_{i,j,t} w_{j} \cdot x_{ijt} \cdot \left( \frac{1}{2} + \frac{s_{ij}}{p_{j}} \cdot \left( t + \frac{1}{2} \right) \right) \tag{$\text{NP-LP}$}\label{mi-lp} \\
    \text{s.t.} \quad & \sum_{i,t \geq r_j} \frac{x_{ijt}s_{ij}}{p_{j}} \geq 1   &&\forall j \notag \\
    & \sum_{j} x_{ijt} \leq 1   &&\forall i, t \notag \\
    & x_{ijt} \geq 0 &&\forall i,j,t \notag \\
    & x_{ijt} = 0 &&\forall i,j,t < r_j \notag
\end{alignat}

This relaxation has an integrality gap of~$2$~\cite{SchulzS02}.
The dual of~\eqref{mi-lp} can be written as follows:
\begin{alignat}{3}
    \text{max} \quad & \quad \sum_{j} \dualVa_j - \sum_{i,t} \dualVb_{it} \tag{$\text{NP-DLP}$} \label{mi-dual} \\
    \text{s.t.} \quad & \frac{\dualVa_j s_{ij} }{p_{j}} - \dualVb_{it} \leq w_{j} \left(s_{ij} \frac{t + 1/2}{p_{j}} + \frac{1}{2} \right) \quad \forall i,j,t \geq r_j \label{constr:mi-dual} \\
    &\dualVa_j, \dualVb_{it}  \geq 0 \qquad \forall i,j,t \notag
\end{alignat}

We define a solution for~\eqref{mi-dual} which depends on the schedule produced by the algorithm. Let~$U_{i}(t) = \{ j \in J \mid g(j) = i \land t < C_j \}$.
Note that~$U_{i}(t)$ includes unreleased jobs at time $t$. Consider the following dual assignment: \begin{itemize}
  \item $\dualSa_j = Q_{g(j)j}$ for every job~$j$ and
  \item $\dualSb_{it} = \mu \cdot \sum_{j \in U_{i}(\kappa \cdot t)} w_{j}$ for every machine~$i$ and time~$t$.
\end{itemize}

We first show that the objective value of~\eqref{mi-dual} for~$(\dualSa_j,\dualSb_{it})$ is close to the objective value of the algorithm.

\begin{lemma}\label{lemma:mi-dual-objective1}
$\sum_j \dualSa_j \geq \alg$
\end{lemma}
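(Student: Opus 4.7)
The plan is to establish the per-machine inequality $\sum_{j \in J_i} w_j C_j \leq \sum_{j \in J_i} Q_{ij}$ for every machine $i$, where $J_i = \{j : g(j) = i\}$; summing over $i$ then yields the lemma. For $j \in J_i$ I would use the bound
\[
C_j \leq \hr_{ij} + B_j + N_j + \frac{p_j}{s_{ij}},
\]
where $B_j$ is the remaining processing time at $\hr_{ij}$ of the (unique) job $b$ running on machine $i$ at that moment (zero if the machine is not processing any job), and $N_j$ is the total processing time of the jobs served on $i$ between $b$'s completion and the start of $j$; call the latter set $K(j)$. The terms $w_j \hr_{ij}$ and $w_j p_j/s_{ij}$ already appear in $Q_{ij}$, so it only remains to absorb $\sum_j w_j B_j$ and $\sum_j w_j N_j$ into the rest of $\sum_{j \in J_i} Q_{ij}$.

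For the blocker, $b$ can only be started at or after time $\hr_{ib}$ and is still running at $\hr_{ij}$, hence $\hr_{ib} \leq \hr_{ij}$. Combining $\hr_{ib} \geq \theta\, p_b / \hs_{ib}$ with the distortion bound $s_{ib} \leq \mu_1 \hs_{ib}$ gives $p_b/s_{ib} \leq \mu_1 \hr_{ib}/\theta \leq \mu_1 \hr_{ij}/\theta$, so $w_j B_j \leq \mu_1 w_j \hr_{ij}/\theta$, which is exactly the slack term in $Q_{ij}$.

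For the non-blocking delays I would swap summation:
\[
\sum_{j \in J_i} w_j N_j \;=\; \sum_{j \in J_i} \sum_{j' \in K(j)} \frac{w_j\, p_{j'}}{s_{ij'}}.
\]
Any $j' \in K(j)$ is chosen by WSPT over $j$ while $j$ is already available (since we are past $\hr_{ij}$ and past $b$'s completion), hence $\hdense_{ij'} \geq \hdense_{ij}$. I split on arrival order. If $j'$ arrived weakly before $j$ then $j' \in M_i(j)$, so $j'$ is a summand in the second sum of $Q_{ij}$, contributing precisely $w_j p_{j'}/s_{ij'}$. Otherwise $j$ arrived before $j'$; since $j$ is still unstarted at time $r_{j'}$ we have $j \in M_i(j')$, and because $j'$ is preferred over $j$ by WSPT the tie-breaking (by arrival order) forces $\hdense_{ij'} > \hdense_{ij}$, so $j$ contributes to the third sum of $Q_{ij'}$ exactly the term $w_j p_{j'}/s_{ij'}$. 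Each non-blocking delay is thereby charged to a distinct summand of $\sum_{j \in J_i} Q_{ij}$.

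The main obstacle is the delicate case analysis around the blocker and the density tie-breaking: one must argue that any job delaying $j$ (other than $b$) lands in exactly one pair term of some $Q$, while $b$'s delay is entirely absorbed by the $\mu_1 \hr_{ij}/\theta$ slack, so that the three pieces (self-cost, blocker slack, non-blocking pair terms) combine without overlap. Assembling them gives $\sum_{j \in J_i} w_j C_j \leq \sum_{j \in J_i} Q_{ij}$, and summing over $i$ concludes $\alg \leq \sum_j \dualSa_j$.
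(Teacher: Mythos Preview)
Your proof is correct and takes essentially the same approach as the paper. The paper phrases the argument as an incremental contribution bound—each job's marginal effect on $\alg$ upon arrival is at most $Q_{g(j)j}$—whereas you unroll the same charging explicitly by splitting $C_j$ into blocker, non-blocking delays, and self-cost; both rest on the identical blocker estimate $x_i(\hr_{ij})\le \mu_1 \hr_{ij}/\theta$ and the same pair-term accounting between the second and third sums of $Q$.
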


\begin{proof}
Consider the algorithm's schedule. Let $x_i(t)$ denote the amount of time (not volume) the currently processed job on machine $i$ requires to complete. If there is no job running on machine $i$ at time $t$, we define $x_i(t) = 0$. We now calculate the contribution of some job $j$ to the algorithm's objective value $\alg$. Suppose that $j$ gets assigned to $g(j) = i$. Then, $j$ might delay other jobs with smaller predicted density which have been already assigned to $i$, i.e., are part of $M_i(j)$. Further, $j$ might be delayed by jobs which have higher predicted density and are part of $M_i(j)$. Finally, $j$'s completion time cannot be less than $\hr_{ij} + \frac{p_j}{s_{ij}}$ due to the definition of the algorithm, and this value might be delayed further by $x_i(\hr_{ij})$. In total, we conclude that the contribution of $j$ to $\alg$ is at most
\[
  w_{j} \Bigg( \hr_{ij} + x_i(\hr_{ij}) + \frac{p_j}{s_{ij}} + \sum_{\substack{j' \in M_i(j) \\ \hdense_{ij'} \geq \hdense_{ij}}} \frac{p_{j'}}{s_{ij'}} \Bigg) + \frac{p_j}{s_{ij}} \sum_{\substack{j' \in M_i(j) \\ \hdense_{ij'} < \hdense_{ij}}} w_{j'}.
  \]
This value is indeed at most $Q_{ij}$, because if at time $\hr_{ij}$ some job $k$ is being processed, it must be that $\hr_{ik} \leq \hr_{ij}$, and thus
\[
  x_i(\hr_{ij}) \leq \frac{p_k}{s_{ik}} \leq \mu_1 \frac{p_k}{\hs_{ik}} \leq \mu_1 \frac{\hr_{ik}}{\theta} \leq \mu_1 \frac{\hr_{ij}}{\theta}.
\]
The statement then follows by summation of all jobs and the observation that this contribution only affects jobs that were handled before job $j$.
\end{proof}

\begin{lemma}\label{lemma:mi-dual-objective2}
  $\sum_{i,t} \dualSb_{it} = \frac{\mu}{\kappa} \alg$
\end{lemma}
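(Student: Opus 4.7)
The plan is to swap the order of summation and reduce the double sum to counting, for each job $j$, the number of index pairs $(i,t)$ to which it contributes. Unpacking the definition of $\dualSb_{it}$ gives
$$\sum_{i,t} \dualSb_{it} \;=\; \mu \sum_{i}\sum_{t\geq 0} \sum_{j \in U_i(\kappa t)} w_j.$$

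Because Greedy WSPT is non-migratory, each job $j$ is assigned to a unique machine $g(j)$, so the membership $j \in U_i(\kappa t)$ is equivalent to $i = g(j)$ together with $\kappa t < C_j$. Swapping the outer two sums and specialising on $j$ therefore yields
$$\sum_{i,t} \dualSb_{it} \;=\; \mu \sum_{j} w_j \cdot \bigl|\{t\geq 0 : \kappa t < C_j\}\bigr|.$$
The next step is to invoke the scaling assumption made at the start of the analysis, namely that all $p_j$ and $r_j$ are integer multiples of $\kappa$. This ensures that the relevant time grid, on which the LP's index $t$ ranges, is compatible with the completion times produced by the non-preemptive schedule; in particular, the cardinality above equals exactly $C_j/\kappa$. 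Substituting gives
$$\sum_{i,t} \dualSb_{it} \;=\; \frac{\mu}{\kappa} \sum_j w_j C_j \;=\; \frac{\mu}{\kappa}\,\alg,$$
which is the claim.

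The main obstacle is the discretization bookkeeping: the processing times $p_j/s_{ij}$ are not a priori multiples of $\kappa$, so one has to justify that the $C_j$ really land on the same $\kappa$-grid that supports $\sum_t$. This can either be done by strengthening the scaling (refining the time grid to one fine enough that $\kappa$ still divides all endpoints used by the algorithm) or by reading $\sum_t$ as a continuous integral, in which case $|\{t\geq 0: \kappa t < C_j\}|$ becomes $\int_0^{C_j/\kappa} dt = C_j/\kappa$ immediately. Either way, the combinatorial content reduces to a single change in the order of summation together with the non-migratory property of Greedy WSPT.
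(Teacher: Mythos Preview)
Your argument is correct and mirrors the paper's proof: both unfold $\dualSb_{it}$, exchange the order of summation, and use that completion times lie on the $\kappa$-grid so that $\sum_{t\geq 0}\ind[\kappa t < C_j] = C_j/\kappa$. You are in fact more explicit than the paper about the discretization caveat; the paper simply asserts that ``all job completions occur at integer multiples of~$\kappa$'' from the scaling assumption, without elaborating on the point you raise about $p_j/s_{ij}$.
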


\begin{proof}
  Since we assumed that all release dates and processing times in $J$ are integer multiples of $\kappa$, all all job completions occur at integer multiples of $\kappa$. Thus, $\sum_{t} \sum_{j \in U_i(\kappa \cdot t)} w_j = \frac{1}{\kappa} \sum_{t} \sum_{j \in U_i(t)} w_j$ for every machine~$i$, and we conclude
  \[
    \sum_{i,t} \dualSb_{it} = \mu \sum_{i,t} \sum_{j \in U_{i}(\kappa \cdot t)} w_{j} =  \frac{1}{\kappa} \sum_{i,t} \sum_{j \in U_{i}(t)} w_{j} = \frac{\mu}{\kappa} \cdot \alg.
    \]
  \end{proof}

These two lemmas give the following corollary.

\begin{corollary}\label{lemma:mi-dual-objective}
    $\sum_{j} \dualSa_j - \sum_{i,t} \dualSb_{it} \geq \left( 1 - \frac{\mu}{\kappa} \right) \cdot \alg$.
 \end{corollary}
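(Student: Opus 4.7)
The plan is to derive this corollary as an immediate consequence of the two preceding lemmas, which together characterize the two pieces of the dual objective value in terms of $\alg$. Specifically, \Cref{lemma:mi-dual-objective1} gives the lower bound $\sum_j \dualSa_j \geq \alg$, and \Cref{lemma:mi-dual-objective2} gives the exact identity $\sum_{i,t} \dualSb_{it} = \frac{\mu}{\kappa}\alg$. Subtracting the latter from the former and factoring $\alg$ out yields the claimed inequality
\[
  \sum_j \dualSa_j - \sum_{i,t}\dualSb_{it} \geq \alg - \frac{\mu}{\kappa}\alg = \Bigl(1 - \frac{\mu}{\kappa}\Bigr)\cdot\alg.
\]

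Since the substantive work has already been carried out in the two individual lemmas, there is really no remaining technical obstacle in this step; the only thing to verify is that the bounds combine in the expected direction (the lower bound on the $\dualSa$-sum and the upper bound, here even an equality, on the $\dualSb$-sum are exactly what is needed to lower bound the difference). I would therefore present this proof in a single short display, remarking only on why both parts of the dual are expressed as multiples of the same algorithm cost $\alg$: the $\dualSa_j = Q_{g(j)j}$ values charge each job's contribution to $\alg$ directly through the greedy assignment rule, while the $\dualSb_{it}$ values aggregate the alive-weight at each instant, which by the integer-multiple scaling assumption telescopes into $\alg$ up to the factor $\frac{\mu}{\kappa}$.

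Finally, I would note that this corollary is exactly the dual-objective lower bound needed to close the dual-fitting argument: paired with the dual feasibility of $(\dualSa_j/\alpha',\dualSb_{it}/\alpha')$ for an appropriate scaling $\alpha'$ and weak LP duality against $\opt_{\alpha'}$, it will yield the $\frac{368}{51}\mu^2$ competitive ratio of \Cref{theorem:minincrease} after optimizing over $\kappa$ and $\theta$. So the role of this corollary in the overall proof is purely to package the two preceding bounds into the form that the duality step consumes.
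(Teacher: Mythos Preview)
Your proof is correct and matches the paper's approach exactly: the paper simply states ``These two lemmas give the following corollary'' without further argument, and you have spelled out precisely the one-line subtraction of \Cref{lemma:mi-dual-objective2} from \Cref{lemma:mi-dual-objective1} that is intended. The additional commentary on the role of the corollary in the dual-fitting argument is accurate as well.
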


  Second, we show that scaling the crafted duals makes them feasible for~\eqref{mi-dual}. 
  
  \begin{restatable}{lemma}{lemmaMinIncreaseFeasible}\label{lemma:mi-dual-feasible}
    Assigning $\dualVa_j = \dualSa_j/\lambda$ and $\dualVb_{it} = \dualSb_{it}/\lambda$ gives a feasible solution for~\eqref{mi-dual} for a constant $\lambda > 0$ that satisfies $\lambda \geq 2 \mu(2+\theta)$ and $\lambda \geq \mu_1 (\frac{1}{\theta} + \mu_2 \cdot \kappa)$.
  \end{restatable}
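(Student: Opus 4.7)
The plan is to verify the single constraint of~\eqref{mi-dual} for every triple~$(i,j,t)$ with $t \geq r_j$. After substituting $\dualVa_j = \dualSa_j/\lambda$ and $\dualVb_{it} = \dualSb_{it}/\lambda$ and multiplying through by $\lambda$, it suffices to show
$Q_{g(j)j}\, s_{ij}/p_j \leq \lambda w_j \bigl(s_{ij}(t+1/2)/p_j + 1/2\bigr) + \mu \sum_{j' \in U_i(\kappa t)} w_{j'}$.
A direct term-by-term comparison (the extra $\mu_1$ factor in front of $\hr_{ij}/\theta$ in $Q$ exactly compensates, and $\hs/s \leq \mu_1$ handles the processing-rate terms) shows $Q_{ij} \leq \mu_1 \hQ_{ij}$. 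Combined with the greedy rule $\hQ_{g(j)j} \leq \hQ_{ij}$, this yields $Q_{g(j)j} \leq \mu_1 \hQ_{ij}$ for the $i$ of interest, so it is enough to bound $\mu_1 \hQ_{ij} s_{ij}/p_j$ against the right-hand side.

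I would then expand $\mu_1 \hQ_{ij} s_{ij}/p_j$ into five pieces and treat them in two groups. The three ``self'' pieces (those involving only $\hr_{ij}$ or $p_j/\hs_{ij}$) are bounded by using $\hr_{ij} \leq r_j + \theta p_j/\hs_{ij}$, $r_j \leq t$, and $s_{ij}/\hs_{ij} \leq \mu_2$, giving a contribution of at most $\mu_1(1+1/\theta) w_j s_{ij} t/p_j + \mu(2+\theta) w_j$. For the two remaining ``queue'' pieces summing over $j' \in M_i(j)$, I would record \emph{two} alternative per-$j'$ bounds: a pure weight bound $\mu w_{j'}$ (which for the high-density case uses $w_j p_{j'}/(p_j \hs_{ij'}) \leq w_{j'}/\hs_{ij}$), and a processing-time bound $\mu_1(w_j s_{ij}/p_j)\cdot p_{j'}/\hs_{ij'}$ (which for the low-density case uses $w_{j'} \leq \hdense_{ij}\, p_{j'}/\hs_{ij'}$).

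The key step is to apply these two bounds \emph{selectively}. I would partition $M_i(j) = (M_i(j) \cap U_i(\kappa t)) \sqcup (M_i(j) \setminus U_i(\kappa t))$. For jobs still alive at time $\kappa t$, use the weight bound; the resulting sum is at most $\mu \sum_{j' \in U_i(\kappa t)} w_{j'}$, which exactly cancels the $\mu$-term on the right-hand side. For jobs that have finished by $\kappa t$, use the processing-time bound, turning the contribution into $\mu_1 (w_j s_{ij}/p_j) \sum p_{j'}/\hs_{ij'}$. Since all these jobs are processed non-preemptively on machine $i$ inside the window $[r_j,\kappa t]$ of length at most $\kappa t$, the machine-capacity bound $\sum p_{j'}/s_{ij'} \leq \kappa t$ combined with $p_{j'}/\hs_{ij'} \leq \mu_2\, p_{j'}/s_{ij'}$ yields $\sum p_{j'}/\hs_{ij'} \leq \mu_2 \kappa t$, hence a contribution of at most $\mu\kappa \cdot w_j s_{ij} t/p_j$. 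Matching the resulting $t$-coefficient and constant against the RHS buckets $\lambda w_j s_{ij}(t+1/2)/p_j$ and $\lambda w_j/2$ gives the two stated hypotheses on $\lambda$.

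The main obstacle, and the reason both hypotheses on $\lambda$ are needed, is precisely this dual treatment of the queue sums: the weight bound $\mu w_{j'}$ is useless for jobs already finished by $\kappa t$ (they are absent from $U_i(\kappa t)$), while the processing-time bound is too wasteful for jobs still queued. The partition by completion status, together with the machine-capacity argument $\sum p_{j'}/s_{ij'} \leq \kappa t$ that turns weight into time, is what closes the argument and explains why $\lambda$ must scale linearly with $\kappa$.
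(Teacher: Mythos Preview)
Your plan is essentially identical to the paper's proof: bound $Q_{g(j)j}\leq \mu_1\hQ_{g(j)j}\leq \mu_1\hQ_{ij}$, split into self and queue contributions, partition $M_i(j)$ by whether $C_{j'}\leq\kappa t$, use the weight bound on the unfinished part (absorbed into $\dualSb_{it}$) and the processing-time bound together with a machine-capacity argument on the completed part. The two per-$j'$ inequalities you derive from the density comparison $\hdense_{ij'}\gtrless\hdense_{ij}$ are precisely the two exchange inequalities the paper records.

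There is, however, an arithmetic slip that makes your final matching of constants fail as written. You first bound the self pieces by $\mu_1(1+1/\theta)(w_j s_{ij}/p_j)\,t + \mu(2+\theta)w_j$ (already using $r_j\leq t$) and then \emph{separately} bound the completed-queue sum by $\mu\kappa\,(w_j s_{ij}/p_j)\,t$ via $\sum p_{j'}/s_{ij'}\leq\kappa t$. The resulting $t$-coefficient is $\mu_1(1+1/\theta+\mu_2\kappa)$, which is $\mu_1$ too large for the stated hypothesis $\lambda\geq\mu_1(\tfrac{1}{\theta}+\mu_2\kappa)$. The fix, and what the paper does, is not to spend $r_j\leq t$ in the self piece: write $(1+1/\theta)r_j=r_j/\theta+r_j$, bound only $r_j/\theta\leq t/\theta$, and merge the remaining $r_j$ with the completed-queue processing times via the sharper capacity bound
\[
r_j+\sum_{\substack{j'\in M_i(j)\\ C_{j'}\leq\kappa t}} \frac{p_{j'}}{s_{ij'}}\ \leq\ \kappa t,
\]
which holds because every job in $M_i(j)$ is unstarted at time $r_j$ and hence its full processing on machine $i$ lies inside $[r_j,\kappa t]$. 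After applying $p_{j'}/\hs_{ij'}\leq\mu_2\,p_{j'}/s_{ij'}$ and $r_j\leq\mu_2 r_j$, this combined block contributes $\mu_2\kappa t$, and the $t$-coefficient becomes exactly $\mu_1(1/\theta+\mu_2\kappa)$ as required.
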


  \begin{proof}
  Since our defined variables are non-negative by definition, it suffices to show that this assignment satisfies~\eqref{constr:mi-dual}.
  Fix a job~$j$, a machine~$i$ and a time~$t \geq r_j$.
  We assume that no new job arrives after~$j$, since such a job may only increase~$\dualSb_{it}$ while~$\dualSa_j$ stays unchanged.
  We define a partition of $M_i(j)$ into high priority and low priority jobs with respect to $j$, and into completed and unfinished jobs with respect to time $\kappa \cdot t$:
  \begin{itemize}
      \item $H_U = \{j' \in M_i(j): \hdense_{ij'} \geq \hdense_{ij} \land C_{j'} > \kappa \cdot t \}$ and
      $H_C = \{j' \in M_i(j): \hdense_{ij'} \geq \hdense_{ij} \land C_{j'} \leq \kappa \cdot t \}$,
      \item $L_U = \{j' \in M_i(j): \hdense_{ij'} < \hdense_{ij} \land C_{j'} > \kappa \cdot t \} $ and
      $L_C = \{j' \in M_i(j): \hdense_{ij'} < \hdense_{ij} \land C_{j'} \leq \kappa \cdot t \} $.
    \end{itemize}
  We write $H = H_C \cup H_U$, $L = L_C \cup L_U$ and $\dense_{ij} = \frac{w_j s_{ij}}{p_{j}}$.
  Due to the choice of $g(j)$ in the algorithm,~$\hQ_{g(j)j} \leq \hQ_{i'j} $ for every machine $i'$.
  Hence, we have~$\dualSa_j = Q_{g(j)j} \leq \mu_1 \cdot \hQ_{g(j)j} \leq \mu_1 \cdot \hQ_{ij}$, and using that,
  \begin{align*}
\frac{\dualSa_j \cdot s_{ij}}{\lambda p_{j}}
   &\leq \mu_1 \frac{\hQ_{ij} \cdot s_{ij}}{\lambda p_{j}} \\
   &= \dense_{ij} \frac{\mu_1}{\lambda} \left(\hr_{ij} + \frac{\hr_{ij}}{\theta} + \frac{p_j}{\hs_{ij}} + \sum_{j' \in H} \frac{p_{j'}}{\hs_{ij'}} \right)
   + \frac{\mu_1}{\lambda}  \frac{s_{ij}}{\hs_{ij}} \sum_{j' \in L} w_{j'} \\
   &\leq \dense_{ij} \frac{\mu_1}{\lambda} \left(\left(1 + \frac{1}{\theta} \right) r_j + \sum_{j' \in H} \frac{p_{j'}}{\hs_{ij'}} \right)
   + \mu_1 \frac{s_{ij} w_j}{\lambda p_j} (2 + \theta) \frac{p_j}{\hs_{ij}}
   + \frac{\mu_1}{\lambda}  \frac{s_{ij}}{\hs_{ij}} \sum_{j' \in L} w_{j'} \\
   &\leq \dense_{ij} \frac{\mu_1}{\lambda} \left(\left(1 + \frac{1}{\theta} \right) r_j + \sum_{j' \in H} \frac{p_{j'}}{\hs_{ij'}} \right)
   + \mu \frac{w_j}{\lambda} \left(2 + \theta \right)
   + \frac{\mu_1}{\lambda}  \frac{s_{ij}}{\hs_{ij}} \sum_{j' \in L} w_{j'} \\
   &\leq \dense_{ij} \frac{\mu_1}{\lambda} \left(\left(1 + \frac{1}{\theta} \right) r_j + \sum_{j' \in H} \frac{p_{j'}}{\hs_{ij'}} \right) + \frac{w_j}{2} + \frac{\mu_1}{\lambda} \frac{s_{ij}}{\hs_{ij}} \sum_{j' \in L} w_{j'},
  \end{align*}
  where the second inequality is due to $(1 + \frac{1}{\theta})\hr_{ij} \leq (1+\frac{1}{\theta})r_j + (1+\theta) \frac{p_j}{\hs_{ij}}$, which follows from the definition of $\hr_{ij}$, and the last inequality requires $\lambda \geq 2 \mu(2+\theta)$.
  Thus, asserting the dual constraint~\eqref{constr:mi-dual} reduces to proving
  \begin{equation*}
    \dense_{ij} \frac{\mu_1}{\lambda} \left(\left(1 + \frac{1}{\theta}\right) r_j+ \sum_{j' \in H} \frac{p_{j'}}{\hs_{ij'}} \right) + \frac{\mu_1}{\lambda} \frac{s_{ij}}{\hs_{ij}} \sum_{j' \in L} w_{j'} \leq   \dense_{ij} t + \frac{\dualSb_{it}}{\lambda}.
  \end{equation*}
  
  To this end, first note that for all $j' \in L$ holds
  \begin{equation}
    w_{j'} \frac{ \hs_{ij'}}{p_{j'}} = \hdense_{ij'} < \hdense_{ij} = \frac{w_j \hs_{ij}}{p_j} =  \frac{\dense_{ij} \hs_{ij}}{s_{ij}} \Longrightarrow \frac{s_{ij}}{\dense_{ij} \hs_{ij}} w_{j'} \leq \frac{\hs_{ij'}}{p_{j'}},
    \label{eq:mi-eq-for-L}
  \end{equation}
  and for all $j' \in H$
  \begin{equation}
    \dense_{ij} \leq \mu_2 \cdot \hdense_{ij} \leq \mu_2 \cdot \hdense_{ij'} = \frac{w_{j'} \hs_{ij'}}{p_{j'}} \Longrightarrow \dense_{ij} \frac{p_j'}{\hs_{ij'}} \leq w_{j'}. \label{eq:mi-eq-for-H}
  \end{equation}
  
  Using these two inequalities gives
  \begin{align*}
    & \dense_{ij} \frac{\mu_1}{\lambda} \left(\left(1 + \frac{1}{\theta}\right) r_j + \sum_{j' \in H_C} \frac{p_{j'}}{\hs_{ij'}} + \sum_{j' \in H_U} \frac{p_{j'}}{\hs_{ij'}} \right)
   + \frac{\mu_1}{\lambda} \frac{s_{ij}}{\hs_{ij}} \sum_{j' \in L_C} w_{j'}
   + \frac{\mu_1}{\lambda} \frac{s_{ij}}{\hs_{ij}} \sum_{j' \in L_U} w_{j'}  \\
   & = \dense_{ij} \frac{\mu_1}{\lambda} \left(\left(1 + \frac{1}{\theta}\right) r_j + \sum_{j' \in H_C} \frac{p_{j'}}{\hs_{ij'}}
   + \frac{s_{ij}}{\dense_{ij} \hs_{ij}} \sum_{j' \in L_C} w_{j'}
   \right)
   + \dense_{ij} \frac{\mu_1}{\lambda} \sum_{j' \in H_U} \frac{p_{j'}}{\hs_{ij'}}
   + \frac{\mu_1}{\lambda} \frac{s_{ij}}{\hs_{ij}} \sum_{j' \in L_U} w_{j'}  \\
   &\leq \dense_{ij} \frac{\mu_1}{\lambda} \left(\left(1 + \frac{1}{\theta}\right) r_j + \sum_{j' \in H_C} \frac{p_{j'}}{\hs_{ij'}}
   + \sum_{j' \in L_C} \frac{p_{j'}}{\hs_{ij'}}
   \right)
   + \frac{\mu_1}{\lambda} \frac{s_{ij}}{\hs_{ij}} \sum_{j' \in H_U} w_{j'}
   + \frac{\mu_1}{\lambda} \frac{s_{ij}}{\hs_{ij}} \sum_{j' \in L_U} w_{j'}  \\
   &\leq \dense_{ij} \frac{\mu_1}{\lambda} \left(\frac{r_j}{\theta} + \mu_2 \left(r_j + \sum_{j' \in M_i(j): \kappa \cdot t \geq C_{j'}} \frac{p_{j'}}{s_{ij'}} \right)
   \right)
   + \frac{\mu_1}{\lambda} \mu_2 \sum_{j' \in M_i(j): \kappa \cdot t < C_{j'}} w_{j'} \\
   &\leq \dense_{ij} \frac{\mu_1}{\lambda} \left(\frac{t}{\theta}  + \mu_2 \cdot \kappa \cdot t
   \right)
   + \frac{\mu}{\lambda}\sum_{j' \in U_i(\kappa \cdot t)} w_{j'} \\
   & \leq \dense_{ij} t + \frac{\dualSb_{it}}{\lambda}.
  \end{align*}
  In the first inequality we use \eqref{eq:mi-eq-for-H} and \eqref{eq:mi-eq-for-L}. In order to understand the third inequality, first recall that $M_i(j)$ contains all jobs that are assigned to machine~$i$ but unstarted at time~$r_j$. Thus, the total processing duration of these jobs that are completed within time~$\kappa \cdot t$ can be at most~$\kappa \cdot t - r_j$. The last inequality follows from~$\lambda \geq \mu_1 (\frac{1}{\theta} + \mu_2 \cdot \kappa)$ and the definition of $\dualSb_{it}$.
\end{proof}

\begin{proof}[Proof of~\Cref{theorem:minincrease}]
  We set $\kappa = \frac{23}{6}\mu$, $\theta=\frac{2}{3}$ and $\lambda = \frac{16}{3} \mu^2$. Then, weak duality, \Cref{lemma:mi-dual-objective} and \Cref{lemma:mi-dual-feasible} imply
   \begin{align*}
   \opt \geq \sum_{j} \dualVa_j - \sum_{i,t} \dualVb_{it}
   = \frac{1}{\lambda} \left( \sum_{j} \dualSa_j - \sum_{i,t} \dualSb_{it} \right) 
     = \left( \frac{1 - \mu/\kappa}{\lambda} \right) \cdot \alg.
   \end{align*}
   Since $\kappa > \mu$ and $\lambda > 0$, we conclude that
   \[
    \alg \leq \frac{\frac{16}{3} \cdot \mu^2}{1 - \frac{6}{23}} \cdot \opt = \frac{368}{51} \cdot \mu^2 \cdot \opt.
   \]
 \end{proof}

\subsection{Full Analysis of Proportional Fairness with Speed Predictions}\label{app:pf}

This section contains the detailed analysis of PF with speed predictions, and thus the proof of~\Cref{thm:pf}. It is based on the analysis of the speed-aware PF given in~\cite{ImKM18}. 

\thmPF*

Fix an instance and PF's schedule.
Let~$\kappa \geq 1$ and~$0 < \lambda < 1$ be constants which we fix later.
Recall that~$q_{jt}$ denotes the progress of job~$j$ at time~$t$. For every~$t$, consider the sorted (ascending) list~$Z^t$ composed of $w_j$ copies of~$\frac{q_{jt}}{p_j}$ for every~$j \in U_t$. Note that $Z^t$ has length $W_t$.
We define~$\zeta_t$ as the value at the index~$\floor{\lambda W_t}$ in~$Z^t$.

We first state the KKT conditions with multipliers $\{\eta_{it}\}_{i}$ and $\{\theta_{jt}\}_{j \in J(t)}$ of the optimal solution $\{y_{ijt}\}_{i,j}$ of \eqref{pf-convex} the algorithm uses at time $t$:

\begin{align}
  \frac{\hs_{ij} w_j}{\sum_{i'} \hs_{i'j} y_{i'jt}} &\leq  \theta_{jt} + \eta_{it} \quad \forall t, \forall i, \forall j\in J(t) \label{pf-kkt1}\\
  y_{ijt} \left( \frac{\hs_{ij} w_j}{\sum_{i'} \hs_{i'j} y_{i'jt}} - (\theta_{jt} + \eta_{it}) \right) &= 0 \quad \forall t, \forall i, \forall j\in J(t) \label{pf-kkt12}\\
    \theta_{jt} \left( \sum_i y_{ijt} - 1 \right) &= 0 \quad \forall t, \forall j \in J(t) \label{pf-kkt2} \\
    \eta_{it} \left( \sum_j y_{ijt} - 1 \right) &= 0 \quad \forall t, \forall i  \label{pf-kkt3} \\
    \theta_{jt}, \eta_{it} & \geq 0  \quad \forall t, \forall i, \forall j\in J(t)
\end{align}

We have the following dual assignment:
\begin{itemize}
  \item $\dualSa_j = \sum_{t' = 0}^{C_j} \dualSa_{jt'}$, where $\dualSa_{jt'} = w_j \cdot \ind \left[ \frac{q_{jt'}}{p_j} \leq \zeta_{t'} \right]$, for every job $j$,
  \item $\dualSb_{it} = \frac{1}{\kappa} \sum_{t' \geq t}  \zeta_{t'} \eta_{it'}$ for every machine $i$ and time $t$, and
  \item $\dualSc_{jt} = \frac{1}{\kappa} \sum_{t' = t}^{C_j}  \zeta_{t'} \theta_{jt'}$ for every job $j$ and time $t \geq r_j$.
\end{itemize}

The following three lemmas will conclude that the dual objective value of this assignment is close the algorithm's objective value, and thus prove~\Cref{lemma:pf-dual-objective}.

\begin{lemma}\label{lemma:pf-dual-obj1}
    $\sum_j \dualSa_j \geq \lambda \cdot \alg$
\end{lemma}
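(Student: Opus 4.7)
The plan is a straightforward double-counting argument: swap the order of summation, apply the defining property of $\zeta_{t'}$, and swap back to relate the result to $\alg$. I do not expect a substantive obstacle, since this lemma is essentially a bookkeeping identity; the only point requiring care is the integer rounding in the definition of $\zeta_{t'}$.

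First, I would rewrite the left-hand side by exchanging the sum over jobs with the sum over time, using $j \in U_{t'} \Leftrightarrow t' \leq C_j$ (note that $\dualSa_{jt'} = 0$ whenever $t' > C_j$, so no issues of release dates arise because even unreleased-but-arrived jobs would have $q_{jt'}/p_j = 0 \leq \zeta_{t'}$):
\[
\sum_j \dualSa_j \;=\; \sum_{t'} \sum_{j \in U_{t'}} w_j \cdot \ind\!\left[\frac{q_{jt'}}{p_j} \leq \zeta_{t'}\right].
\]
For a fixed time $t'$, I would interpret the inner sum as counting the entries of the multiset $Z^{t'}$ that are at most $\zeta_{t'}$: recall that $Z^{t'}$ is built by placing exactly $w_j$ copies of $q_{jt'}/p_j$ for each $j \in U_{t'}$, which is well-defined under the integer-weight assumption already made in the analysis. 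Since $\zeta_{t'}$ is the value at position $\lfloor \lambda W_{t'} \rfloor$ of the ascending sort of $Z^{t'}$, at least $\lfloor \lambda W_{t'} \rfloor$ entries of $Z^{t'}$ do not exceed $\zeta_{t'}$, which is at least $\lambda W_{t'}$ up to at most one unit of rounding.

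Swapping back one more time gives
\[
\sum_{t'} W_{t'} \;=\; \sum_j w_j \cdot \bigl|\{t' \geq 0 : t' \leq C_j\}\bigr| \;\geq\; \sum_j w_j C_j \;=\; \alg,
\]
so $\sum_j \dualSa_j \geq \lambda \sum_{t'} W_{t'} \geq \lambda \alg$, as required. The only technicality is the $\lfloor \cdot \rfloor$ rounding in the definition of $\zeta_{t'}$; this unit-per-timestep loss is absorbed by the standard weight/processing-time scaling used throughout the dual-fitting analysis (alternatively by reinterpreting the index as $0$-based, which yields $\lfloor \lambda W_{t'} \rfloor + 1 \geq \lambda W_{t'}$ entries). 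No genuine difficulty arises, and the argument is a pure double-counting identity.
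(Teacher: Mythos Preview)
Your proposal is correct and follows essentially the same approach as the paper: exchange the order of summation, bound the inner sum at each time $t'$ by $\lambda W_{t'}$ via the definition of $\zeta_{t'}$, and sum back over time to obtain $\lambda\,\alg$. Your treatment is in fact more careful than the paper's, which asserts $\sum_{j\in U_t}\dualSa_{jt}\geq\lambda W_t$ without discussing the $\lfloor\cdot\rfloor$ rounding; your remark that this is absorbed by scaling (or by a $0$-based index convention) is exactly the right way to close that small gap.
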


\begin{proof}
Consider a time~$t$ and the list~$Z^t$. Observe that~$\sum_{j \in U_t} \dualSa_{jt}$ contains for every job~$j$ which satisfies~$\frac{q_{jt}}{p_j} \leq \zeta_{t}$ its weight~$w_j$. By the definitions of~$Z_t$ and~$\zeta_t$, we conclude that this is at least~$\lambda W_t$, i.e.,~$\sum_{j \in U_t} \dualSa_{jt} \geq \lambda W_t$.
  The statement then follows by summing over all times~$t$.
\end{proof}

\begin{lemma}\label{lemma:kkt-multiplier-bound}
  At any time $t$, $\sum_i \eta_{it}  + \sum_{j \in J(t)} \theta_{jt} \leq W_t$.
\end{lemma}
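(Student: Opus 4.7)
The plan is to exploit the four KKT conditions to convert the stationarity inequality into an equality that, once summed appropriately, directly compares $\sum_i \eta_{it} + \sum_{j\in J(t)} \theta_{jt}$ with the total weight of alive jobs. The key observation is that $y_{ijt}$ itself is the right multiplier to use: it turns the conditional equality from complementary slackness \eqref{pf-kkt12} into an unconditional identity for every pair $(i,j)$.

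Concretely, I would fix $t$ and start from \eqref{pf-kkt12}, which yields
\[
  \frac{\hs_{ij} w_j\, y_{ijt}}{\sum_{i'} \hs_{i'j} y_{i'jt}} \;=\; (\theta_{jt} + \eta_{it})\, y_{ijt}
\]
for every $i$ and every $j\in J(t)$ (with the denominator positive for any job the algorithm is actually processing; jobs with identically zero speed vector contribute nothing to either side). Summing this identity over all $i\in I$ and $j\in J(t)$, the left-hand side telescopes as
\[
  \sum_{j\in J(t)} w_j \cdot \frac{\sum_i \hs_{ij} y_{ijt}}{\sum_{i'} \hs_{i'j} y_{i'jt}} \;=\; \sum_{j\in J(t)} w_j.
\]
The right-hand side splits as $\sum_{j\in J(t)} \theta_{jt}\bigl(\sum_i y_{ijt}\bigr) + \sum_i \eta_{it}\bigl(\sum_{j\in J(t)} y_{ijt}\bigr)$.

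Here is where the two packing complementary slackness conditions \eqref{pf-kkt2} and \eqref{pf-kkt3} come in: whenever $\theta_{jt} > 0$, condition \eqref{pf-kkt2} forces $\sum_i y_{ijt} = 1$, and terms with $\theta_{jt} = 0$ contribute nothing anyway, so $\sum_{j\in J(t)} \theta_{jt}\sum_i y_{ijt} = \sum_{j\in J(t)} \theta_{jt}$; the analogous argument using \eqref{pf-kkt3} gives $\sum_i \eta_{it}\sum_{j\in J(t)} y_{ijt} = \sum_i \eta_{it}$. Combining, I obtain the exact identity
\[
  \sum_i \eta_{it} + \sum_{j\in J(t)} \theta_{jt} \;=\; \sum_{j\in J(t)} w_j.
\]
Since $J(t)\subseteq U_t$ and weights are non-negative, the right-hand side is at most $W_t$, which proves the lemma.

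I expect no real obstacle: the argument is the standard KKT weak-duality/identity trick, with the only subtlety being a brief justification that the denominator $\sum_{i'}\hs_{i'j} y_{i'jt}$ is positive for each $j\in J(t)$ that contributes to the objective of \eqref{pf-convex} (otherwise the logarithm would be $-\infty$ and the job can be excluded from the sum without changing anything).
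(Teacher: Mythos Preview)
Your proof is correct and is essentially the same as the paper's: both multiply the stationarity condition \eqref{pf-kkt12} by $y_{ijt}$, sum over $i$ and $j\in J(t)$, and use the complementary slackness conditions \eqref{pf-kkt2}--\eqref{pf-kkt3} to collapse the sums, obtaining the exact identity $\sum_i \eta_{it} + \sum_{j\in J(t)} \theta_{jt} = \sum_{j\in J(t)} w_j \leq W_t$. The only cosmetic difference is the order of presentation---the paper starts from the left-hand side and works toward $\sum_{j\in J(t)} w_j$, whereas you start from \eqref{pf-kkt12} and work outward.
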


\begin{proof}
  At any time $t$ holds
  \begin{align*}
    \sum_i \eta_{it} + \sum_{j \in J(t)} \theta_{jt}
    &= \left( \sum_i \eta_{it} \sum_{j\in J(t)} y_{ijt} \right) + \left( \sum_{j \in J(t)} \theta_{jt} \sum_{i} y_{ijt} \right) \\
    &= \sum_{i} \sum_{j\in J(t)}  y_{ijt} (\eta_{it} + \theta_{jt}) \\
    &= \sum_{i} \sum_{j\in J(t)}  y_{ijt} \frac{\hs_{ij} w_j}{\sum_{i'} \hs_{i'j} y_{i'jt}} \\
    &= \sum_{j\in J(t)} \sum_i \hs_{ij} y_{ijt} \frac{w_j}{\sum_{i'} \hs_{i'j} y_{i'jt}} = \sum_{j \in J(t)} w_j \leq W_t.
  \end{align*}
  The first equality is due to~\eqref{pf-kkt2} and~\eqref{pf-kkt3}, and the third equality due to~\eqref{pf-kkt12}.
\end{proof}

\begin{lemma}\label{lemma:pf-dual-obj2}
    At any time $t$, $\sum_i \dualSb_{it} + \sum_{j \in J: t \geq r_j} \dualSc_{jt} \leq \frac{4}{(1-\lambda) \kappa} W_t$.
\end{lemma}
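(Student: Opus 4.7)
The plan is a three-step reduction: unfold the definitions of $\dualSb_{it}$ and $\dualSc_{jt}$ and swap the order of summation so that all future times $t'\geq t$ share a common factor $\zeta_{t'}$; apply Lemma~B.2 at every $t'$ to replace the KKT multipliers by $W_{t'}$; then exploit the quantile characterization of $\zeta_{t'}$ together with a telescoping on total progress to charge the resulting sum to $W_t$.

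First I would write
\[
\sum_i \dualSb_{it} + \sum_{j\,:\,t\geq r_j} \dualSc_{jt}
\;=\; \frac{1}{\kappa}\sum_{t'\geq t} \zeta_{t'}\Big(\sum_i \eta_{it'} + \!\!\sum_{j\,:\,r_j\leq t\leq t'\leq C_j}\!\! \theta_{jt'}\Big).
\]
The inner job set is contained in $J(t')$ (because $r_j\leq t\leq t'\leq C_j$), and all multipliers are non-negative, so Lemma~B.2 bounds the parenthesized expression by $W_{t'}$. The task is therefore reduced to controlling $\sum_{t'\geq t} \zeta_{t'}\,W_{t'}$.

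Next I would exploit that $\zeta_{t'}$ sits at position $\lfloor\lambda W_{t'}\rfloor$ of the sorted list $Z^{t'}$: at least $W_{t'}-\lfloor\lambda W_{t'}\rfloor \geq (1-\lambda)W_{t'}$ entries are $\geq \zeta_{t'}$, so summing the full list yields
\[
\sum_{j\in U_{t'}} w_j\,\frac{q_{jt'}}{p_j} \;\geq\; (1-\lambda)\,\zeta_{t'}\,W_{t'}.
\]
Finally, I would swap the order of summation and use that each job's fractional progress integrates to at most one ($\sum_{t'\geq r_j} q_{jt'} \leq p_j$):
\[
\sum_{t'\geq t}\sum_{j\in U_{t'}} w_j\,\frac{q_{jt'}}{p_j}
\;=\; \sum_{j\,:\,C_j\geq t} w_j\!\!\sum_{t'=\max(t,r_j)}^{C_j}\!\! \frac{q_{jt'}}{p_j}
\;\leq\; \sum_{j\,:\,C_j\geq t} w_j \;=\; W_t.
\]
Chaining the three steps gives $\sum_i \dualSb_{it} + \sum_{j\,:\,t\geq r_j} \dualSc_{jt} \leq \tfrac{1}{(1-\lambda)\kappa}\,W_t$; the larger constant $\tfrac{4}{(1-\lambda)\kappa}$ in the stated bound absorbs rounding losses from $\lfloor\lambda W_{t'}\rfloor$ (especially when $\lambda W_{t'}<1$) and some slack kept for cleaner accounting.

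The step I expect to be the main obstacle is the quantile step: one must pass from the pointwise property that $\zeta_{t'}$ is the $\lfloor\lambda W_{t'}\rfloor$-th order statistic to an integrated lower bound on $\sum_j w_j\, q_{jt'}/p_j$, and simultaneously respect an asymmetry already visible in Step~1 -- jobs with $r_j>t'$ are counted in $W_{t'}$ (so they sit in $Z^{t'}$ with value $0$), yet they legitimately do not appear in the sum defining $\dualSc_{jt}$. Keeping these two bookkeeping conventions aligned, rather than the telescoping itself, is where care is required.
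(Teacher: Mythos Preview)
Your argument is correct and in fact sharper than the paper's. Both proofs reach the same intermediate point $\tfrac{1}{\kappa}\sum_{t'\geq t}\zeta_{t'}W_{t'}$ via Lemma~\ref{lemma:kkt-multiplier-bound}, and both use the quantile inequality $(1-\lambda)\zeta_{t'}W_{t'}\leq \sum_{j\in U_{t'}} w_j\,q_{jt'}/p_j$. The difference is in how this last sum is controlled. The paper partitions $[t,\infty)$ into dyadic levels $M_k$ where $W_{t'}\in(W_t/2^k,\,W_t/2^{k-1}]$, bounds $\sum_{t'\in M_k}\zeta_{t'}\leq \tfrac{2}{1-\lambda}$ on each level (losing a factor~$2$ from estimating $W_{t'}$ both above and below within a level), and then sums the geometric series over~$k$ (losing another factor~$2$); this is where the constant~$4$ comes from. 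You instead swap the order of summation directly and use $\sum_{t'\leq C_j} q_{jt'}/p_j\leq 1$ once per job in $U_t$, obtaining $\sum_{t'\geq t}\sum_{j\in U_{t'}} w_j\,q_{jt'}/p_j\leq W_t$ and hence the stronger bound $\tfrac{1}{(1-\lambda)\kappa}W_t$. So your final remark is slightly off: the extra factor~$4$ is not absorbing rounding in $\lfloor\lambda W_{t'}\rfloor$---that step is already clean, since $W_{t'}-\lfloor\lambda W_{t'}\rfloor\geq(1-\lambda)W_{t'}$ holds exactly---but is an artifact of the dyadic decomposition, which your telescoping avoids entirely. Your bookkeeping concern about jobs with $r_j>t$ is also harmless: they only enlarge $J(t')$ and $U_{t'}$, which is the right direction for an upper bound.
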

\begin{proof}
  Fix a time $t$. Observe that for every $t' \geq t$ the definitions of $Z_{t'}$ and $\zeta_{t'}$ imply $(1 - \lambda) W_{t'} \leq \sum_{j \in U_{t'}} w_j \cdot \ind \left[ \frac{q_{jt'}}{p_j} \geq \zeta_{t'} \right]$. Thus,
  \begin{equation}  
    \zeta_{t'} \cdot (1 - \lambda) W_{t'} \leq  \sum_{j \in U_{t'}} w_j \cdot \zeta_{t'} \cdot \ind \left[ \frac{q_{jt'}}{p_j} \geq \zeta_{t'} \right] \leq \sum_{j \in U_{t'}} w_j \cdot \frac{q_{jt'}}{p_j} \cdot \ind \left[ \frac{q_{jt'}}{p_j} \geq \zeta_{t'} \right]. \label{eq:pf-dual-obj2-eq1}
  \end{equation}
  
We define a partition $\{M_k\}_{k \geq 1}$ of the time interval $[t, \infty)$ such that the total weight of unfinished jobs at all times during~$M_k$ is part of $(\frac{1}{2^k} W_t, \frac{1}{2^{k-1}} W_t]$. Fix a $k \geq 1$. Rearranging \eqref{eq:pf-dual-obj2-eq1} and estimating the total weight of unfinished jobs in a partition against both its upper and lower bound yields
  \begin{align*}
    \sum_{t' \in M_k} \zeta_{t'} &\leq \sum_{t' \in M_k} \frac{1}{1 - \lambda} \sum_{j \in U_{t'}} \frac{w_j}{W_{t'}} \cdot \frac{q_{jt'}}{p_j} \cdot \ind \left[ \frac{q_{jt'}}{p_j} \geq \zeta_{t'} \right] \\
    &\leq \frac{1}{1 - \lambda} \sum_{t' \in M_k}  \sum_{j \in U_{t'}} \frac{w_j}{W_{t'}} \cdot \frac{q_{jt'}}{p_j}  \\
    &\leq \frac{2^k}{(1 - \lambda) W_t} \sum_{t' \in M_k}  \sum_{j \in U_{t'}} w_j \cdot \frac{q_{jt'}}{p_j}  \\
    &\leq \frac{2^k \cdot W_t}{(1 - \lambda) W_t \cdot 2^{k-1}} = \frac{2}{1 - \lambda}.
  \end{align*}

  The definitions of $\dualSb_{it}$ and $\dualSc_{jt}$ and \Cref{lemma:kkt-multiplier-bound} imply
  \begin{align*}
    \sum_i \dualSb_{it} + \sum_{j \in J: t \geq r_j} \dualSc_{jt} &= \left( \sum_i \frac{1}{\kappa}\sum_{t' \geq t}   \eta_{it'} \cdot \zeta_{t'} \right)  + \left( \sum_{j \in J: t \geq r_j} \frac{1}{\kappa} \sum_{t' = t}^{C_j}  \theta_{jt'} \cdot \zeta_{t'} \right) \\
    &=  \frac{1}{\kappa} \sum_{t' \geq t} \zeta_{t'} \left(\sum_i \eta_{it'} + \sum_{j \in J(t')} \theta_{jt'} \right) \leq \frac{1}{\kappa} \sum_{t' \geq t} \zeta_{t'} W_{t'}.
  \end{align*}

  By dividing the time after $t$ into the partition $\{M_k\}_{k \geq 1}$ and using our bound on $\sum_{t' \in M_k} \zeta_{t'}$, we conclude that this is at most
  \begin{align*}
    \frac{1}{\kappa} \sum_{k \geq 1} \sum_{t' \in M_k} \zeta_{t'} W_{t'} 
    \leq \frac{1}{\kappa} \sum_{k \geq 1} \frac{W_t}{2^{k-1}} \sum_{t' \in M_k} \zeta_{t'}
    \leq \frac{2}{\kappa (1-\lambda)} W_t \sum_{k \geq 1}  \frac{1}{2^{k-1}} 
    \leq \frac{4}{\kappa (1-\lambda)} W_t.
  \end{align*}
  The last inequality uses a bound on the geometric series.
\end{proof}

\pfDualObj*

\begin{proof}
  Follows directly from~\Cref{lemma:pf-dual-obj1,lemma:pf-dual-obj2}.
\end{proof}

\pfDualFeasibility*

\begin{proof}
    First observe that for every $t$ and $j$ holds
    \begin{equation}
      \sum_i \hs_{ij} y_{ijt} \leq \mu_1 \sum_i s_{ij} y_{ijt} = \mu_1 \cdot q_{jt}. \label{eq:pred-to-real-progress}
    \end{equation}

    Fix a job $j$, a machine $i$ and a time $t \geq r_j$. 
    \begin{align*}
        \frac{ \dualSa_j s_{ij}}{p_j} - w_j \cdot \frac{t \cdot s_{ij}}{p_j}
        &\leq  s_{ij} \cdot \sum_{t' = t}^{C_j} \frac{\dualSa_{jt'}}{p_j} \\
        &=   s_{ij} \cdot \sum_{t' = t}^{C_j} \frac{w_j}{p_j} \cdot \ind\left[ \frac{q_{jt'}}{p_j} \leq \zeta_{t'} \right] \\
        &=   s_{ij} \cdot  \sum_{t' = t}^{C_j} \frac{w_j}{\sum_{i'} \hs_{i'j} y_{i'jt'}} \cdot \frac{\sum_{i'} \hs_{i'j} y_{i'jt'}}{q_{jt'}}  \cdot \frac{q_{jt'}}{p_j} \cdot \ind\left[ \frac{q_{jt'}}{p_j} \leq \zeta_{t'} \right] \\
        &\leq \mu_1 \cdot \mu_2 \cdot \sum_{t' = t}^{C_j} \frac{\hs_{ij} w_j}{\sum_{i'} \hs_{i'j} y_{i'jt'}}  \cdot \frac{q_{jt'}}{p_j} \cdot \ind\left[ \frac{q_{jt'}}{p_j} \leq \zeta_{t'} \right] \\
        &\leq   \mu \cdot \sum_{t' = t}^{C_j} \left( \eta_{it'} + \theta_{jt'} \right) \cdot \frac{q_{jt'}}{p_j} \cdot \ind\left[ \frac{q_{jt'}}{p_j} \leq \zeta_{t'} \right] \\
        &\leq   \mu \cdot \sum_{t' = t}^{C_j} \left( \eta_{it'} + \theta_{jt'} \right) \cdot \zeta_{t'} \\
        &\leq  \mu \kappa \cdot  \frac{1}{\kappa} \left(\sum_{t' \geq t}  \eta_{it'} \cdot \zeta_{t'} \right) + \mu \kappa \cdot \left(  \frac{1}{\kappa} \sum_{t' = t}^{C_j} \theta_{jt'} \cdot \zeta_{t'} \right) \\
        &=  \mu \kappa \cdot \dualSb_{it}  + \mu \kappa \cdot \dualSc_{jt}.
    \end{align*}

    The second inequality uses~\eqref{eq:pred-to-real-progress} and the third inequality uses~\eqref{pf-kkt1}. Since $\alpha = \kappa \mu$, this dual assignment indeed satisfies the constraint of \eqref{pf-dual}.
\end{proof}

\section{Details on Round Robin for Speed-Ordered Machines}\label{app:rr-speed-ordered}

\subsection{Missing Details for the Analysis for Unrelated Machines}\label{app:rr-speed-ordered-unrelated}

This section contains missing details for the proof of \Cref{thm:unrelated-speed-order}, which we firstly restate:

\thmRoundRobinSOunrelated*

\begin{proposition}\label{rr-so-prop1}
  At any time $t$, $\sum_i \beta_{it} \leq \bigO(\log(\min\{n,m\})) \cdot \abs{U_t}$.
\end{proposition}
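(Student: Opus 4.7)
The plan is to unfold the definition of $\beta_{it}$ and recognize the sum as a harmonic number scaled by $\abs{J(t)}$, then invoke the trivial inclusion $J(t) \subseteq U_t$ together with the fact that $\abs{J(t)} \leq n$.

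More concretely, I would first observe that by definition $\beta_{it} = \frac{\abs{J(t)}}{i}$ for $i \leq \abs{J(t)}$ and $\beta_{it} = 0$ otherwise, so the sum over all $m$ machines truncates at $m_t = \min\{m,\abs{J(t)}\}$. Hence
\[
  \sum_{i} \beta_{it} \;=\; \abs{J(t)} \sum_{i=1}^{m_t} \frac{1}{i} \;=\; \abs{J(t)} \cdot H_{m_t},
\]
where $H_k$ denotes the $k$-th harmonic number.

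Next I would bound $H_{m_t}$. Since $m_t \leq m$ and $m_t \leq \abs{J(t)} \leq n$, we have $m_t \leq \min\{m,n\}$, and therefore $H_{m_t} = \bigO(\log(\min\{m,n\}))$ by the standard estimate $H_k \leq 1 + \ln k$.

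Finally, since every alive job at time $t$ is in particular unfinished, $J(t) \subseteq U_t$, so $\abs{J(t)} \leq \abs{U_t}$. Combining the two estimates yields
\[
  \sum_{i} \beta_{it} \;\leq\; \abs{U_t} \cdot \bigO(\log(\min\{n,m\})),
\]
as desired. There is no real obstacle here; the statement is essentially a bookkeeping calculation, with the only subtlety being the (easy) inclusion $J(t)\subseteq U_t$ which lets one replace $\abs{J(t)}$ by $\abs{U_t}$ on the right-hand side.
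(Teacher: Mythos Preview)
Your proof is correct and essentially identical to the paper's: both unfold the definition of $\beta_{it}$, recognize the truncated harmonic sum $\sum_{i=1}^{m_t} \tfrac{1}{i}$, bound it by $\bigO(\log(\min\{n,m\}))$ via $m_t \leq \min\{m,n\}$, and replace $\abs{J(t)}$ by $\abs{U_t}$ using the inclusion $J(t)\subseteq U_t$. The paper's version is just more terse.
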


\begin{proof}
  At any time $t$,
  \[
    \sum_{i \in I} \beta_{it} = \sum_{i = 1}^{m_{t}} \frac{1}{i} \cdot \abs{J(t)} \leq \abs{U_t} \sum_{i = 1}^{m_{t}} \frac{1}{i} \leq \bigO(\log(\min\{n,m\})) \cdot \abs{U_t},
  \]
  where in the last inequality we use that $m_t = \min\{m, \abs{J(t)}\} \leq \min\{m, n\}$.
\end{proof}

\begin{proposition}\label{rr-so-prop2}
  At any time $t$, $\sum_{j \in J: r_j \geq t} \gamma_{jt} \leq \abs{U_t}$.
\end{proposition}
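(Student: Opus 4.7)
The plan is to unfold the two restrictions that govern the left-hand side and observe that they are so tight that only a handful of jobs can contribute. Recall the definition $\gamma_{jt} = \ind[j \in J(t)]$, and note that $J(t) = \{ j \in J \mid r_j \leq t \leq C_j\}$, so a term $\gamma_{jt}$ is nonzero only when $r_j \leq t$ (and $t \leq C_j$). On the other hand, the summation in the proposition is explicitly taken over jobs with $r_j \geq t$. Intersecting these two conditions yields $r_j = t$, together with the implicit requirement $t \leq C_j$.

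Given this, the approach is as follows. First I would pull the indicator out of $\gamma_{jt}$ and rewrite
\[
\sum_{j \in J : r_j \geq t} \gamma_{jt} = \sum_{j \in J : r_j \geq t} \ind[r_j \leq t \leq C_j] = |\{j \in J : r_j = t \text{ and } t \leq C_j\}|.
\]
Next I would observe that every job $j$ counted on the right satisfies $r_j = t \leq C_j$ and is therefore unfinished at time $t$, i.e., $j \in U_t$. Consequently
\[
|\{j \in J : r_j = t \text{ and } t \leq C_j\}| \leq |U_t|,
\]
which is exactly the desired inequality.

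There is no real technical obstacle here: the claim reduces to pure bookkeeping on the indicator structure of $\gamma_{jt}$. The only subtle point worth highlighting in the write-up is that the two conditions $r_j \leq t$ (enforced by $\gamma_{jt}$ being nonzero) and $r_j \geq t$ (enforced by the summation) collapse the index set to jobs released exactly at time $t$, all of which lie in $J(t) \subseteq U_t$ since being alive at time $t$ implies being unfinished at time $t$.
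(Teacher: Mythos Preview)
Your proof is correct for the statement as written; the paper itself gives no explicit proof, treating the proposition as immediate from the definitions. Your unpacking of the indicator and the observation that the two constraints force $r_j = t$ is exactly the kind of bookkeeping the paper is implicitly relying on.

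One remark worth making: the index condition $r_j \geq t$ is almost certainly a typo for $r_j \leq t$ (compare the analogous bound in \Cref{lemma:pf-dual-obj2}, which sums over $t \geq r_j$, and the related-machines version \Cref{rr-so-prop2-related}, which sums over $J(t)$; also, $\dualSc_{jt}$ is only defined for $t \geq r_j$). With the intended condition the computation is the even simpler $\sum_{j : r_j \leq t} \gamma_{jt} = |J(t)| \leq |U_t|$. The fact that your two constraints collapsed the index set to the single slice $r_j = t$ is precisely the tell that the stated condition is off, but either reading yields the bound and your argument is sound.
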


\begin{lemma}\label{lemma:so-rr-dual-obj1}
  $\sum_{j} \dualSa_j \geq \frac{1}{2} \cdot \alg$.
\end{lemma}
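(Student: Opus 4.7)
My plan is to unfold the definition of $\dualSa_j$, swap the order of summation into a sum over times, and lower bound each time-slice using the median property of $\zeta_t$. This mirrors the argument in \Cref{lemma:pf-dual-obj1} specialized to unit weights with $\lambda = \tfrac{1}{2}$. Concretely, I would rewrite
\[
\sum_{j} \dualSa_j \;=\; \sum_{t \geq 0}\,\bigl|\{j : C_j \geq t,\; \tfrac{q_{jt}}{p_j} \leq \zeta_t\}\bigr|,
\]
and then bound the inner count below at each $t$.

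For a fixed time $t$, I would split the set $\{j : C_j \geq t\}$ into (a) \emph{unreleased} jobs with $r_j > t$, and (b) \emph{alive} jobs $j \in U_t$. For type (a) the progress $q_{jt} = 0$ satisfies the indicator trivially. For type (b) the list $Z^t$ consists of the $|U_t|$ values $\tfrac{q_{jt}}{p_j}$ sorted ascending, and $\zeta_t$ is its entry at position $\lfloor |U_t|/2 \rfloor$, so at least $\lfloor |U_t|/2 \rfloor$ alive jobs satisfy $\tfrac{q_{jt}}{p_j} \leq \zeta_t$. Combining and using that the unreleased count is non-negative (hence at least half of itself), the time-$t$ contribution is at least $\tfrac{1}{2}|\{j : C_j \geq t\}|$. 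Summing over $t$ and using $\sum_{t \geq 0}\ind[C_j \geq t] \geq C_j$ yields $\sum_j \dualSa_j \geq \tfrac{1}{2}\sum_j C_j = \tfrac{1}{2}\alg$.

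The main obstacle is the off-by-one from the floor in the median index: when $|U_t|$ is odd, $\lfloor |U_t|/2 \rfloor$ undershoots $|U_t|/2$ by $\tfrac{1}{2}$. This slack is easily absorbed, either by the (dropped) full contribution of unreleased jobs, by tracking $\ind[C_j \geq t]$ summed starting at $t = 0$ which gives $C_j + 1 \geq C_j$, or simply by a rounding convention that anchors $\zeta_t$ at $\lceil |U_t|/2 \rceil$ instead of $\lfloor |U_t|/2 \rfloor$. Once this bookkeeping is settled, the swap-and-median argument immediately delivers the stated constant $\tfrac{1}{2}$.
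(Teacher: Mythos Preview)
Your proposal is correct and follows essentially the same approach as the paper, which simply invokes \Cref{lemma:pf-dual-obj1} specialized to unit weights and $\lambda = \tfrac{1}{2}$. One minor point: in the paper's notation $U_t$ already includes unreleased jobs (so $Z^t$ contains their zero entries), which makes your case split into (a) and (b) unnecessary---the median argument over $Z^t$ directly yields at least $\lfloor |U_t|/2\rfloor$ unfinished jobs satisfying the indicator, and the paper, like most such analyses, silently absorbs the floor.
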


\begin{proof}
  Analogous to the proof of~\Cref{lemma:pf-dual-obj1}.
\end{proof}

\begin{lemma}\label{lemma:so-rr-dual-obj2}
  At any time $t$, $\sum_{i} \dualSb_{it} \leq \bigO(1) \cdot \abs{U_t}$.
\end{lemma}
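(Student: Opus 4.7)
The plan is to unwind the definition of $\dualSb_{it}$, swap summations, apply \Cref{rr-so-prop1}, and then bound $\sum_{t'\geq t}\zeta_{t'}\abs{U_{t'}}$ by $\bigO(1)\cdot\abs{U_t}$ using a geometric-halving argument analogous to (but slightly simpler than) the one used in the proof of \Cref{lemma:pf-dual-obj2}. By definition,
\[
\sum_i \dualSb_{it} \;=\; \frac{1}{\kappa}\sum_{t'\geq t}\zeta_{t'}\sum_i \beta_{it'} \;\leq\; \frac{c\,\log(\min\{n,m\})}{\kappa}\sum_{t'\geq t}\zeta_{t'}\abs{U_{t'}},
\]
where the inequality uses \Cref{rr-so-prop1} together with $\abs{J(t')}\leq\abs{U_{t'}}$. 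Since $\kappa=\Theta(\log(\min\{n,m\}))$, the logarithmic factor cancels, and the proof reduces to showing $\sum_{t'\geq t}\zeta_{t'}\abs{U_{t'}}\leq \bigO(1)\cdot\abs{U_t}$.

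For the remaining bound, I would first observe from the definition of $\zeta_{t'}$ as the value at index $\lfloor\tfrac12\abs{U_{t'}}\rfloor$ in the sorted list $Z^{t'}$ that at least half the jobs in $U_{t'}$ satisfy $\tfrac{q_{jt'}}{p_j}\geq\zeta_{t'}$, hence
\[
\zeta_{t'}\cdot\tfrac{\abs{U_{t'}}}{2}\;\leq\;\sum_{j\in U_{t'}}\tfrac{q_{jt'}}{p_j}\,\ind\!\left[\tfrac{q_{jt'}}{p_j}\geq\zeta_{t'}\right]\;\leq\;\sum_{j\in U_{t'}}\tfrac{q_{jt'}}{p_j}.
\]
Then I would partition $[t,\infty)$ into blocks $M_k=\{t':\abs{U_{t'}}\in(\abs{U_t}/2^k,\;\abs{U_t}/2^{k-1}]\}$ for $k\geq 1$, and swap summations on each block:
\[
\sum_{t'\in M_k}\sum_{j\in U_{t'}}\tfrac{q_{jt'}}{p_j} \;\leq\; \sum_{j\in U_{\min M_k}}\sum_{t'=r_j}^{C_j}\tfrac{q_{jt'}}{p_j} \;=\; \abs{U_{\min M_k}} \;\leq\; \tfrac{\abs{U_t}}{2^{k-1}},
\]
where I use the crucial identity $\sum_{t'=r_j}^{C_j}\tfrac{q_{jt'}}{p_j}=1$ and the monotonicity $U_{t'}\subseteq U_{\min M_k}$ for $t'\in M_k$. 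Combining the two displays gives $\sum_{t'\in M_k}\zeta_{t'}\abs{U_{t'}}\leq\abs{U_t}/2^{k-2}$, and summing the geometric series over $k\geq 1$ yields $\sum_{t'\geq t}\zeta_{t'}\abs{U_{t'}}\leq 4\abs{U_t}$, finishing the lemma.

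The main obstacle is the estimate in the last display: we need to control $\sum_{t'\in M_k}\sum_{j\in U_{t'}}\tfrac{q_{jt'}}{p_j}$ by something that scales with $\abs{U_t}/2^{k-1}$ rather than with $\abs{M_k}$, since the latter could be arbitrarily large. The resolution is to exchange the order of summation and exploit that each job contributes total fractional progress exactly $1$ over its entire lifetime, while the jobs contributing during $M_k$ are all contained in $U_{\min M_k}$, which in turn is small by the definition of $M_k$. This is the place where the specific choice $\lambda=\tfrac12$ in the definition of $\zeta_{t'}$ enters, and where the unit-weight assumption lets us identify $W_t$ with $\abs{U_t}$.
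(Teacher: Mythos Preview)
Your proof is correct and follows essentially the same route as the paper, which simply says ``analogous to the proof of \Cref{lemma:pf-dual-obj2} when using \Cref{rr-so-prop1} and the fact that $\kappa = \Theta(\log(\min\{m, n\}))$.'' You have faithfully unpacked that analogy: apply \Cref{rr-so-prop1} so that the logarithmic factor is absorbed by $\kappa$, then run the geometric-halving argument of \Cref{lemma:pf-dual-obj2} with $\lambda=\tfrac12$ and $W_t=\abs{U_t}$ to bound $\sum_{t'\geq t}\zeta_{t'}\abs{U_{t'}}$.
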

 
\begin{proof}
  Analogous to the proof of~\Cref{lemma:pf-dual-obj2} when using~\Cref{rr-so-prop1} and the fact that $\kappa = \Theta(\log(\min\{m, n\}))$.
\end{proof}

\begin{lemma}\label{lemma:so-rr-dual-obj3}
  At any time $t$, $\sum_{j \in J: r_j \geq t} \dualSc_{jt} \leq \bigO(1) \cdot \abs{U_t}$.
\end{lemma}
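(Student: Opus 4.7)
The plan is to first rewrite the left-hand side by substituting the definition $\dualSc_{jt} = \frac{1}{\kappa}\sum_{t' \ge t} \gamma_{jt'} \zeta_{t'}$ and swapping the order of summation. For jobs $j$ with $r_j \ge t$, the indicator $\gamma_{jt'} = \ind[r_j \le t' \le C_j]$ automatically enforces $t' \ge r_j \ge t$, so the combined sum rearranges to
\[
\sum_{j:\, r_j \ge t} \dualSc_{jt} \;=\; \frac{1}{\kappa} \sum_{t' \ge t} \zeta_{t'} \cdot N^*_{t'},
\qquad N^*_{t'} := \bigl|\{j : t \le r_j \le t',\ t' \le C_j\}\bigr|.
\]
The critical observation is that every job counted by $N^*_{t'}$ is alive at time $t'$, so $N^*_{t'} \le |U_{t'}|$, and, because such a job is alive from its (post-$t$) release until at least $t'$, each one is among the jobs released during $[t,t']$ that still contribute at time $t'$.

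Next I would invoke the median property defining $\zeta_{t'}$: since $\zeta_{t'}$ is the value at index $\floor{|U_{t'}|/2}$ in $Z^{t'}$, at least $|U_{t'}|/2$ jobs in $U_{t'}$ satisfy $q_{jt'}/p_j \ge \zeta_{t'}$. This yields the workhorse inequality
\[
\zeta_{t'} \cdot \tfrac{|U_{t'}|}{2} \;\le\; \sum_{j \in U_{t'}} \frac{q_{jt'}}{p_j},
\quad\text{hence}\quad
\zeta_{t'} \cdot N^*_{t'} \;\le\; 2 \sum_{j \in U_{t'}} \frac{q_{jt'}}{p_j}.
\]

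Then I would run the PF-style dyadic decomposition already used in the proofs of \Cref{lemma:pf-dual-obj2} and \Cref{lemma:so-rr-dual-obj2}: partition $[t,\infty)$ into windows $\{M_k\}_{k\ge 1}$ where $N^*_{t'} \in (|U_t|/2^k,\; |U_t|/2^{k-1}]$ (using $N^*_{t'} \le |U_t|$ and its monotone behavior as $t'$ grows). Inside $M_k$, summing the workhorse inequality and swapping to sum over jobs first, one uses that for each job $j$ the telescoping sum $\sum_{t':\, j \in U_{t'}} q_{jt'}/p_j \le 1$, so the contribution of $M_k$ geometrically decays. Summing the geometric series over $k$ contributes an $O(\log(\min\{n,m\})) = O(\kappa)$ factor, which is precisely absorbed by the leading $1/\kappa$, leaving a bound $O(1) \cdot |U_t|$ as claimed.

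The main obstacle I expect is the same one faced in \Cref{lemma:so-rr-dual-obj2}: when new jobs arrive after time $t$, $|U_{t'}|$ may exceed $|U_t|$, so the dyadic windows must be measured against $N^*_{t'}$ (which stays bounded by $|U_t|$) rather than against $|U_{t'}|$, while the median bound is stated in terms of $|U_{t'}|$. Bridging these two quantities cleanly requires using $N^*_{t'} \le \min\{|U_t|,\, |U_{t'}|\}$ inside each dyadic window; this is precisely the accounting trick that makes the $1/\kappa$ cancellation against the $\Theta(\log(\min\{n,m\}))$ number of relevant windows go through, mirroring the analysis of \Cref{lemma:so-rr-dual-obj2}.
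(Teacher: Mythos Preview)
Your dyadic decomposition based on $N^*_{t'}$ is where the argument breaks. First, $N^*_{t'}$ is \emph{not} monotone in $t'$: jobs released after time $t$ increase it while completions decrease it, so the windows $M_k$ need not be intervals. Once the $M_k$ are arbitrary sets, the telescoping bound $\sum_{t'\in M_k}\sum_{j\in U_{t'}} q_{jt'}/p_j$ can only be controlled by the number of jobs in $U_t$ that ever appear in some $t'\in M_k$, which is $|U_t|$ rather than $|U_t|/2^{k-1}$. Hence each level contributes $\Theta(|U_t|)$ and there is no geometric decay. Summing over the $O(\log|U_t|)\le O(\log n)$ non-empty levels yields only $O((\log n)/\kappa)\,|U_t|$, which is not $O(1)\,|U_t|$ when $n\gg m$; the $\kappa$-cancellation you invoke does not go through.

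The paper instead applies \Cref{rr-so-prop2} to bound $N^*_{t'}\le |J(t')|\le |U_{t'}|$ and then runs the dyadic decomposition of \Cref{lemma:pf-dual-obj2} on $|U_{t'}|$. Because $U_{t'}=\{j:C_j\ge t'\}$ includes unreleased jobs, $|U_{t'}|$ is genuinely non-increasing, the $M_k$ are intervals, and the per-level mass is $|U_t|/2^{k-1}$, so the series converges geometrically to $O(|U_t|)$ and one obtains $\sum_j \dualSc_{jt}\le O(1/\kappa)\,|U_t|\le O(1)\,|U_t|$; no $\log$ factor appears and no cancellation against $\kappa$ is needed. (The cancellation you have in mind occurs only in \Cref{lemma:so-rr-dual-obj2}, where the extra $\log$ comes from $\sum_i\beta_{it}=|J(t)|\sum_{i\le m_t}1/i$ via \Cref{rr-so-prop1}, not from the number of dyadic levels.) Incidentally, your own workhorse inequality already finishes the job without any dyadic step: summing $\zeta_{t'}N^*_{t'}\le 2\sum_{j\in U_{t'}}q_{jt'}/p_j$ directly over $t'\ge t$ and swapping sums gives $\le 2|U_t|$.
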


\begin{proof}
Analogous to the proof of~\Cref{lemma:pf-dual-obj2} when using~\Cref{rr-so-prop2}.
\end{proof}

Observe that \Cref{lemma:so-rr-dual-obj1}, \Cref{lemma:so-rr-dual-obj2} and \Cref{lemma:so-rr-dual-obj3} imply \Cref{lemma:rr-speed-order-dual-objective-unrelated}. It remains the proof of~\Cref{thm:unrelated-speed-order}:

\begin{proof}[Proof of~\Cref{thm:unrelated-speed-order}]
  Weak duality, \Cref{lemma:rr-speed-order-dual-objective-unrelated} and \Cref{lemma:rr-speed-order-dual-feasibility-unrelated} imply
  \begin{align*}
      \kappa \cdot \opt &\geq \opt_{\kappa}
      \geq  \sum_{j} \dualSa_j - \sum_{i,t} \dualSb_{it} - \sum_{j,t \geq r_j} \dualSc_{jt} \geq  \Omega(1) \cdot \alg.
  \end{align*}
  We conclude the proof by noting that $\kappa = \Theta(\log(\min\{m, n\}))$.
\end{proof}

\subsection{Full Analysis of Round Robin for Speed-Ordered Related Machines}\label{app:rr-speed-ordered-related}

\begin{restatable}{theorem}{thmRoundRobinSO}\label{thm:related-speed-order}
  \Cref{alg:rr-speed-order} has a competitive ratio of at most~216 for minimizing the total completion time on speed-ordered related machines.
\end{restatable}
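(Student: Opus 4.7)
I plan to reuse the dual-fitting machinery of \Cref{thm:unrelated-speed-order} based on~\eqref{pf-dual} with unit weights, changing only the definition of $\beta_{it}$ to exploit the fact that on \emph{related} machines $s_i$ is job-independent. This allows the cumulative speed $\sum_{\ell=1}^{m_t} s_\ell$ to be accounted for exactly rather than bounded by $i\cdot s_i$, which is precisely what forced $\kappa = \Theta(\log(\min\{n,m\}))$ in the unrelated setting. With this change $\kappa$ can be a constant, yielding a constant competitive ratio.

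\textbf{Duals.} For $m_t = \min\{m,\abs{J(t)}\}$, I redefine
\[
\beta_{it} = \frac{s_i\,\abs{J(t)}}{\sum_{\ell=1}^{m_t} s_\ell}\cdot\ind\left[i\le m_t\right], \qquad \gamma_{jt} = \ind\left[j\in J(t)\right].
\]
The variables $\dualSa_j,\dualSb_{it},\dualSc_{jt}$ keep the same shape as in \Cref{app:rr-speed-ordered-unrelated}, but $\zeta_t$ is now the value at index $\floor{\lambda\abs{U_t}}$ in the sorted list of $q_{jt}/p_j$ over $j\in U_t$, with $\lambda=\tfrac{2}{3}$, and $\kappa$ is set to $72$.

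\textbf{Key steps.} The first step is dual feasibility. Since the algorithm splits rate evenly over the $m_{t'}$ fastest machines and speeds are job-independent, $q_{jt'} = \abs{J(t')}^{-1}\sum_{\ell=1}^{m_{t'}} s_\ell$ for every $j\in J(t')$. Mirroring the estimate leading to~\eqref{eq:speed-ordered-rr-unrelated}, the dual constraint reduces to showing $s_i/q_{jt'}\le \beta_{it'}+\gamma_{jt'}$ for all $t\le t'\le C_j$. If $i\le m_{t'}$ the left-hand side equals $\beta_{it'}$ by construction. If $i>m_{t'}$ then necessarily $m_{t'}=\abs{J(t')}$ and the speed order yields $s_i\le s_{m_{t'}}$, hence $\sum_{\ell=1}^{m_{t'}} s_\ell\ge m_{t'}s_i = \abs{J(t')}s_i$, so the left-hand side is at most $1=\gamma_{jt'}$. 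Feasibility with $\alpha = \kappa$ follows. The second step is the total mass bound: $\sum_i\beta_{it} = \abs{J(t)}\le\abs{U_t}$ telescopes exactly, which is the decisive improvement over the unrelated case, and trivially $\sum_j\gamma_{jt} = \abs{J(t)}\le\abs{U_t}$.

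\textbf{Finishing up.} Substituting these constant mass bounds into the geometric-partition argument of \Cref{lemma:pf-dual-obj2} yields $\sum_{i,t}\dualSb_{it} + \sum_{j,t\ge r_j}\dualSc_{jt}\le \tfrac{8}{\kappa(1-\lambda)}\cdot\alg$, while the analogue of \Cref{lemma:so-rr-dual-obj1} delivers $\sum_j\dualSa_j\ge\lambda\cdot\alg$. Weak duality then gives $(\lambda-\tfrac{8}{\kappa(1-\lambda)})\alg\le\kappa\cdot\opt$, and plugging in $\lambda=\tfrac{2}{3}$ and $\kappa=72$ produces $\tfrac{1}{3}\alg\le 72\cdot\opt$, i.e., $\alg\le 216\cdot\opt$. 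The main technical obstacle is the two-case verification of dual feasibility for the new $\beta_{it}$, covering both the many-jobs regime $i\le m_{t'}$ and the few-jobs regime $i>m_{t'}=\abs{J(t')}$; once this is established, the remaining steps are verbatim adaptations of the PF/unrelated-RR machinery with a constant $\kappa$ replacing the logarithmic one.
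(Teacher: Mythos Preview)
Your proposal is correct and essentially identical to the paper's own proof: the same redefinition of $\beta_{it}$ with the exact cumulative speed $\sum_{\ell=1}^{m_t} s_\ell$ in the denominator, the same two-case feasibility check ($i\le m_{t'}$ versus $i>m_{t'}$), the same mass bounds $\sum_i\beta_{it}\le\abs{U_t}$ and $\sum_j\gamma_{jt}\le\abs{U_t}$, and the same final parameter choice $\lambda=\tfrac{2}{3}$, $\kappa=72$. There is no substantive difference.
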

  
We prove this theorem using a dual-fitting proof based on~\eqref{pf-dual}, where~$w_j = 1$ and~$s_i = s_{ij}$ for every job~$j$ and every machine~$i$. 
Fix an instance and the algorithm's schedule. For every time~$t$ we write~$m_t = \min\{m, \abs{J(t)}\}$.
We define for every machine~$i$ and any time~$t$
\[
    \beta_{it} = \frac{s_i}{\sum_{\ell=1}^{m_t} s_\ell} \cdot \abs{J(t)} \cdot \ind\left[ i \leq \abs{J(t)}  \right],
\]
  and \(
    \gamma_{jt} = \ind\left[ j \in J(t) \right]
  \) for every job~$j$ and any time~$t$.

Observe the following bounds when summing up these values:

  \begin{proposition}\label{rr-so-prop1-related}
    At any time $t$, $\sum_i \beta_{it} \leq \abs{U_t}$.
  \end{proposition}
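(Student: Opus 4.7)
The plan is to just unfold the definition of $\beta_{it}$ and observe that the weighted terms collapse. First I would note that $\beta_{it}$ is nonzero only if $i$ is a machine index with $i \le \abs{J(t)}$; combined with $i \in I$ (so $i \le m$), this restricts the sum to $i \in \{1,\dots,m_t\}$ where $m_t = \min\{m,\abs{J(t)}\}$. The denominator $\sum_{\ell=1}^{m_t} s_\ell$ does not depend on $i$, so it can be pulled out, and the numerators sum to exactly $\sum_{i=1}^{m_t} s_i$, which cancels the denominator. Together with the $\abs{J(t)}$ factor this yields
\[
  \sum_{i \in I} \beta_{it} \;=\; \abs{J(t)} \cdot \frac{\sum_{i=1}^{m_t} s_i}{\sum_{\ell=1}^{m_t} s_\ell} \;=\; \abs{J(t)}.
\]

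The final step is to upgrade $\abs{J(t)}$ to $\abs{U_t}$. Here I would appeal to the definitions given earlier in the paper: $J(t) = \{ j \in J \mid r_j \le t \le C_j\}$ is the set of \emph{alive} (released and unfinished) jobs at time $t$, whereas $U_t$ is the set of \emph{unfinished} jobs at $t$, i.e.\ all $j$ with $t \le C_j$, which may additionally contain jobs that have not been released yet. Hence $J(t) \subseteq U_t$ and in particular $\abs{J(t)} \le \abs{U_t}$, giving the claimed inequality.

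There is really no obstacle: the proposition is a direct calculation. The only place one has to be slightly careful is that the indicator $\ind[i \le \abs{J(t)}]$ in the definition of $\beta_{it}$ is tacitly used to clamp the summation range to $m_t$ (rather than all of $m$), so that the $s_i$-weighted fractions sum to one rather than to something smaller. After that the inequality is simply the observation that alive jobs form a subset of unfinished jobs.
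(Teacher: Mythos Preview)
Your proof is correct and is precisely the direct computation the paper has in mind; the paper states this proposition without proof as an immediate observation, and your unfolding of the definition followed by $J(t)\subseteq U_t$ is exactly the intended argument. (One minor quibble with your closing commentary: dropping the indicator would make the fraction sum to something \emph{larger}, not smaller, since you would include additional nonnegative $s_i$ terms in the numerator while the denominator stays fixed at $\sum_{\ell=1}^{m_t} s_\ell$; but this remark is tangential and does not affect the argument.)
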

  
  \begin{proposition}\label{rr-so-prop2-related}
    At any time $t$, $\sum_{j \in J(t)} \gamma_{jt} \leq \abs{U_t}$.
  \end{proposition}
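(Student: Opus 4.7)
The plan is to prove this proposition by a direct set-theoretic argument: unpacking the definition of $\gamma_{jt}$ and exploiting the containment $J(t) \subseteq U_t$.

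First, I would observe that for every job $j \in J(t)$, the indicator $\gamma_{jt} = \ind[j \in J(t)]$ equals $1$. Therefore the sum $\sum_{j \in J(t)} \gamma_{jt}$ simply counts the alive jobs, giving $\sum_{j \in J(t)} \gamma_{jt} = |J(t)|$. Next, I would compare $J(t)$ with $U_t$: by definition, $J(t) = \{j \in J \mid r_j \leq t \leq C_j\}$ requires both that a job has been released and that it has not yet completed, whereas $U_t = \{j \in J \mid t \leq C_j\}$ only requires the latter. Hence every alive job is unfinished, so $J(t) \subseteq U_t$ and consequently $|J(t)| \leq |U_t|$.

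Combining these two observations yields $\sum_{j \in J(t)} \gamma_{jt} = |J(t)| \leq |U_t|$, which is exactly the claim. There is no real obstacle here; the statement is essentially a bookkeeping identity tying together the definitions introduced just before the proposition, and its role in the broader analysis is to allow the $\sum_{j,t \geq r_j} \dualSc_{jt}$ term in the dual to be charged against $\alg$ via the weighted bound of \Cref{lemma:so-rr-dual-obj3} (analogous to the unrelated case).
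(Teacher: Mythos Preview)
Your proof is correct and matches the paper's treatment: the paper states this proposition as an immediate observation without an explicit proof, and your argument (unfolding $\gamma_{jt}=\ind[j\in J(t)]$ to get $\sum_{j\in J(t)}\gamma_{jt}=|J(t)|$ and then using $J(t)\subseteq U_t$) is exactly the intended one-line justification. One minor remark: in your closing comment you point to \Cref{lemma:so-rr-dual-obj3}, but in the related-machines section the relevant lemma is \Cref{lemma:so-rr-dual-obj3-related}.
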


  Let~$\kappa \geq 1$ and~$0 < \lambda < 1$ be constants. For every~$t$, consider the sorted (ascending) list~$Z^t$ composed of values~$\frac{q_{jt}}{p_j}$ for every~$j \in U_t$. 
  We define~$\zeta_t$ as the value at the index~$\floor{\lambda \abs{U_t}}$ in~$Z^t$.
  Consider the following duals:
  \begin{itemize}
    \item $\dualSa_j = \sum_{t' = 0}^{C_j} \ind \left[ \frac{q_{jt'}}{p_j} \leq \zeta_{t'} \right]$ for every job $j$,
    \item $\dualSb_{it} = \frac{1}{\kappa} \sum_{t' \geq t} \beta_{it'} \zeta_{t'}$ for every $i$ and $t$, and
    \item $\dualSc_{jt} = \frac{1}{\kappa} \sum_{t' \geq t} \gamma_{jt'} \zeta_{t'}$ for every $j$ and $t \geq r_j$.
  \end{itemize}
  
  \begin{lemma}\label{lemma:so-rr-dual-obj1-related}
    $\sum_{j} \dualSa_j \geq \lambda \cdot \alg$.
  \end{lemma}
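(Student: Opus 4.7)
The plan is to mimic the argument used in \Cref{lemma:pf-dual-obj1}: swap the order of summation and then apply the definition of $\zeta_t$ pointwise. First I would rewrite
\[
  \sum_{j \in J} \dualSa_j \;=\; \sum_{j \in J} \sum_{t = 0}^{C_j} \ind\!\left[ \tfrac{q_{jt}}{p_j} \leq \zeta_{t} \right] \;=\; \sum_{t \geq 0} \sum_{j \in U_t} \ind\!\left[ \tfrac{q_{jt}}{p_j} \leq \zeta_{t} \right],
\]
using that $\{(j,t) : 0 \leq t \leq C_j\} = \{(j,t) : j \in U_t\}$ since $U_t = \{j : t \leq C_j\}$.

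Next I would fix a time $t$ and bound the inner sum. By construction, $Z^t$ is the sorted (ascending) list of the $\abs{U_t}$ values $\{\tfrac{q_{jt}}{p_j}\}_{j \in U_t}$, and $\zeta_t$ is the entry at position $\floor{\lambda \abs{U_t}}$. Hence at least $\floor{\lambda \abs{U_t}}$ of the values $\tfrac{q_{jt}}{p_j}$ with $j \in U_t$ are at most $\zeta_t$, which gives
\[
  \sum_{j \in U_t} \ind\!\left[ \tfrac{q_{jt}}{p_j} \leq \zeta_{t} \right] \;\geq\; \floor{\lambda \abs{U_t}} \;\geq\; \lambda\, \abs{U_t},
\]
where the last inequality is handled exactly as in the analogous weighted bound in \Cref{lemma:pf-dual-obj1} (in particular, one may assume w.l.o.g.\ that $\lambda \abs{U_t}$ is integral by an arbitrarily small perturbation, as already done implicitly in the weighted case).

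Finally, summing over $t$ and using $\sum_{t \geq 0} \abs{U_t} = \sum_{j \in J} C_j = \alg$ (since all weights are $1$ in the related speed-ordered setting of this lemma) yields
\[
  \sum_{j \in J} \dualSa_j \;\geq\; \lambda \sum_{t \geq 0} \abs{U_t} \;=\; \lambda \cdot \alg,
\]
which is the desired inequality. There is no real obstacle here: the only subtle point is the floor/ceiling issue in the median-like definition of $\zeta_t$, and it is treated identically to the weighted counterpart already established in \Cref{lemma:pf-dual-obj1}.
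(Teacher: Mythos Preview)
Your proposal is correct and follows exactly the approach the paper intends: the paper's own proof of this lemma consists solely of the sentence ``Analogous to the proof of~\Cref{lemma:pf-dual-obj1},'' and what you wrote is precisely that analogous argument specialized to unit weights (swap the order of summation, use the definition of $\zeta_t$ to lower bound the inner sum by $\lambda\,\lvert U_t\rvert$, and sum over $t$). The floor issue you flag is indeed glossed over identically in the paper's proof of~\Cref{lemma:pf-dual-obj1}, so your treatment matches the paper's level of rigor.
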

  
  \begin{proof}
    Analogous to the proof of~\Cref{lemma:pf-dual-obj1}.
  \end{proof}

  \begin{lemma}\label{lemma:so-rr-dual-obj2-related}
    At any time $t$, $\sum_{i} \dualSb_{it} \leq \frac{4}{(1-\lambda)\kappa} \abs{U_t}$.
  \end{lemma}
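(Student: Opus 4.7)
The argument will follow the same template as \Cref{lemma:pf-dual-obj2}, adapted to the unit-weight related-machines setting: the bookkeeping simplifies because $W_t$ is replaced everywhere by $\abs{U_t}$, and the role of \Cref{lemma:kkt-multiplier-bound} (which bounded $\sum_i \eta_{it}$ by $W_t$) is now played by \Cref{rr-so-prop1-related}, which directly bounds $\sum_i \beta_{it'} \leq \abs{U_{t'}}$.

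\textbf{Step 1: pointwise bound on $\zeta_{t'}$.} By definition, $\zeta_{t'}$ sits at position $\lfloor \lambda\abs{U_{t'}} \rfloor$ of the sorted list $Z^{t'}$, so at least a $(1-\lambda)$-fraction of $U_{t'}$ satisfies $q_{jt'}/p_j \geq \zeta_{t'}$. Multiplying the indicator identity $(1-\lambda)\abs{U_{t'}} \leq \sum_{j \in U_{t'}} \ind[q_{jt'}/p_j \geq \zeta_{t'}]$ by $\zeta_{t'}$ and using $\zeta_{t'}\ind[q_{jt'}/p_j \geq \zeta_{t'}] \leq q_{jt'}/p_j$ yields
\[
    \zeta_{t'} \leq \frac{1}{(1-\lambda)\abs{U_{t'}}} \sum_{j \in U_{t'}} \frac{q_{jt'}}{p_j}.
\]

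\textbf{Step 2: geometric partition.} Partition $[t,\infty)$ into intervals $\{M_k\}_{k \geq 1}$ such that $\abs{U_{t'}} \in (\abs{U_t}/2^k,\, \abs{U_t}/2^{k-1}]$ for all $t' \in M_k$. On $M_k$, Step~1 gives $\zeta_{t'} \leq \tfrac{2^k}{(1-\lambda)\abs{U_t}}\sum_{j \in U_{t'}} q_{jt'}/p_j$. Summing over $t' \in M_k$ and swapping the order of summation, each job $j$ contributes at most $\sum_{t' : j \in U_{t'}} q_{jt'}/p_j \leq 1$ (since the total progress on $j$ equals $p_j$), and the number of jobs unfinished at any point in $M_k$ is at most $\abs{U_t}/2^{k-1}$. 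Therefore
\[
    \sum_{t' \in M_k} \zeta_{t'} \;\leq\; \frac{2^k}{(1-\lambda)\abs{U_t}} \cdot \frac{\abs{U_t}}{2^{k-1}} \;=\; \frac{2}{1-\lambda}.
\]

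\textbf{Step 3: assemble.} Using \Cref{rr-so-prop1-related} we have $\sum_i \beta_{it'} \leq \abs{U_{t'}} \leq \abs{U_t}/2^{k-1}$ for $t' \in M_k$, so
\[
    \sum_i \dualSb_{it} \;=\; \frac{1}{\kappa}\sum_{t' \geq t}\zeta_{t'}\sum_i \beta_{it'} \;\leq\; \frac{1}{\kappa}\sum_{k \geq 1}\frac{\abs{U_t}}{2^{k-1}}\sum_{t' \in M_k}\zeta_{t'} \;\leq\; \frac{2\abs{U_t}}{(1-\lambda)\kappa}\sum_{k \geq 1}\frac{1}{2^{k-1}} \;=\; \frac{4\abs{U_t}}{(1-\lambda)\kappa}.
\]

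The argument is essentially a direct transcription of \Cref{lemma:pf-dual-obj2}, so no genuinely new difficulty arises. The only place one must be slightly careful is in Step~2, where the counting $\sum_{t' \in M_k}\sum_{j \in U_{t'}} q_{jt'}/p_j \leq \abs{U_t}/2^{k-1}$ needs the observation that every job contributing to the inner sum on $M_k$ was already alive at the start of $M_k$, whose alive-job count is bounded by the defining upper endpoint of the dyadic band.
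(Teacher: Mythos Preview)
Your proposal is correct and follows precisely the approach the paper indicates: it reproduces the argument of \Cref{lemma:pf-dual-obj2} with $W_t$ replaced by $\abs{U_t}$ and with \Cref{rr-so-prop1-related} playing the role of \Cref{lemma:kkt-multiplier-bound}. The only subtlety you flag in Step~2 is exactly the right one, and it is handled because $U_{t'}$ (the set of all jobs with $C_j \geq t'$, including unreleased ones) is monotonically non-increasing in $t'$.
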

   
  \begin{proof}
    Analogous to the proof of~\Cref{lemma:pf-dual-obj2} when using~\Cref{rr-so-prop1-related}.
  \end{proof}

  \begin{lemma}\label{lemma:so-rr-dual-obj3-related}
    At any time $t$, $\sum_{j \in J: r_j \geq t} \dualSc_{jt} \leq \frac{4}{(1-\lambda)\kappa} \abs{U_t}$.
  \end{lemma}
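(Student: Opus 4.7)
The statement mirrors \Cref{lemma:so-rr-dual-obj2-related} (with the roles of $\beta$ and $\gamma$ swapped) and is the unit-weight, related-machines counterpart of the $\sum_j \dualSc_{jt}$ bound inside the proof of \Cref{lemma:pf-dual-obj2}. I read the condition ``$r_j \geq t$'' as the standard domain ``$t \geq r_j$'' under which $\dualSc_{jt}$ is defined (as in \Cref{lemma:pf-dual-obj2}); indeed $\dualSc_{jt}$ is only introduced for jobs already released by time $t$. The whole proof is by direct analogy to \Cref{lemma:pf-dual-obj2} after replacing weights by unit weights and $W_t$ by $|U_t|$.

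First, unfold the definition and swap the order of summation:
\[
\sum_{j \in J:\, t \geq r_j} \dualSc_{jt} \;=\; \frac{1}{\kappa} \sum_{t' \geq t} \zeta_{t'} \sum_{j:\, r_j \leq t} \gamma_{jt'}.
\]
Since $\gamma_{jt'} = \ind[j \in J(t')]$ and $J(t') \subseteq U_{t'}$, and since $r_j \leq t \leq t'$ whenever the inner indicator contributes, \Cref{rr-so-prop2-related} yields $\sum_{j:\, r_j \leq t} \gamma_{jt'} \leq |U_{t'}|$. Hence it suffices to bound $\frac{1}{\kappa} \sum_{t' \geq t} \zeta_{t'}\,|U_{t'}|$.

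Second, reuse the quantile/partition argument from \Cref{lemma:pf-dual-obj2} in the unit-weight setting. By the definition of $\zeta_{t'}$ as the entry at position $\floor{\lambda |U_{t'}|}$ of the sorted list $Z^{t'}$, at least $(1-\lambda)|U_{t'}|$ jobs $j \in U_{t'}$ satisfy $q_{jt'}/p_j \geq \zeta_{t'}$, so
\[
\zeta_{t'} (1-\lambda) |U_{t'}| \;\leq\; \sum_{j \in U_{t'}} \frac{q_{jt'}}{p_j} \cdot \ind\!\left[\frac{q_{jt'}}{p_j} \geq \zeta_{t'}\right] \;\leq\; \sum_{j \in U_{t'}} \frac{q_{jt'}}{p_j}.
\]
Partition $[t,\infty)$ into blocks $M_k$ on which $|U_{t'}| \in (|U_t|/2^k,\; |U_t|/2^{k-1}]$. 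Dividing the above by $(1-\lambda)|U_{t'}|$, summing over $t' \in M_k$, and using the upper bound $|U_{t'}| \leq |U_t|/2^{k-1}$ against the denominator but the lower bound $|U_{t'}| > |U_t|/2^k$ against the length of $M_k$ gives $\sum_{t' \in M_k} \zeta_{t'} \leq \frac{2}{1-\lambda}$, exactly as in \Cref{lemma:pf-dual-obj2}.

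Finally, assemble:
\[
\sum_{j \in J:\, t \geq r_j} \dualSc_{jt} \;\leq\; \frac{1}{\kappa} \sum_{k \geq 1} \frac{|U_t|}{2^{k-1}} \sum_{t' \in M_k} \zeta_{t'} \;\leq\; \frac{1}{\kappa}\cdot \frac{2}{1-\lambda} \cdot |U_t| \sum_{k \geq 1} \frac{1}{2^{k-1}} \;\leq\; \frac{4}{(1-\lambda)\kappa}\,|U_t|.
\]

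\textbf{Main obstacle.} There is no real difficulty; the only thing to be careful about is checking that the unit-weight analogue of the quantile inequality behind the $\sum_{t' \in M_k} \zeta_{t'} \leq \frac{2}{1-\lambda}$ bound still goes through when $\gamma_{jt'}$ is an indicator rather than a weight, which it does because counting unfinished jobs is precisely the unit-weight version of summing $w_j$. The lemma then follows from a verbatim copy of the template used in \Cref{lemma:pf-dual-obj2} and \Cref{lemma:so-rr-dual-obj2-related}.
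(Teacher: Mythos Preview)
Your proof is correct and is exactly the argument the paper intends: its own proof reads ``Analogous to the proof of \Cref{lemma:pf-dual-obj2} when using \Cref{rr-so-prop2-related},'' and you have spelled out precisely that analogy, including the correct reading of the index condition as $t \geq r_j$. The only cosmetic slip is that in your sketch of the $M_k$ step you swapped which of the two bounds on $|U_{t'}|$ goes against the denominator versus the total-progress sum, but since you defer to \Cref{lemma:pf-dual-obj2} for the actual computation this does not affect the argument.
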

  
  \begin{proof}
  Analogous to the proof of~\Cref{lemma:pf-dual-obj2} when using~\Cref{rr-so-prop2-related}.
  \end{proof}

\Cref{lemma:so-rr-dual-obj1-related,lemma:so-rr-dual-obj2-related,lemma:so-rr-dual-obj3-related} conclude the following bound between $\alg$ and the objective value of the crafted duals.

  \begin{lemma}\label{lemma:rr-speed-order-dual-objective-related}
    $(\lambda - \frac{8}{(1-\lambda)\kappa})  \alg \leq \sum_{j} \dualSa_j - \sum_{i,t} \dualSb_{it} - \sum_{j,t \geq r_j} \dualSc_{jt}$
  \end{lemma}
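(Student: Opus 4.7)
The plan is to derive the stated inequality by aggregating the three preceding lemmas (\Cref{lemma:so-rr-dual-obj1-related}, \Cref{lemma:so-rr-dual-obj2-related}, and \Cref{lemma:so-rr-dual-obj3-related}) over all times and invoking the standard identity that, for unit weights, the total completion time equals the time-integrated number of unfinished jobs. Concretely, since $\sum_{j} C_j = \sum_{t \geq 0} \abs{U_t}$, we have $\alg = \sum_t \abs{U_t}$ in this setting.

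First, I would apply \Cref{lemma:so-rr-dual-obj1-related} directly to obtain $\sum_j \dualSa_j \geq \lambda \cdot \alg$. Next, summing the per-time bound of \Cref{lemma:so-rr-dual-obj2-related} across all $t \geq 0$ gives
\[
  \sum_{i,t} \dualSb_{it} \;=\; \sum_{t \geq 0} \sum_i \dualSb_{it} \;\leq\; \frac{4}{(1-\lambda)\kappa} \sum_{t \geq 0} \abs{U_t} \;=\; \frac{4}{(1-\lambda)\kappa} \cdot \alg.
\]
Analogously, summing \Cref{lemma:so-rr-dual-obj3-related} yields $\sum_{j, t \geq r_j} \dualSc_{jt} \leq \frac{4}{(1-\lambda)\kappa} \cdot \alg$, after one observes that reordering the double sum $\sum_{j} \sum_{t \geq r_j} \dualSc_{jt}$ as $\sum_t \sum_{j : r_j \leq t} \dualSc_{jt}$ matches the per-time quantity bounded in that lemma. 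Combining the three bounds then gives
\[
  \sum_j \dualSa_j - \sum_{i,t} \dualSb_{it} - \sum_{j, t \geq r_j} \dualSc_{jt} \;\geq\; \lambda \alg - \frac{4}{(1-\lambda)\kappa} \alg - \frac{4}{(1-\lambda)\kappa} \alg \;=\; \left(\lambda - \frac{8}{(1-\lambda)\kappa}\right) \alg,
\]
which is the claim.

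There is essentially no technical obstacle here: all the real work is done by the three previous lemmas, which themselves mirror the arguments of \Cref{lemma:pf-dual-obj1} and \Cref{lemma:pf-dual-obj2}. The only point that requires mild care is the indexing in the third term, ensuring that a sum over jobs $j$ and times $t \geq r_j$ is rewritten as a sum over $t$ of jobs already released by time $t$, so that \Cref{lemma:so-rr-dual-obj3-related} is applicable time-by-time. After that the inequality drops out by a single line of arithmetic. This lemma, together with \Cref{lemma:rr-speed-order-dual-feasibility-unrelated} adapted to the related case and weak LP duality, then yields the $216$-competitiveness of \Cref{thm:related-speed-order} by choosing $\lambda = 2/3$ and $\kappa = 72$.
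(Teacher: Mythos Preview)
Your proposal is correct and matches the paper's approach exactly: the paper simply states that \Cref{lemma:so-rr-dual-obj1-related,lemma:so-rr-dual-obj2-related,lemma:so-rr-dual-obj3-related} together yield the bound, and your write-up spells out precisely that aggregation (including the needed reindexing of the double sum for the $\dualSc$-term and the identity $\alg = \sum_t \abs{U_t}$).
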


We finally prove that the crafted duals are feasible under certain conditions.  

  \begin{lemma}\label{lemma:rr-speed-order-dual-feasibility-related}
    Assigning $\dualVa_j = \dualSa_j$, $\dualVb_{it} = \dualSb_{it}$ and $\dualVc_{jt} = \dualSc_{jt}$ is feasible for~\eqref{pf-dual} if $\alpha = \kappa$ and $s_i = s_{ij}$ for all machines $i$ and jobs $j$.
  \end{lemma}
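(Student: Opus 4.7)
The plan is to mimic the dual-feasibility proof for the unrelated case (\Cref{lemma:rr-speed-order-dual-feasibility-unrelated}), but exploit the relatedness $s_i = s_{ij}$ to simplify and to absorb the multiplicative factor hidden in $\mu$ there. First I would rewrite the LHS of the dual constraint by starting from the definition of $\dualSa_j$ and the fact that the algorithm's rates give progress $q_{jt'} = \sum_{\ell=1}^{m_{t'}} s_\ell / |J(t')|$ on related speed-ordered machines. Concretely, as in the unrelated proof, I would argue
\begin{equation*}
\frac{\dualSa_j\, s_i}{p_j} - \frac{s_i\, t}{p_j}
\;\leq\; \sum_{t'=t}^{C_j} \frac{s_i}{q_{jt'}} \cdot \frac{q_{jt'}}{p_j}\, \ind\!\left[\tfrac{q_{jt'}}{p_j} \leq \zeta_{t'}\right]
\;\leq\; \sum_{t'=t}^{C_j} \frac{s_i\,|J(t')|}{\sum_{\ell=1}^{m_{t'}} s_\ell}\,\zeta_{t'}.
\end{equation*}

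Next, for each $t'$ with $t \leq t' \leq C_j$, I would do the same two-case analysis. If $i \leq |J(t')|$, then the speed order gives no extra inequality and one has exactly $\frac{s_i |J(t')|}{\sum_{\ell=1}^{m_{t'}} s_\ell} = \beta_{it'}$. If $i > |J(t')|$, then $\beta_{it'} = 0$ by definition and the speed order implies $s_\ell \geq s_i$ for all $\ell \leq |J(t')| = m_{t'}$, so $\sum_{\ell=1}^{m_{t'}} s_\ell \geq |J(t')|\, s_i$, giving $\frac{s_i |J(t')|}{\sum_{\ell=1}^{m_{t'}} s_\ell} \leq 1 = \gamma_{jt'}$, where the last equality uses $t' \leq C_j$ and $t' \geq t \geq r_j$ so $j \in J(t')$. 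Either way, the summand is at most $(\beta_{it'} + \gamma_{jt'})\zeta_{t'}$.

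Summing the resulting bound over $t \leq t' \leq C_j$ and using the definitions of $\dualSb_{it}$ and $\dualSc_{jt}$ yields $\sum_{t'=t}^{C_j} (\beta_{it'} + \gamma_{jt'})\zeta_{t'} \leq \kappa(\dualSb_{it} + \dualSc_{jt})$. With $\alpha = \kappa$, this is exactly the dual constraint in \eqref{pf-dual}, after noting $w_j = 1$. Non-negativity of the duals is immediate since $\zeta_{t'}, \beta_{it'}, \gamma_{jt'} \geq 0$.

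I don't expect any real obstacle here—the argument is essentially a specialization of the unrelated-machines proof where the $\mu$-factor collapses to $1$ because $s_{ij} = s_i$. The only subtle step is justifying $\gamma_{jt'} = 1$ in the second case; this needs the observation that $j$ is still alive at $t'$, which follows from $t' \leq C_j$ and $r_j \leq t \leq t'$.
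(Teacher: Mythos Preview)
Your proposal is correct and follows essentially the same route as the paper's proof: the same rewriting of $\frac{\dualSa_j s_i}{p_j} - \frac{s_i t}{p_j}$ via the progress formula $q_{jt'} = \sum_{\ell=1}^{m_{t'}} s_\ell / |J(t')|$, the same two-case split on $i \leq |J(t')|$ versus $i > |J(t')|$, and the same conclusion via the definitions of $\dualSb_{it}$ and $\dualSc_{jt}$. Your remark that $\gamma_{jt'} = 1$ needs $j \in J(t')$ is exactly the justification the paper gives (``because $t' \leq C_j$'').
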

  
  \begin{proof}
    First observe that the dual assignment is non-negative. Let~$i \in I, j \in J$ and~$t \geq r_j$. Since the rates of \Cref{alg:rr-speed-order} imply \( q_{jt} = \sum_{\ell = 1}^{m_t} \frac{s_\ell}{\abs{J(t)}} \), we have
    \begin{align}
      \frac{\dualSa_j s_i}{p_j} - \frac{ s_i \cdot t}{p_j} \leq \sum_{t' = t}^{C_j} \frac{ s_i}{p_j} \cdot \ind\left[ \frac{q_{jt'}}{p_j} \leq \zeta_{t'} \right] = \sum_{t' = t}^{C_j} \frac{ s_i}{q_{jt'}} \cdot \frac{q_{jt'}}{p_j} \cdot \ind\left[ \frac{q_{jt'}}{p_j} \leq \zeta_{t'} \right] \leq \sum_{t' = t}^{C_j} \frac{ s_i}{\sum_{\ell = 1}^{m_{t'}} \frac{s_\ell }{\abs{J(t')}}} \cdot \zeta_{t'} \label{eq:speed-ordered-rr}
\end{align}
  
    Consider any time~$t'$ with~$t \leq t' \leq C_j$. If~$i \leq \abs{J(t')}$, the definition of~$\beta_{it'}$ yields
    \[
      \frac{s_i}{\sum_{\ell = 1}^{m_{t'}} s_\ell} \cdot \abs{J(t')} \cdot \zeta_{t'} = \beta_{it'} \cdot \zeta_{t'}.
    \]
  
    Otherwise,~$i > \abs{J(t')}$, the fact that~$s_1 \geq \ldots \geq s_m$ implies~$\sum_{\ell = 1}^{m_{t'}} s_\ell \geq \sum_{\ell = 1}^{\abs{J(t')}} s_\ell \geq \abs{J(t')} \cdot s_i$, and thus 
    \[
      \frac{s_i}{\sum_{\ell = 1}^{m_{t'}} s_\ell} \cdot \abs{J(t')} \cdot \zeta_{t'} \leq \frac{\abs{J(t')}}{\abs{J(t')}} \cdot \zeta_{t'} =  \gamma_{jt'} \cdot \zeta_{t'},
    \]
    because~$t' \leq C_j$. 
    Put together, \eqref{eq:speed-ordered-rr} is at most
    \begin{align*}
\sum_{t' = t}^{C_j} \beta_{it'}\zeta_{t'} + \sum_{t' = t}^{C_j} \gamma_{jt'}\zeta_{t'} \leq \kappa (\dualSb_{it} + \dualSc_{jt}),
  \end{align*}
  which verifies the dual constraint.
  \end{proof}
  
  \begin{proof}[Proof of~\Cref{thm:related-speed-order}]
    Weak duality, \Cref{lemma:rr-speed-order-dual-objective-related} and \Cref{lemma:rr-speed-order-dual-feasibility-related} imply
    \begin{align*}
        \kappa \cdot \opt \geq \opt_{\kappa}
        \geq  \sum_{j} \dualSa_j - \sum_{i,t} \dualSb_{it} - \sum_{j,t \geq r_j} \dualSc_{jt} \geq \left(\lambda -  \frac{8}{(1-\lambda)\kappa}\right) \cdot \alg.
    \end{align*}
    Setting $\kappa = 72$ and $\lambda = \frac{2}{3}$ concludes $\alg \leq 216 \cdot \opt$.
\end{proof}

\section{Further Details on Experimental Results}
\label{sec:experiment_appendix}

\subsection{Implementation Details}

We implemented the schedulers as separate applications running in \emph{userspace}, scheduling jobs via Linux \emph{affinity masks}, which indicate for each process on which core it may be executed.
The schedulers compute a schedule every 2\,s based on the currently active jobs and their (predicted) characteristics.
The schedulers are provided with the process IDs of the tasks in the workload, potentially along with predictions, and only manage these processes via affinity masks.
Other processes may run on any core, but their load is negligible.

We use \emph{native} input set for the \parsec jobs.
For the \splash jobs, we use both the \emph{large} input set and custom input sizes to study the impact of different input data (see~\cref{fig:characterization}).
The \polybench jobs use their standard hard-coded inputs.
We discard jobs that execute for less than 30\,s on a \bigc core to reduce measurement noise, and discard jobs that use more than 512\,MB RAM because the \hikey board has only 6\,GB RAM and Android does not support swap.
Overall, this results in 43 combinations of jobs and input data, i.e., some jobs are repeated in the workloads.

\subsection{Simulation Experiments}

The experiments on real hardware are slow, hence we can only study a limited set of workload scenarios.
We therefore additionally test many different scenarios in synthetic experiments in simulation.
These synthetic experiments also model an 8-core processor.
We create 20~random workloads with 100~synthetic jobs, whose arrival times are drawn from a Poisson distribution and with random characteristics: 4 \littlec cores with speed 1, 4 \bigc cores with job-dependent speeds from~$\mathcal{U}(2,6)$, and~$p_j \sim \mathcal{U}(60,600)$.
Speed predictions are same as in the hardware experiments, i.e.,~$\hs_{ij}=s_{ij} \cdot y_{ij}$.

\begin{figure}
	\centering
    \includegraphics{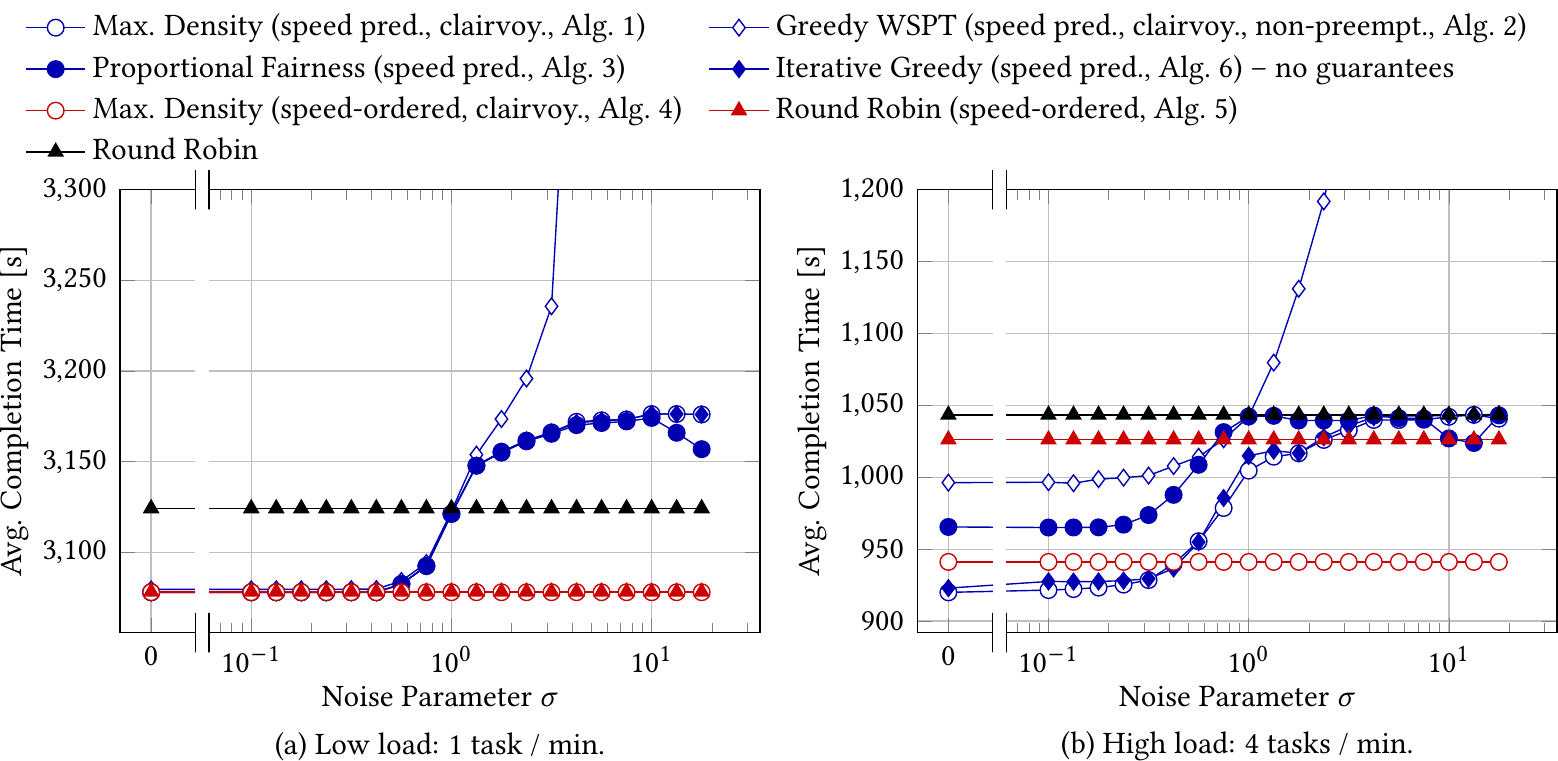}
	\caption{Synthetic experiments. The experiments are each repeated 20 times with different random workloads.}
	\label{fig:results_synth}
\end{figure}

\cref{fig:results_synth} shows the results of the synthetic experiments, including the fractional schedulers Greedy WSPT and PF.
Unlike the real experiments, we are not restricted to a single workload and instead run 20 different workloads and plot the average results.
Inaccurate speed predictions in Greedy WSPT result in large idle times, greatly deteriorating the performance.
PF performs similar to or worse than Maximum Density, depending on the system load.
The other algorithms perform similar to the experiments on the real platform, confirming that the results are not depending on a specific workload.

\end{document}